  \newcounter{mycounter}
\DeclareFontFamily{U}{bbold}{}\DeclareFontShape{U}{bbold}{m}{n}{%
  <4.25>bbold5<5>bbold5<6>bbold6<7>bbold7<8>bbold8<8.5>bbold9<9.5>bbold10<10>bbold10<12>bbold12}{}
\renewcommand*\showkeyslabelformat[1]{%
\@ifundefined{hideNextShowKeysLabel}{%
\noexpandarg%
\StrSubstitute{#1}{ }{\textvisiblespace}[\TEMP]%
\parbox[t]{\marginparwidth}{\raggedright\normalfont\small\ttfamily\(\{\){\color{red!50!black}\expandafter\seqsplit\expandafter{\TEMP}}\(\}\)}%
}{}
}
\theoremstyle{plain}
\newtheorem*{theorem*}{Theorem}
\newtheorem*{lemma*}{Lemma}
\newtheorem*{nomeilthm}{Nominal Eilenberg Theorem}
\newtheorem*{nomeilschuetz}{Nominal Eilenberg-Schützenberger Theorem}
\theoremstyle{definition}
\newaliascnt{rem}{theorem}
\newtheorem{rem}[rem]{Remark}
\newaliascnt{defn}{theorem}
\newtheorem{defn}[defn]{Definition}
\theoremstyle{remark}
\setlist[enumerate,1]{label=(\arabic*),font=\normalfont,align=left,leftmargin=0pt,labelindent=0pt,listparindent=\parindent,labelwidth=0pt,itemindent=!,topsep=0pt,parsep=0pt,itemsep=0pt,start=1}
\setlist[enumerate,2]{label=(\alph*),font=\normalfont,labelindent=*,topsep=0pt,leftmargin=*,start=1}
\setlist[itemize]{labelindent=*,leftmargin=*,topsep=5pt,itemsep=3pt}
\setlist[description]{labelindent=*,leftmargin=*,itemindent=-1 em}
\newcommand{\pvar}{\mathscr{V}}
\newcommand{\lvar}{\mathcal{V}}
\def\A{\mathscr{A}}
\def\V{\VCat}
\def\epsilon{\varepsilon}
\renewcommand{\rho}{\varrho}
\def\ol{\overline}
\def\o{\cdot}
\newcommand{\of}{\mathsf{of}}
\newcommand{\ofs}{\mathsf{of,s}}
\newcommand{\set}[1]{\{#1\}}
\newcommand{\takeout}[1]{\empty}
\newcommand{\supp}{\mathop{\mathsf{supp}}}
\newcommand{\At}{\mathbb{A}}
\newcommand{\Perm}{\mathrm{Perm}}
\newcommand{\Nom}{\mathbf{Nom}}
\newcommand{\Nomfs}{\mathbf{Nom}_\mathsf{fs}}
\newcommand{\nCABA}{\mathbf{nCABA}}
\newcommand{\nCABAfs}{\mathbf{nCABA}_\mathsf{fs}}
\newcommand{\NomMon}{\mathbf{nMon}}
\newcommand{\SuppSet}{\mathbf{SuppSet}}
\renewcommand{\phi}{\varphi}
\newcommand{\xra}{\xrightarrow}
\newcommand{\Lra}{\Leftrightarrow}
\newcommand{\seq}{\subseteq}
\newcommand{\T}{\mathscr{T}}
\newcommand\epidownarrow{\mathrel{\rotatebox[origin=c]{90}{$\twoheadleftarrow$}}}
\newcommand{\Set}{\mathbf{Set}}
\newcommand{\E}{\mathcal{E}}
\newcommand{\id}{\mathit{id}}
\newcommand{\VCat}{\mathscr{V}}
\newcommand{\X}{{\mathscr{X}}}
\newcommand{\under}[1]{{|#1|}}
\newcommand{\epito}{\twoheadrightarrow}
\newcommand{\monoto}{\rightarrowtail}
\newcommand{\M}{\mathcal{M}}
\newcommand{\Pow}{\mathcal{P}}
\newcommand{\op}{\mathsf{op}}
\newcommand{\Nat}{\mathds{N}}
\newcommand{\one}{\mathds{1}}
\mathchardef\ordinarycolon\mathcode`\:
\mathchardef\hyph="2D
\newcommand{\dash}{\mathord{-}}
\def\o{\cdot}
\title{Varieties of Data Languages}
\titlerunning{Varieties of Data Languages}
\author{Henning Urbat}{Friedrich-Alexander-Universität Erlangen-Nürnberg, Germany}{henning.urbat@fau.de}{}{}
\author{Stefan Milius}{Friedrich-Alexander-Universität Erlangen-Nürnberg, Germany}{mail@stefan-milius.eu}{}{}
\authorrunning{H.~Urbat and S.~Milius}
\keywords{Nominal sets, Stone duality, Algebraic language theory, Data languages}
\begin{document}
\maketitle
\begin{abstract}
  We establish an Eilenberg-type correspondence for data languages,
  i.e.~languages over an infinite alphabet. More precisely, we prove
  that there is a bijective correspondence between varieties of
  languages recognized by orbit-finite nominal monoids and
  pseudovarieties of such monoids. This is the first result of this
  kind for data languages.  Our approach makes use of nominal Stone
  duality and a recent category theoretic generalization of
  Birkhoff-type theorems that we instantiate here for the category
  of nominal sets. In addition, we prove an axiomatic characterization
  of weak pseudovarieties as those classes of orbit-finite monoids
  that can be specified by sequences of nominal equations, which
  provides a nominal version of a classical theorem of Eilenberg
  and Schützenberger.
\end{abstract}

\section{Introduction}

In the algebraic theory of formal languages, one studies
automata and the languages they represent in terms of associated
algebraic structures. This approach has been successfully implemented
for numerous types of languages and has proven extremely fruitful because
it allows to import powerful algebraic methods into the realm of
automata theory. As a prime example, regular languages can be
described purely algebraically as the languages recognized by finite
monoids, and a celebrated result by McNaughton, Papert, and
Schützenberger~\cite{mp71,sch65} asserts that a regular language is definable in
first-order logic if and only if its syntactic monoid is aperiodic
(i.e.~it satisfies the equation $x^{n+1}=x^n$ for sufficiently large
$n$). As an immediate application, this algebraic characterization
yields an effective procedure for deciding first-order
definability. The first systematic approach to correspondence results
of this kind was initiated by Eilenberg~\cite{eilenberg76} who proved
that \emph{varieties of languages} (i.e.~classes of regular languages
closed under the set-theoretic boolean operations, derivatives, and
homomorphic preimages) correspond bijectively to \emph{pseudovarieties
  of monoids} (i.e.~classes of finite monoids closed under quotients monoids,
submonoids, and finite products). Eilenberg's result thus establishes
a {generic} relation between properties of regular languages and
properties of finite monoids. In addition, Eilenberg and Schützenberger~\cite{es76}
contributed a model-theoretic description of pseudovarieties:
they are those classes of finite monoids that can be axiomatized by a
sequence $(s_n=t_n)_{n \in \Nat}$ of equations,
interpreted as ``$s_n=t_n$ holds for sufficiently large $n$''. For
instance, the pseudovariety of aperiodic finite monoids is axiomatized
by $(x^{n+1}=x^n)_{n \in \Nat}$.

The goal of our present paper is to study \emph{data languages},
i.e.~languages over an infinite alphabet, from the perspective of
algebraic language theory. Such languages have spurred significant
interest in recent years, driven by practical applications in various
areas of computer science, including efficient parsing of XML
documents or software verification. Mathematically, data languages are
modeled using \emph{nominal sets}. Over the years, several machine
models for handling data languages of different expressive power have
been proposed; see~\cite{seg06,sch07} for a comprehensive survey. The
focus of this paper is on languages recognized by \emph{orbit-finite nominal
monoids}. They form an important subclass of the languages accepted
by Francez and Kaminski's \emph{finite memory automata}~\cite{fk94}
(which are expressively equivalent to orbit-finite automata in the
category of nominal sets~\cite{bkl14}) and have been characterized by
a fragment of monadic second-order logic over data words called
\emph{rigidly guarded MSO}~\cite{cpl15}. In addition, Boja\'nczyk
\cite{boj2013} and Colcombet, Ley, and Puppis~\cite{cpl15} established
nominal versions of the McNaughton-Papert-Schützenberger theorem and
showed that the first-order definable data languages are precisely the ones
recognizable by aperiodic orbit-finite monoids.

In the light of these results, it is natural to ask whether a generic variety
theory akin to Eilenberg's seminal work can be developed for data
languages. As the main contribution of our paper, we answer this
positively by establishing nominal generalizations of two key results
known from the algebraic theory of regular languages. The first one is
a counterpart of Eilenberg's variety theorem, which is the
first result of this kind for data languages:
\vspace*{-5pt}
\begin{nomeilthm}
  Varieties of data languages correspond bijectively to
  pseudovarieties of nominal monoids.
\end{nomeilthm}
\vspace*{-5pt}

\noindent
Here, the notion of a \emph{pseudovariety of nominal monoids} is as
expected: a class of orbit-finite nominal monoids closed under
quotient monoids, submonoids, and finite products. In contrast, the notion of
a \emph{variety of data languages} requires two extra conditions
unfamiliar from other Eilenberg-type correspondences,
most notably a technical condition called \emph{completeness}
(\autoref{def:varlang}). Like the original Eilenberg theorem, its nominal version gives rise to a generic relation between properties of
data languages and properties of nominal monoids. For instance,
the aperiodic orbit-finite monoids form a pseudovariety, and the first-order
definable data languages form a variety, and thus the equivalence of
these concepts can be understood as an instance of the nominal
Eilenberg correspondence.

On a conceptual level, our results crucially make
use of \emph{duality}, specifically an extension of Petri\c{s}an's~\cite{petrisan12} nominal version of Stone duality which gives a dual equivalence between nominal sets and nominal complete atomic boolean algebras. To derive the nominal Eilenberg correspondence, we make two
key observations. First, we show that varieties of data languages dualize
(under nominal Stone duality) to the concept of an \emph{equational
  theory} in the category of nominal sets. Second, we apply a recent
categorical generalization of Birkhoff-type variety theorems~\cite{mu19} to
show that equational theories correspond to pseudovarieties of nominal
monoids. Our approach is summarized by the diagram below:

\vspace*{-15pt}
\[
  \xymatrix@C+7em{
    *+[F]{\txt{Varieties of\\ data languages}}
    \ar@{-}@/^2.5em/[rr]^-{\text{Nominal Eilenberg Theorem}}_-\cong
    \ar@{-}[r]^-{\text{Nominal Stone duality}}_-\cong
    &
    *+[F]{\txt{Equational\\theories}}
    \ar@{-}[r]^-{\text{Nom. Birkhoff Theorem}}_-\cong
    &
    *+[F]{\txt{Pseudovarieties of \\ nominal monoids}}
  }
\]
The idea that Stone-type dualities
play a major role in algebraic language theory was firmly
established by Gehrke, Grigorieff, and Pin~\cite{ggp08}. It is also at the
heart of our recent line of work~\cite{ammu14,ammu15,uacm17,ammu19}, which
culminated in a uniform category theoretic proof of more than a dozen
 Eilenberg correspondences for various types of languages. A related, yet more abstract, approach was pursued by Salam\'anca \cite{s16}. The key insight of \cite{uacm17,s16} is that Eilenberg-type correspondences arise by combining a Birkhoff-type correspondence with a Stone-type duality. Our present approach to data languages is an implementation of this principle in the nominal setting. Since the existing categorical frameworks for algebraic language theory consider algebraic-like base categories (which excludes nominal sets) and the recognition of languages by finite structures, our Nominal Eilenberg Theorem is not covered by any previous categorical work and requires new techniques. However, our approach can be seen as an indication of the robustness of the duality-based methodology for algebraic recognition.

As our second main contribution, we complement the Nominal Eilenberg Theorem with a model-theoretic
description of pseudovarieties of nominal monoids in terms of sequences of \emph{nominal
  equations}, generalizing the classical result of Eilenberg and
Schützenberger for ordinary monoids. Our result applies more
generally to the class of \emph{weak pseudovarieties} of nominal
monoids, which are only required to be closed under support-reflecting
(rather than arbitrary) quotients. We then obtain the
\vspace*{-5pt}
\begin{nomeilschuetz}
  Weak pseudovarieties are exactly the classes of nominal monoids
  axiomatizable by sequences of nominal equations.
\end{nomeilschuetz}
\vspace*{-3pt}

While our main results apply to languages recognizable by
orbit-finite monoids, the underlying methods are of fairly general
nature and can be extended to other recognizing structures in the
category of nominal sets. We illustrate this in
\autoref{sec:regdata} by deriving a (local) Eilenberg correspondence
for languages accepted by deterministic nominal automata.


\section{Nominal Sets}\label{S:nom}
\label{S:prelim}
We start by recalling basic definitions and facts from the theory of
nominal sets~\cite{pitts2013}. Some of the concepts considered in this paper are most clearly and conveniently formulated in the
language of category theory, but only very basic knowledge of
category theory is required from the reader. Fix a countably infinite set $\At$ of \emph{atoms}, and
denote by $\Perm(\At)$ the group of finite permutations of $\At$
(i.e.~bijections $\pi\colon \At\to\At$ that move only finitely many
elements of $\At$). A \emph{$\Perm(\At)$-set} is a set $X$ with an operation $\Perm(\At)\times X\to X$, denoted as
$(\pi,x)\mapsto \pi\o x$, such that
$(\sigma \pi)\o x = \sigma\o (\pi \o x)$ and $\id \o x$ for all
$\sigma,\pi\in\Perm(\At)$ and $x\in X$. If the group action is trivial, i.e.~$\pi\o x= x$ for all $\pi\in \Perm(\At)$ and $x\in X$, we call $X$ \emph{discrete}. For any set $S\seq \At$ of atoms, denote by
$\Perm_S(\At)\seq \Perm(\At)$ the subgroup of all finite permutations
$\pi$ that fix $S$, i.e.~$\pi(a)=a$ for all $a\in S$. The set
$S$ is called a \emph{support} of an element $x\in X$ if for every
 $\pi\in \Perm_S(\At)$ one has $\pi\o x = x$. The intuition is that $x$ is some kind of syntactic object (e.g.~a string, a tree, a term, or a program) whose free variables are contained in $S$. Thus, a variable renaming $\pi$ that leaves $S$ fixed does not affect $x$. A
\emph{nominal set} is a $\Perm(\At)$-set $X$ such that every element
of $X$ has a finite support. This implies that every element $x\in X$ has
a least support, denoted by $\supp_X(x)\seq \At$. A nominal set $X$ is \emph{strong} if, for every $x\in X$ and $\pi\in\Perm(\At)$, one has
$\pi\o x = x$ if and only if $\pi(a)=a$ for all $a\in \mathsf{supp}_X(x)$. The \emph{orbit} of an element $x$ of a nominal set $X$
is the set $\{\pi\o x\;:\; \pi\in \Perm(\At)\}$. The orbits form a partition of
$X$. If $X$ has only finitely many orbits, then $X$ is called
\emph{orbit-finite}. More generally, for any finite set $S\seq \At$ of atoms, the \emph{$S$-orbit} of an element $x\in X$ is the set $\{\pi\o x \;:\; \pi\in \Perm_S(\At)\}$, and the $S$-orbits form a partition of $X$.

\begin{lemma}\label{lem:s_orbit_finite}
  Let $S$ be a finite subset of $\At$. Then every orbit-finite nominal
  set has only finitely many $S$-orbits.
\end{lemma}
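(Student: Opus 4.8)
The plan is to reduce the statement to the case of a single orbit and then bound the number of $S$-orbits inside it by a finite combinatorial invariant. Since $\Perm_S(\At) \seq \Perm(\At)$, every $S$-orbit is contained in a single $\Perm(\At)$-orbit, and the $\Perm(\At)$-orbits partition $X$ into finitely many pieces by orbit-finiteness. Hence it suffices to fix one orbit $O \seq X$ and show that it splits into only finitely many $S$-orbits.

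To this end I would pick a representative $x_0 \in O$ with least support $\supp_X(x_0) = \{a_1,\dots,a_n\}$ and, using that every $y \in O$ can be written as $y = \pi \cdot x_0$ for some $\pi \in \Perm(\At)$, assign to each $y$ (choosing one such $\pi$) the tuple $t(y) = (\pi(a_1),\dots,\pi(a_n))$ of pairwise distinct atoms, which enumerates $\supp_X(y)$ since $\supp_X(\pi\cdot x) = \{\pi(a): a \in \supp_X(x)\}$. I then define the \emph{$S$-pattern} of a tuple $(b_1,\dots,b_n)$ of distinct atoms to be the partial map $\{1,\dots,n\} \rightharpoonup S$ sending $i \mapsto b_i$ exactly when $b_i \in S$. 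As $S$ is finite and $n$ is fixed, there are only finitely many $S$-patterns.

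The heart of the argument is the claim that whenever $t(y)$ and $t(y')$ have the same $S$-pattern, the elements $y$ and $y'$ lie in the same $S$-orbit; this yields at once that $O$ has no more $S$-orbits than there are $S$-patterns, hence finitely many. To prove the claim, write $t(y) = (b_1,\dots,b_n)$, $t(y') = (b_1',\dots,b_n')$, $y = \pi\cdot x_0$ and $y' = \rho\cdot x_0$. Matching patterns means precisely that $b_i \in S \iff b_i' \in S$ and $b_i = b_i'$ whenever $b_i \in S$. Consequently the assignment that fixes $S$ pointwise and sends each $b_i \mapsto b_i'$ is a well-defined injective partial map on $\At$ with finite domain and image: it is injective on the $b_i$ because these (and the $b_i'$) are pairwise distinct, and no clash with $S$ occurs since $b_i' \in S$ forces $b_i = b_i' \in S$. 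Any such partial bijection between finite subsets of the infinite set $\At$ extends to a finite permutation fixing $S$ pointwise, i.e.\ to some $\sigma \in \Perm_S(\At)$. Then $\sigma\pi$ and $\rho$ agree on $\supp_X(x_0)$, because $(\sigma\pi)(a_i) = \sigma(b_i) = b_i' = \rho(a_i)$, and since $\supp_X(x_0)$ supports $x_0$ we get $\sigma\cdot y = \sigma\pi\cdot x_0 = \rho\cdot x_0 = y'$, so $y$ and $y'$ do share an $S$-orbit.

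I expect the only genuinely delicate point to be the construction of the permutation $\sigma$ together with the verification that it fixes $S$ pointwise while realizing the prescribed map on the supports; the two standard nominal facts used here (equivariance of least support and the fact that $\pi\cdot x = \rho\cdot x$ whenever $\pi$ and $\rho$ agree on $\supp_X(x)$) are precisely what make this go through. The reduction to a single orbit and the finiteness of the pattern count are routine bookkeeping.
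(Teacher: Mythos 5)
Your proof is correct, and it follows the same basic strategy as the paper's: reduce to a single orbit (each $S$-orbit lies inside a single orbit, and orbit-finiteness gives only finitely many of those), then bound the number of $S$-orbits inside an orbit by the finitely many ``$S$-patterns'' of supports. The difference is organizational. The paper handles the single-orbit case by transfer: it invokes \autoref{L:singleorbit} (every single-orbit nominal set is a quotient of some $\At^{\# n}$), notes that two tuples of $\At^{\# n}$ lie in the same $S$-orbit if and only if they have the same $S$-pattern, and uses that equivariant surjections carry $S$-orbits onto $S$-orbits. You instead work intrinsically inside the orbit: you fix a representative $x_0$, parametrize elements by tuples enumerating their supports, and construct the witnessing permutation $\sigma \in \Perm_S(\At)$ by hand, via the extension of a finite partial injection that fixes $S$ pointwise to a finite permutation fixing $S$ pointwise. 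In effect you have inlined both the content of \autoref{L:singleorbit} and the combinatorial fact hidden behind the paper's ``clearly'': your argument is longer but self-contained, and it spells out the one delicate point---that the extending permutation can be chosen inside $\Perm_S(\At)$, using that pattern-matching rules out any clash with $S$---which the paper's one-line treatment of the $\At^{\# n}$ case leaves implicit. The paper's route buys brevity by reusing a structural corollary it needs elsewhere anyway; yours buys independence from that corollary at the cost of redoing its combinatorics.
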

A map $f\colon X\to Y$ between nominal sets is \emph{equivariant} if $f(\pi\o x)=\pi\o f(x)$ for all $\pi\in\Perm(\At)$ and $x\in X$, and \emph{finitely supported} if there exists a finite set $S\seq \At$ such that $f(\pi\o x)=\pi\o f(x)$ for all $\pi\in\Perm_S(\At)$ and $x\in X$.  Equivariant maps do not increase supports, i.e.~one has $\supp_Y(f(x))\seq \supp_X(x)$ for all $x\in X$. We write $\Nomfs$ for the category of nominal sets and
  finitely supported maps, and $\Nom$ for the (non-full) subcategory of nominal sets and equivariant maps. We shall use the following standard results about $\Nom$:
\begin{enumerate}
\item $\Nom$ is complete and cocomplete. Finite limits and all
  colimits are formed on the level of underlying sets. In particular,
  finite products of nominal sets are given by cartesian products and
  coproducts by disjoint union.


\item For every pair $X,Y$ of nominal sets, the \emph{exponential}
  $[X,Y]$ is the nominal set  consisting of all finitely supported maps
  $f\colon X\to Y$, with the group action given by
  $(\pi \o f)(x) = \pi\o f(\pi^{-1}\o x)$.  Moreover, for every
  nominal set $X$, the \emph{nominal power set} $\Pow X$ 
  is carried by the set of all subsets $X_0\seq X$ with finite
  support; i.e.~for which there exists a finite set $S\seq \At$ of
  atoms such that $\pi \o X_0 = X_0$ for $\pi\in\Perm_S(\At)$, where
  $\pi\o X_0 = \{\pi\o x\;:\; x\in X_0\}$. In particular, every
  singleton $\set{x}$ is finitely supported by $\supp_X(x)$. The group
  action on $\Pow X$ is given by $X_0\mapsto \pi\o X_0$, and we have
  $\Pow X \cong [X,2]$, for the discrete nominal set $2 = \{0,1\}$.
  
\item \emph{Quotients} and \emph{subobjects} in $\Nom$ are represented by epimorphisms (= surjective equivariant maps) and monomorphisms (= injective equivariant maps), respectively. $\Nom$ has image factorizations, i.e.~every
  equivariant map $f\colon X \to Y$ has a unique decomposition
  $f = m \cdot e$ into a quotient
  $e\colon X \epito I$  followed by a subobject
  $m\colon I \monoto Y$. We call $e$ the \emph{coimage} of $f$.
  
\item Orbit-finite nominal sets are closed under quotients, subobjects, and finite products.

\item For each $n\geq 0$, the nominal set
  $\At^{\# n} = \{\,(a_1,\ldots,a_n)\in \At^n \;:\; a_i\neq a_j \text{
    for $i\neq j$}\,\}$ with group action
  $\pi\o (a_1,\ldots, a_n) = (\pi(a_1),\ldots, \pi(a_n))$ is strong
  and has a single orbit. More generally, the (orbit-finite) strong
  nominal sets are up to isomorphism exactly the (finite) coproducts
  of nominal sets of the form $\At^{\#n}$.

%
\end{enumerate}

\section{Pseudovarieties of Nominal Monoids}
In this section, we investigate classes of orbit-finite nominal monoids and establish two characterizations of such classes: a categorical one, relating pseudovarieties of nominal monoids to {equational theories} in the category of nominal sets, and an axiomatic one, describing weak pseudovarieties in terms of sequences of nominal equations. The first of these results is the algebraic foundation of our subsequent treatment of varieties of data languages.

A \emph{nominal monoid} is a monoid $(M,\bullet,1_M)$ in the category
$\Nom$; that is, $M$ is equipped with the structure of a nominal set
such that the multiplication $\bullet\colon M\times M\to M$ is an
equivariant map and the unit $1_M\in M$ has empty support, i.e.~it
corresponds to an equivariant map $1\to M$, where $1$ is the nominal
set with one element. We write $\NomMon$ for the category of nominal
monoids and equivariant monoid morphisms (usually just called
\emph{morphisms}), and $\NomMon_\of$ for the full subcategory of
orbit-finite nominal monoids. The forgetful functor from $\NomMon$ to
$\Nom$ has a left adjoint assigning to each nominal set $\Sigma$ the
free nominal monoid $\Sigma^*$ of all words over $\Sigma$, with monoid
multiplication given by concatentation of words, unit $\epsilon$ (the
empty word) and group action
$\pi\o (a_1\cdots a_n)=\pi(a_1)\cdots \pi(a_n)$ for
$\pi\in \Perm(\At)$ and $a_1\cdots a_n\in \Sigma^*$. The category
$\NomMon$ has products (formed on the level of $\Nom$), image
factorizations, and surjective and injective morphisms
represent \emph{quotients} and \emph{submonoids} of nominal
monoids. A quotient $q\colon M \epito M'$ is called
\emph{support-reflecting} if for every $x' \in M'$ there exists an
$x \in M$ with $q(x) = x'$ and $\supp_M(x) = \supp_{M'}(x')$. The
following result characterizes the quotient monoids of $\Sigma^*$ in
terms of unary operations:
\begin{proposition}[Unary presentation for nominal monoids]\label{prop:unarypres}
  For every nominal set $\Sigma$ and every surjective equivariant map
  $e\colon \Sigma^* \epito M$, the following statements are
  equivalent:
  \begin{enumerate}
  \item\label{prop:unarypres:1} $e$ carries a quotient monoid of
    $\Sigma^*$, i.e.~there exists a nominal monoid structure
    $(M,\bullet,1_M)$ on $M$ such that
    $e\colon \Sigma^*\epito (M,\bullet,1_M)$ is a morphism of nominal
    monoids;
  \item\label{prop:unarypres:2} the maps
    $\Sigma^*\xra{w\o \dash} \Sigma^*$ and
    $\Sigma^*\xra{\dash\o w} \Sigma^*$ $(w\in \Sigma^*)$ lift along
    $e$, i.e.~there exist (necessarily unique) maps $l_w$ and $r_w$ making the
    following squares commute:
    \[
      \vcenter{
        \xymatrix@R-0.6em{
          \Sigma^* \ar@{->>}[d]_e \ar[r]^-{w\o \dash} & \Sigma^* \ar@{->>}[d]^e \\
          M \ar@{-->}[r]_-{l_w} & M
        }}
      \qquad\qquad
      \vcenter{
        \xymatrix@R-0.6em{
          \Sigma^* \ar@{->>}[d]_e \ar[r]^-{\dash\o w} & \Sigma^* \ar@{->>}[d]^e \\
          M \ar@{-->}[r]_-{r_w} & M
        }}
      \qquad\qquad
      \text{for every $w \in \Sigma^*$}.
    \]
  \end{enumerate}
\end{proposition}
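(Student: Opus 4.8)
The plan is to prove the equivalence by establishing the two implications separately, with \ref{prop:unarypres:1}$\Rightarrow$\ref{prop:unarypres:2} being the routine direction. Assuming a nominal monoid structure $(M,\bullet,1_M)$ making $e$ a morphism, I would simply set, for each $w\in\Sigma^*$,
\[
  l_w \defeq e(w)\bullet\dash \qquad\text{and}\qquad r_w \defeq \dash\bullet e(w),
\]
regarded as finitely supported maps $M\to M$. The morphism equation $e(wv)=e(w)\bullet e(v)$ then gives $e\circ(w\o\dash)=l_w\circ e$ on the nose, and symmetrically $e\circ(\dash\o w)=r_w\circ e$, so both squares commute. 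Uniqueness of the liftings is automatic because $e$ is surjective, hence right-cancellable.

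The substance lies in \ref{prop:unarypres:2}$\Rightarrow$\ref{prop:unarypres:1}. First I would unpack the two commuting squares into the pointwise identities
\[
  e(wv) = l_w(e(v)) \qquad\text{and}\qquad e(vw) = r_w(e(v)) \qquad (v,w\in\Sigma^*).
\]
The aim is to define a multiplication on $M$ by $e(u)\bullet e(v) \defeq e(uv)$, using surjectivity of $e$ to choose preimages $u,v$. The one genuinely delicate point, and the step I expect to be the main obstacle, is \emph{well-definedness}: given $e(u)=e(u')$ and $e(v)=e(v')$, I must derive $e(uv)=e(u'v')$. This is exactly where both families of lifts are indispensable. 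Applying the right lift at $v$ changes the left factor, $e(uv)=r_v(e(u))=r_v(e(u'))=e(u'v)$, and applying the left lift at $u'$ then changes the right factor, $e(u'v)=l_{u'}(e(v))=l_{u'}(e(v'))=e(u'v')$; chaining these yields $e(uv)=e(u'v')$. Using only one of the two liftings would not suffice, so this two-step argument is precisely the content of the ``unary presentation''.

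With the multiplication in hand, the remaining verifications are routine and I would carry them out as follows. Associativity and unitality transfer directly from $\Sigma^*$ via surjectivity of $e$: for instance $(e(u)\bullet e(v))\bullet e(w)=e(uvw)=e(u)\bullet(e(v)\bullet e(w))$, and $e(\epsilon)\bullet e(v)=e(v)=e(v)\bullet e(\epsilon)$, so $1_M\defeq e(\epsilon)$ is a two-sided unit. Its support is empty because $\supp(\epsilon)=\emptyset$ and equivariant maps do not increase supports. Equivariance of $\bullet$ follows from equivariance of $e$ together with the fact that the group action on $\Sigma^*$ respects concatenation: for $\pi\in\Perm(\At)$,
\[
  (\pi\o e(u))\bullet(\pi\o e(v)) = e(\pi\o u)\bullet e(\pi\o v) = e\bigl((\pi\o u)(\pi\o v)\bigr) = e(\pi\o(uv)) = \pi\o(e(u)\bullet e(v)).
\]
Finally, $e$ is a morphism of nominal monoids by construction, since $e(u)\bullet e(v)=e(uv)$ and $e(\epsilon)=1_M$. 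This completes the plan; the only nontrivial ingredient is the well-definedness argument above.
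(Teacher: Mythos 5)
Your proposal is correct and takes essentially the same approach as the paper's proof: the same lifts $l_w=e(w)\bullet\dash$ and $r_w=\dash\bullet e(w)$ for \ref{prop:unarypres:1}$\Rightarrow$\ref{prop:unarypres:2}, and the same definition $e(u)\bullet e(v)\defeq e(uv)$, $1_M\defeq e(\epsilon)$ for the converse, with the same equivariance computation. The only difference is that you spell out the well-definedness argument explicitly (the two-step chain $e(uv)=r_v(e(u))=r_v(e(u'))=e(u'v)=l_{u'}(e(v))=l_{u'}(e(v'))=e(u'v')$ using both families of lifts), a step the paper merely asserts follows from condition \ref{prop:unarypres:2}.
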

In general, the maps $w\o \dash$ and $\dash\o w$ are not equivariant, but finitely supported (with support
contained in the one of $w$). This implies that also $l_w$ and
$r_w$ in~\ref{prop:unarypres:2} are finitely supported.

\subsection{Equational Theories}

In previous work~\cite{mu19} we studied varieties of objects in a general category and their relation to an abstract form of equations. In the following, we instantiate these concepts to the category of nominal sets to derive a characterization of pseudovarieties of orbit-finite monoids.
\begin{defn}
  Let $\Sigma$ be a nominal set. A \emph{$\Sigma$-generated nominal
    monoid} is a nominal quotient monoid $e\colon \Sigma^*\epito M$ of the free monoid
  $\Sigma^*$. We denote by $\Sigma^*\mathord{\epidownarrow} \NomMon_\of$ the poset of  $\Sigma$-generated orbit-finite nominal monoids, ordered by $e\leq e'$ iff $e'$
  factorizes through $e$.
\end{defn}
\begin{defn}
  A \emph{local pseudovariety of $\Sigma$-generated nominal monoids}
  is a filter
  $\T_\Sigma\seq \Sigma^*\mathord{\epidownarrow} \NomMon_\of$ in the
  poset of $\Sigma$-generated orbit-finite nominal monoids; that is,
  $\T_\Sigma$ is
  \begin{enumerate}
  \item \emph{upwards closed:} $e\in \T_\Sigma$ and $e\leq e'$ implies
    $e'\in \T_\Sigma$, and
  \item \emph{downwards directed:} for each pair $e_0,e_1\in \T_\Sigma$
    there exists $e\in \T_\Sigma$ with $e\leq e_0,e_1$.
  \end{enumerate}
  If we replace (1) by the weaker condition
  \begin{enumerate}[label=(\arabic*')]
  \item for each $e\colon \Sigma^*\epito M$ in $\T_\Sigma$ and each
    support-reflecting $q\colon M\epito N$ one has
    $q\o e \in \T_\Sigma$,
  \end{enumerate}
  then $\T_\Sigma$ is called a \emph{weak local pseudovariety of
    $\Sigma$-generated nominal monoids}.
\end{defn}
%
%
%
%

\begin{rem}\label{rem:localpseudovar_unpres}
  By \autoref{prop:unarypres}, the definition of local pseudovariety
  can be equivalently stated as follows:
  \begin{enumerate}
  \item $\T_\Sigma$ is a filter in the poset of orbit-finite quotients of $\Sigma^*$ in $\Nom$;
  \item for every $e\in \T_\Sigma$ and $w \in \Sigma^*$, the unary operations
    $w\o \dash$ and $\dash\o w$ on $\Sigma^*$ lift along $e$.
  \end{enumerate}
\end{rem}
Let $\Nom_\ofs$ denote the full subcategory of $\Nom$ on orbit-finite strong nominal
sets.
\begin{defn}[Equational Theory]\label{D:eqnth}
  A \emph{(weak) equational theory} is a family
  \[
    \T = (\,\T_\Sigma\seq \Sigma^*\mathord{\epidownarrow} \NomMon_{\of}\,)_{\Sigma\in
      \Nom_{\ofs}}
  \]
  of (weak) local pseudovarieties with the following two properties
  (see the diagrams below):
  \begin{enumerate}
  \item\label{D:eqnth:1}\emph{Substitution invariance.} For each
    equivariant monoid
    morphism $h\colon \Delta^*\to \Sigma^*$ with $\Delta, \Sigma\in\Nom_{\ofs}$ and each
    $e_\Sigma\colon \Sigma^*\epito M_\Sigma$ in $\mathscr{T}_\Sigma$, the coimage
    $e_\Delta$ of $e_\Sigma\o h$ lies in $\mathscr{T}_\Delta$.
  \item\label{D:eqnth:2}\emph{Completeness.} For each $\Sigma\in \Nom_{\ofs}$ and each
    quotient $e\colon \Sigma^*\epito M_\Sigma$ in $\mathscr{T}_\Sigma$, there exists
     $\Delta\in\Nom_{\ofs}$ and a support-reflecting quotient
    $e_\Delta\colon \Delta^*\epito M_\Delta$ in $\mathscr{T}_\Delta$ with $M_\Delta=M_\Sigma$.
  \end{enumerate}
\vspace{0.2cm}
  \[
    \xymatrix@=19pt{
      \Delta^* \ar[r]^{\forall h} \ar@{->>}[d]_{ e_\Delta} & \Sigma^*
      \ar@{->>}[d]^{\forall e_\Sigma}\\
      M_\Delta \ar@{ >->}[r] & M_\Sigma
    }
    \qquad\qquad
    \xymatrix@=19pt{
      \Delta^* \ar@{.>>}[d]_{\exists e_\Delta} & \Sigma^*
      \ar@{->>}[d]^{\forall e_\Sigma}\\
      M_\Delta\ar@{=}[r] & M_\Sigma
    }
  \]
\end{defn}
\begin{rem}\label{rem:complete}
  \begin{enumerate}
  \item\label{rem:complete:1} Local pseudovarieties were previously called
    \emph{equations}~\cite{mu19}. In fact, in many instances of the
    framework in \emph{op.~cit.}, a filter of quotients can be represented
    as a single quotient of a free algebra on an object $\Sigma$, which in turn corresponds to a set of pairs of terms
    given by the kernel of the quotient, i.e.~to the usual syntactic concept of an
    equation.
\item\label{rem:complete:3} The restriction to \emph{strong} nominal sets $\Sigma$ as generators reflects that the latter are the ``free'' nominal sets  \cite{KP10}, a property crucial for the proof of \autoref{thm:theories_vs_pseudovars} below. More precisely, letting $\Pow_f \At$ denote the set of finite subsets of $\At$, the forgetful functor $U\colon \Nom \to \Set^{\Pow_f \At}$ mapping a nominal set $X$ to the presheaf
$S \mapsto \{\, x\in X\;:\; \supp_X(x)\seq S\,\}$ has a left adjoint
$F$, and strong nominal sets are exactly the nominal sets of the form
$FP$ for $P\in \Set^{\Pow_f\At}$.
    
  \item\label{rem:complete:2} The somewhat technical completeness property
    cannot be avoided, i.e.~a substitution-invariant family of local
    pseudovarieties is generally incomplete. Indeed, consider
    the family
    \[
      \T = (\,\T_\Sigma\seq \Sigma^*\mathord{\epidownarrow} \NomMon_{\of}\,)_{\Sigma\in
        \Nom_{\ofs}},
    \]
    where $\T_\Sigma$ consists of all $\Sigma$-generated orbit-finite nominal monoids
    $e\colon \Sigma^*\epito M$ such that $e$ maps each element of $\Sigma^*$ with a support
    of size $1$ to $1_M$.

    \smallskip
    \noindent To see that $\T_\Sigma$ is a filter, suppose that
    $e\colon \Sigma^*\epito M$ and $e'\colon \Sigma^*\epito M'$ are two
    quotients in $\T_\Sigma$. Form their subdirect product $q$, viz.~the
    coimage of the morphism
    $\langle e,e'\rangle\colon \Sigma^*\to M\times M'$. Each $w\in \Sigma^*$ with a support of size $1$ is mapped by $q$ to
    $(e(w), e'(w))=(1_M,1_{M'}) = 1_{M\times M'}$. Thus $q\in \T_\Sigma$
    and $q\leq e,e'$, i.e.~$\T_\Sigma$ is downwards directed. Clearly, $\T_\Sigma$ is also upwards closed.

    \smallskip \noindent For substitution invariance, let
    $h\colon \Delta^*\to \Sigma^*$ be a morphism and $e_\Sigma\colon \Sigma^*\epito M_\Sigma$ a
    quotient in $\T_\Sigma$. Then $e_\Sigma\o h$ maps each element with a
    support of size $1$ to $1_{M_\Sigma}$ since $e_\Sigma$ does and the equivariant
     map $h$ does not increase
    supports. Thus, the coimage of $e_\Sigma\o h$ lies in
    $\T_\Delta$.

    \smallskip\noindent Finally, we show that $\T$ is not
    complete. Fix an arbitrary orbit-finite nominal monoid $M$
    containing an element $m$ with $\under{\supp_M{m}} = 1$. Note that $m\neq 1_M$ because $1_M$ has empty
    support. Moreover, choose an orbit-finite strong nominal set
    $\Sigma$ such that all elements of $\Sigma$ have least support of
    size at least $2$, and $M$ can be expressed as a quotient
    $e\colon \Sigma^*\epito M$. (For instance, one may take
    $\Sigma=\coprod_{i<k} \At^{\#n}$ where $k$ is the number of orbits
    of $M$ and $n = \max \{ 2, \under{\supp_M(x)} : x\in M \}$.) Since
    all nonempty words in $\Sigma^*$ have a least support of size
    at least $2$, one has $e\in \T_\Sigma$. For every
    $\Delta\in \Nom_{\ofs}$ and every quotient
    $q\colon \Delta^*\epito M$ in $\T_\Delta$, the set
    $q^{-1}[\{m\}]\seq \Delta^*$ contains no element with least support of
    size $1$, since such elements are mapped by $q$ to $1_M\neq
    m$. Consequently, $q$ is not support-reflecting. This shows that
    $M$ is not the codomain of any support-reflecting quotient in
    $\T_\Delta$.
  \end{enumerate}
\end{rem}
\begin{defn}[Pseudovariety and Weak Pseudovariety]
  A \emph{pseudovariety of nominal monoids} is a nonempty class
  $\pvar$ of orbit-finite nominal monoids closed under finite products,
  submonoids, and quotient monoids. That is,
  \begin{enumerate}
  \item for each $M,N\in \pvar$ one has $M\times N\in \pvar$;
  \item for each $M\in \pvar$ and each nominal submonoid $N\monoto M$ one has $N\in \pvar$;
  \item for each $M\in \pvar$ and each nominal quotient
    monoid $M\epito N$ one has $N\in \pvar$.
  \end{enumerate}
  A \emph{weak pseudovariety of nominal monoids} is a
  nonempty class of orbit-finite nominal monoids closed under finite
  products, submonoids, and support-reflecting quotient monoids. 
\end{defn}
The following result is a special case of the Generalized Variety
Theorem~\cite[Theorem 3.15]{mu19}. It asserts that equational theories
and pseudovarieties are equivalent concepts.  Note that (weak) equational
theories form a poset ordered by $\T\leq\T'$ iff
$\T_\Sigma\leq \T'_\Sigma$ for all $\Sigma\in \Nom_{\ofs}$, where
$\T_\Sigma \leq \T'_\Sigma$ holds iff for every $e' \in \T'_\Sigma$
there exists an $e \in \T_\Sigma$ with $e \leq e'$. Similarly,
(weak) pseudovarieties of nominal monoids form a poset w.r.t.~the
inclusion ordering.
\begin{theorem}\label{thm:theories_vs_pseudovars}
  (Weak) equational theories and (weak) pseudovarieties of nominal monoids form
  dually isomorphic complete lattices.
\end{theorem}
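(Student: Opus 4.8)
The plan is to deduce the statement as a direct instance of the Generalized Variety Theorem \cite[Theorem 3.15]{mu19}, so the real work is to supply the instantiating data and check that the abstract hypotheses hold. I would set up the dictionary as follows: the base category is $\Nom$; the category of structures is $\NomMon$ with its forgetful functor to $\Nom$ and the left adjoint $\Sigma \mapsto \Sigma^*$; the finite structures are the orbit-finite monoids $\NomMon_{\of}$; the proper factorization system is the (surjective, injective) one on $\Nom$, which restricts to $\NomMon$; and the admissible generators are the orbit-finite strong nominal sets $\Nom_\ofs$. Under this translation the abstract notions unwind to exactly the ones defined in the excerpt: an \emph{equation} in the sense of \emph{op.\,cit.}\ is a local pseudovariety of $\Sigma$-generated monoids (a filter of finite quotients of $\Sigma^*$, which by \autoref{prop:unarypres} and \autoref{rem:localpseudovar_unpres} is the same as a filter of orbit-finite quotients in $\Nom$ stable under the unary operations $w\o\dash$ and $\dash\o w$), an \emph{equational theory} is a substitution-invariant and complete family of such equations as in \autoref{D:eqnth}, and a \emph{variety of finite structures} is a pseudovariety of nominal monoids.

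The hypotheses required by the abstract theorem are structural, and each is discharged by a fact already available. First, $\Nom$ is complete and cocomplete and carries the proper (surjective, injective) factorization system with image factorizations (facts (1) and (3) of \autoref{S:nom}), and this lifts to $\NomMon$. Second, the orbit-finite monoids are closed under finite products, submonoids, and quotient monoids, so they form a legitimate class of finite structures; this is the monoid version of the closure fact (4), recorded before \autoref{prop:unarypres}. Third, every orbit-finite monoid arises as a quotient $\Sigma^*\epito M$ for some $\Sigma\in\Nom_\ofs$, and the generators in $\Nom_\ofs$ are projective with respect to surjections—this is precisely the content of \autoref{rem:complete}, which identifies the strong nominal sets with the free objects $FP$ of the adjunction $F\dashv U\colon \Nom\to\Set^{\Pow_f\At}$, so that surjections lift along them. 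For the \emph{weak} versions one simply replaces the right-hand class of the factorization by the support-reflecting quotients, which matches condition~(1') and the weak local pseudovarieties throughout.

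Granting the hypotheses, the abstract theorem produces the two mutually inverse assignments and shows they are order-reversing bijections of complete lattices. A (weak) pseudovariety $\pvar$ is sent to the theory whose $\Sigma$-component is $\T_\Sigma=\{\,e\colon\Sigma^*\epito M \mid M\in\pvar\,\}$: this is a filter because $\pvar$ is closed under quotients (upward closure) and under subdirect products, i.e.\ products and submonoids (downward directedness); substitution invariance holds because equivariant morphisms do not increase supports and $\pvar$ is closed under submonoids; and completeness follows from the presentability in point three together with closure under quotients. Conversely a (weak) equational theory $\T$ is sent to the class of orbit-finite monoids occurring as codomains of quotients in some $\T_\Sigma$. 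Since a larger theory (in the order of \autoref{thm:theories_vs_pseudovars}, reverse inclusion, hence \emph{more} equations) carves out a smaller class of monoids, the bijection reverses order, giving the dual isomorphism.

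The crux—and the only genuinely nominal part of the argument—is the role of the completeness condition and the restriction to strong generators. The abstract theorem needs the generating objects to be projective so that presentations can be transported along substitutions and quotients, and by \autoref{rem:complete} this holds for \emph{strong} orbit-finite nominal sets but fails for arbitrary orbit-finite ones. Restricting to strong generators therefore creates a mismatch on the monoid side, and the completeness condition is exactly the bookkeeping that repairs it: it guarantees that a monoid seen by the theory through some strong presentation is also seen through a \emph{support-reflecting} one, which is what lets the recovered class be closed under all (resp.\ support-reflecting) quotients. I expect the main obstacle to be verifying that this nominal completeness condition really is the faithful instantiation of the corresponding hypothesis in \cite{mu19}, rather than an ad hoc addition; the explicit incomplete family in \autoref{rem:complete} shows the condition cannot be dropped, so pinning down its abstract origin is the heart of the proof.
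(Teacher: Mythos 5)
Your overall strategy is the paper's: both derive the theorem by instantiating the Generalized Variety Theorem of \cite{mu19} with $\A=\NomMon$, $\A_0=\NomMon_{\of}$, the (surjective, injective) factorization system, finite products, and the free monoids $\Sigma^*$ on orbit-finite strong nominal sets as generators, and your unwinding of the abstract (weak) equations and theories into (weak) local pseudovarieties and theories is correct. However, your verification of the crucial hypothesis contains a genuine error. You assert that the generators are ``projective with respect to surjections'' because strong nominal sets are free over $\Set^{\Pow_f\At}$. This is false: freeness over that presheaf category yields projectivity only with respect to quotients $q$ for which $Uq$ is pointwise surjective, and these are exactly the \emph{support-reflecting} quotients. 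Concretely, $\At$ is not projective with respect to the surjection $\At^{\#2}\epito\At$, $(a,b)\mapsto a$, since there is no equivariant map $\At\to\At^{\#2}$ at all; and at the monoid level, the example in \autoref{rem:complete} exhibits a quotient $e\colon\Sigma^*\epito M$ and a morphism $\At^*\to M$ hitting an element of support size $1$ that cannot factor through $e$. Were your claim true, the class $\E_\X$ of the abstract framework would equal all surjections, the weak and non-weak notions would collapse, and completeness would not be the nontrivial condition that \autoref{rem:complete} shows it to be.

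The paper's proof hinges on exactly the two facts your argument is missing. First, $\E_\X$ is characterized as the class of support-reflecting quotients (\autoref{lem:supprefl}); this single lemma is what makes the abstract $\E_\X$-completeness condition and the abstract weak notions instantiate to the completeness condition of \autoref{D:eqnth} and to weak local pseudovarieties and weak pseudovarieties --- precisely the ``abstract origin'' you say remains to be pinned down. Second, assumption (A3) of the framework does not ask merely that every orbit-finite monoid be \emph{some} quotient of a $\Sigma^*$, as you state it, but that it be an $\E_\X$-quotient, i.e.\ a support-reflecting one; your text never establishes this stronger statement, and your false projectivity claim is what papers over the gap. The paper obtains it from \autoref{L:strongprops}: the underlying nominal set of $M$ is a quotient $e\colon\Sigma\epito M$ of an orbit-finite strong nominal set which \emph{preserves least supports}, and then the induced monoid morphism $\widehat{e}\colon\Sigma^*\epito M$ is support-reflecting. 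Finally, your account of the weak case (``replace the right-hand class of the factorization by the support-reflecting quotients'') misdescribes the framework: the factorization system $(\E,\M)$ is fixed throughout, and weakness consists in closing only under $\E_\X$-quotients rather than under all $\E$-quotients.
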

The isomorphism maps a (weak) equational theory $\T$ to the (weak)
pseudovariety $\pvar(\T)$ of all orbit-finite monoids $M$ such that
each morphism $h\colon \Sigma^*\to M$ with $\Sigma\in \Nom_{\ofs}$
factorizes through some $e_\Sigma\in \T_\Sigma$. The inverse
maps a (weak) pseudovariety $\pvar$ to the (weak) equational theory
$\T(\pvar)$ where $\T(\pvar)_\Sigma$ consists of all quotients
$e\colon \Sigma^*\epito M$ with codomain $M\in \pvar$.

\subsection{The Nominal Eilenberg-Schützenberger Theorem}
\label{S:eilenschuetz}

In addition to their abstract category theoretic characterization in
\autoref{thm:theories_vs_pseudovars}, weak pseudovarieties of nominal
monoids admit an axiomatic description in terms of sequences of
equations, analogous to the classical result of Eilenberg and
Schützenberger~\cite{es76} for pseudovarieties of ordinary
monoids. The appropriate concept of equation is as follows:
\begin{defn}\label{def:equation}
  \begin{enumerate}
  \item An \emph{equation} is a pair
    $(s,t)\in X^*\times X^*$, denoted as $s=t$, where $X$ is an
    orbit-finite strong nominal set. A nominal monoid $M$
    \emph{satisfies} $s=t$ if for every equivariant map
    $h\colon X\to M$ one has $\widehat{h}(s)=\widehat{h}(t)$, where
    $\widehat{h}\colon X^*\to M$ denotes the unique extension of $h$
    to an equivariant monoid morphism.

  \item Given a sequence $E=(s_n=t_n)_{n \in \Nat}$ of equations
    (possibly taken over different orbit-finite strong nominal sets $X$ of
    generators), a nominal monoid $M$ \emph{eventually satisfies}
    $E$ if there exists an index $n_0 \in \Nat$ such that $M$ satisfies
    all the equations $s_n=t_n$ with $n\geq n_0$. We denote by
    $\pvar(E)$ the class of all orbit-finite nominal monoids that
    eventually satisfy $E$.
  \end{enumerate}
\end{defn}
\begin{rem}
  Equations can be presented syntactically as expressions of the form
  \begin{equation}\label{eq:nomeq}
    y_1: S_1,\ldots, y_n: S_n \vdash u=v,
  \end{equation}
  where $Y=\{y_1,\ldots, y_n\}$ is a finite set of variables,
  $S_1,\ldots, S_n\seq \At$ are finite sets of atoms, and $u,v$ are
  words in $(\Perm(\At)\times Y)^*$.  A nominal monoid $M$ is said to
  \emph{satisfy} \eqref{eq:nomeq} if for every \emph{variable
    interpretation}, i.e.~every map $h\colon Y\to M$ with
  $\supp_M(h(y_i))\seq S_i$ for $i=1,\ldots,n$, one has $\ol h(u)=\ol
  h(v)$. Here, $\ol h\colon (\Perm(\At)\times Y)^*\to M$ is the
  unique monoid morphism mapping $(\pi,y_i)$ to $\pi\o h(y_i)$. Every
  equation can be transformed into an expressively equivalent
  syntactic equation, and vice versa~\cite[Lemma~B.31]{mu19_arxiv}.
\end{rem}
\begin{theorem}[Nominal Eilenberg-Schützenberger Theorem]\label{thm:nomeilschuetz}
  A class $\pvar$ of orbit-finite no\-mi\-nal monoids forms a weak pseudovariety
  iff 
  $\pvar=\pvar(E)$ for some sequence $E$ of equations.
\end{theorem}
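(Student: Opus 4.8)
The plan is to prove both implications directly, with the forward (``only if'') direction resting on a single algebraic lemma that plays the role of Birkhoff's theorem in the orbit-finite nominal setting.

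For the easy implication I would first record that a single equation $s=t$ over an orbit-finite strong nominal set $X$ is preserved by the three closure operations defining a weak pseudovariety. Closure under finite products and submonoids is routine: a morphism $X\to M_1\times M_2$ is a pair of morphisms, and a morphism into a submonoid composes with the inclusion, so $\widehat{h}(s)=\widehat{h}(t)$ transfers in both directions. The one nominal subtlety is closure under a support-reflecting quotient $q\colon M\epito N$. Given an equivariant $h\colon X\to N$, I would lift it to $\widetilde h\colon X\to M$ by choosing, for each orbit representative $x$, a $q$-preimage of $h(x)$ of the \emph{same} support; since then $\supp_M(\widetilde h(x))=\supp_N(h(x))\seq\supp_X(x)$ and $X$ is strong, any permutation fixing $\supp_X(x)$ fixes $\widetilde h(x)$, so $\widetilde h$ is well defined and equivariant. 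As $\widehat h=q\circ\widehat{\widetilde h}$, the equality $\widehat{\widetilde h}(s)=\widehat{\widetilde h}(t)$ in $M$ forces $\widehat h(s)=\widehat h(t)$ in $N$. This is exactly the point where strongness of $X$ and support-reflection of $q$ are both used. Eventual satisfaction of a sequence then inherits all three closures finitarily (for a product one takes the maximum of the two thresholds), and the trivial monoid witnesses nonemptiness; hence $\pvar(E)$ is a weak pseudovariety.

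For the converse, let $\pvar$ be a weak pseudovariety. Up to isomorphism there are only countably many orbit-finite nominal monoids, so I would enumerate the members of $\pvar$ as $T_1,T_2,\ldots$ and the non-members as $S_1,S_2,\ldots$. Put $P_m=T_1\times\cdots\times T_m$, which lies in $\pvar$ by closure under finite products, and write $\mathrm{Th}(P_m)$ for the set of equations valid in $P_m$. For each $j\le m$ I want a single equation $\sigma^{(m)}_j\in\mathrm{Th}(P_m)$ that \emph{fails} in $S_j$; the sequence $E$ is then the concatenation of the finite blocks $B_m=(\sigma^{(m)}_1,\ldots,\sigma^{(m)}_m)$. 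Granting these equations, the verification is the classical Eilenberg--Schützenberger bookkeeping: a member $T_i$ is a factor of every $P_m$ with $m\ge i$, and an equation valid in a product is valid in each factor, so $T_i$ satisfies every block $B_m$ with $m\ge i$, failing at most the finitely many equations in $B_1,\ldots,B_{i-1}$; thus $T_i$ eventually satisfies $E$. A non-member $S_j$ fails $\sigma^{(m)}_j$, hence some equation of $B_m$, for every $m\ge j$, so it fails cofinally and does not eventually satisfy $E$. Therefore $\pvar(E)=\pvar$.

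The existence of the separating equations is the heart of the argument and the step I expect to be the main obstacle. It reduces to a nominal Birkhoff-type lemma: for orbit-finite monoids $P$ and $S$, if $S$ satisfies every equation in $\mathrm{Th}(P)$ then $S$ lies in the weak pseudovariety generated by $P$. Since $S_j\notin\pvar$ while the weak pseudovariety generated by $P_m$ is contained in $\pvar$, the contrapositive yields an equation of $P_m$ that $S_j$ violates, which is the desired $\sigma^{(m)}_j$. To prove the lemma I would fix a support-reflecting presentation $e_S\colon\Sigma^*\epito S$ with $\Sigma$ orbit-finite strong (such a presentation exists for any orbit-finite monoid). The crucial use of orbit-finiteness is that there are only finitely many morphisms $g_1,\ldots,g_r\colon\Sigma^*\to P$, because equivariant maps $\Sigma\to P$ are determined by the images of finitely many orbit generators and only finitely many elements of $P$ are supported within a fixed finite set (cf.\ \autoref{lem:s_orbit_finite}). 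Let $Q\monoto P^r$ be the image of $\langle g_1,\ldots,g_r\rangle$, a submonoid of a finite power of $P$ and hence in the weak pseudovariety generated by $P$. The kernel of the coimage $\Sigma^*\epito Q$ consists exactly of those pairs $(s,t)$ that $P$ identifies under \emph{all} $\Sigma$-interpretations, i.e.\ of equations in $\mathrm{Th}(P)$; as $S$ satisfies all of these, $e_S$ factors through $Q$, exhibiting $S$ as a quotient of $Q$. Finally, support-reflection of $e_S$ transports along this factorization, since a same-support $\Sigma^*$-preimage of $x\in S$ maps to a same-support preimage in $Q$; thus $S$ is a support-reflecting quotient of $Q$ and lies in the weak pseudovariety generated by $P$. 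The delicate features throughout are the interplay of strong generators, support-reflecting quotients, and the reduction from arbitrary to finite powers --- exactly what distinguishes the weak nominal setting from the classical theorem.
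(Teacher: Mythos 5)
Your proposal is correct, but its ``only if'' half takes a genuinely different route from the paper's. The paper builds $E$ from the theories of the product monoids $M_n$ themselves: it forms the congruence $\equiv_n$ of all $X_n$-equations valid in $M_n$, proves the substantive combinatorial lemma that every orbit-finite congruence on a free nominal monoid is finitely generated (\autoref{L:cong}, via a word-length/support argument), and lets $E$ enumerate the finite generating sets $W_n$; non-members are then excluded through the Homomorphism Theorem, since any monoid that satisfies $W_n$ and admits a support-reflecting presentation over $X_n$ is a support-reflecting quotient of $X_n^*/\mathord{\equiv_n}\in\pvar$. You instead enumerate the non-members as well and assemble $E$ from separating equations $\sigma^{(m)}_j$, one per non-member per stage, supplied by your Birkhoff-type lemma; the proof of that lemma (finitely many morphisms $\Sigma^*\to P$, tupling into a finite power, kernel comparison via \autoref{T:hom}, transport of support-reflection along the factorization) is essentially the paper's steps 2 and 3 repackaged per pair $(P_m,S_j)$, and your inline lifting argument in the ``if'' direction is exactly the projectivity statement of \autoref{lem:supprefl}. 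What each approach buys: your bookkeeping avoids \autoref{L:cong} entirely, trading its finite-generation combinatorics for the separation argument, while the paper's route produces canonical equations (generators of the theories $\equiv_n$) and yields \autoref{L:cong} as a lemma of independent interest --- indeed the paper uses it to prove that there are only countably many orbit-finite nominal monoids up to isomorphism (\autoref{rem:countable}), a fact both proofs rest on and which you assume without argument. To keep your shortcut from silently reintroducing \autoref{L:cong}, you should justify countability independently, which is possible: there are countably many orbit-finite nominal sets up to isomorphism, and by \autoref{L:finsupp} each carries only finitely many equivariant monoid structures. Finally, in the degenerate case where $\pvar$ contains every orbit-finite monoid there are no non-members and hence no separating equations, so $E$ should be padded with trivial equations $y=y$ to obtain an infinite sequence.
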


\begin{proof}[Proof sketch]
  The proof proceeds along the lines of the one for ordinary
  monoids~\cite{es76}, although some subtle modifications are
  required. The ``if'' direction is a routine verification. For the
  ``only if'' direction, let $\pvar$ be a
  weak pseudovariety. Using that there are only countably many orbit-finite
  monoids up to isomorphism, one can construct a sequence
  $M_0,M_1,M_2,\ldots$ of nominal monoids in $\pvar$ such that each
  $M\in \pvar$ is a quotient of all but finitely many $M_n$'s. Let
  $X_0,X_1,X_2,\ldots$ be the sequence of all (countably many) strong
  orbit-finite nominal sets up to isomorphism, and consider the
  equivariant congruence relation on $X_n^*$ given by
  \[
    s\equiv_n t
    \quad\text{iff}\quad
    \text{$M_n$ satisfies the equation $s=t$.}
  \]
  One then shows that the congruence $\equiv_n$ is generated by a
  finite subset $W_n\seq \mathord{\equiv_n}$, and that
  $\pvar=\pvar(E)$ for every sequence $E$ that lists all equations in
  the countable set $\bigcup_n W_n$.
\end{proof}

\begin{example}\label{ex:aperiodic}
  An orbit-finite nominal monoid $M$ is called
  \emph{aperiodic}~\cite{boj2013,cpl15} if there exists a natural
  number $n\geq 1$ such that $x^{n+1} = x^n$ for all $x\in M$. The class of all orbit-finite aperiodic nominal
  monoids forms a pseudovariety. Taking the set $Y=\{y\}$ of
  variables, it is not difficult to see that this pseudovariety is specified by the sequence of
  syntactic equations
  \[
    y: S_n \vdash y^{n+1}=y^n \qquad (n \in \Nat),
  \]
  where $S_n = \{a_0,a_1,\ldots, a_{n-1}\}$ is the set of the first
  $n$ atoms in the countably infinite set $\At=\{a_0,a_1,a_2\ldots\}$
  of all atoms, and we write $y$ for $(\id, y)\in \Perm(\At)\times Y$.
\end{example}

\section{Duality and the Nominal Eilenberg Correspondence}
\label{S:var}


%

In this section, we establish our nominal version of Eilenberg's
variety theorem. It is based on a dual interpretation of the concepts
of a (local) pseudovariety of nominal monoids and of an equational theory, introduced in
the previous section, under \emph{nominal Stone duality}.
\subsection{Nominal Stone Duality}
\label{S:duality}
A classical result from duality theory, known as \emph{discrete
  Stone duality}, states that the category of sets is dually equivalent to the
category of complete atomic boolean algebras, i.e.~complete boolean
algebras in which every non-zero element is above some atom. An
analogous duality holds for the category $\Nomfs$ of nominal sets and
finitely supported maps.
\begin{defn}
  A \emph{nominal complete atomic boolean algebra} (\emph{ncaba}) is a
  boolean algebra $(B,\vee,\wedge,\neg,\bot,\top)$ in $\Nom$ such that
  every finitely supported subset of $B$ has a supremum, and for every
  element $b\in B\setminus\{\bot\}$ there exists an atom (i.e.~a minimal element) $a\in B$ with
  $a\leq b$. Here, the partial order $\leq$ is defined as usual by $a\leq b$ iff $a\wedge b = a$. We denote by $\nCABAfs$ the category of ncabas and
  finitely supported morphisms (i.e.~finitely supported maps
  preserving all the boolean operations and suprema of finitely
  supported subsets), and by $\nCABA$ the (non-full) subcategory of
  ncabas and \emph{equivariant} morphisms.
\end{defn}
\begin{theorem}[Nominal Stone Duality]\label{thm:nominalstonedual}
  The categories $\nCABAfs$ and $\Nomfs$ are dually equivalent. The duality restricts to one between the subcategories $\nCABA$ and $\Nom$.
\end{theorem}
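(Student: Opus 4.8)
The plan is to realise the duality through the \emph{nominal powerset functor} $P\colon \Nomfs\to\nCABAfs^{\op}$, the nominal analogue of the powerset functor of discrete Stone duality, and to prove the theorem by showing that $P$ is faithful, full, and essentially surjective. On objects put $P(X)=\Pow X$. By the standard facts recalled above, $\Pow X$ is a boolean algebra object in $\Nom$ whose finitely supported subsets admit suprema (namely unions), and its atoms are exactly the singletons $\{x\}$ for $x\in X$, each finitely supported by $\supp_X(x)$; hence $\Pow X$ is a ncaba. On a finitely supported map $f\colon X\to Y$ set $P(f)=f^{-1}\colon\Pow Y\to\Pow X$; a short computation shows that $f^{-1}$ sends finitely supported subsets to finitely supported subsets and is a finitely supported morphism of ncabas with $\supp(f^{-1})\seq\supp(f)$.

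Faithfulness is immediate, since $f$ is recovered from $f^{-1}$ via its fibres $f^{-1}[\{y\}]$. For fullness, let $h\colon\Pow Y\to\Pow X$ be a ncaba morphism. Since $h$ preserves binary meets and $\bot$, the subsets $h(\{y\})$ are pairwise disjoint, and since $h$ preserves $\top$ and finitely supported joins, $\bigcup_{y\in Y}h(\{y\})=h(Y)=X$. Thus each $x\in X$ lies in $h(\{y\})$ for a unique $y$, and setting $f(x)$ equal to this $y$ defines a map with $f^{-1}=h$ (evaluate both sides on an arbitrary finitely supported $S\seq Y$, using $S=\bigvee_{y\in S}\{y\}$ and join preservation). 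A routine support calculation gives $\supp(f)\seq\supp(h)$, so $f$ is finitely supported and $P(f)=h$.

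Essential surjectivity carries the genuine content and is the main obstacle: I claim $B\cong\Pow(\Atom(B))$ for every ncaba $B$. The set $\Atom(B)$ of atoms is an equivariant subset of $B$ (each $\pi$ acts as a boolean automorphism and hence maps atoms to atoms), so it is a nominal set. Define $\varepsilon_B\colon B\to\Pow(\Atom(B))$ by $\varepsilon_B(b)=\{a\in\Atom(B):a\leq b\}$; this subset is finitely supported by $\supp_B(b)$, and $\varepsilon_B$ is equivariant. Its candidate inverse is $S\mapsto\bigvee S$, well defined because $S$ is finitely supported. That the two maps are mutually inverse rests on transporting two facts from the classical complete-atomic setting to the finitely supported one: atomicity together with completeness gives $b=\bigvee\varepsilon_B(b)$, and join-primeness of atoms, $a\leq\bigvee S\Rightarrow a\in S$ for finitely supported $S$, gives $\varepsilon_B(\bigvee S)=S$. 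The point requiring care is join-primeness; it follows from the distributive law $a\wedge\bigvee S=\bigvee_{s\in S}(a\wedge s)$, which I would verify in a ncaba for the finitely supported families at hand — exactly the regime covered by the ncaba axioms — using only finite boolean operations and the universal property of finitely supported suprema. A direct check that $\varepsilon_B$ preserves all boolean operations and finitely supported suprema then makes it a ncaba isomorphism, and naturality of $\varepsilon$ and of the unit $\eta_X\colon X\to\Atom(\Pow X)$, $x\mapsto\{x\}$, is routine.

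For the restriction to the equivariant subcategories, recall that a finitely supported map is equivariant precisely when it has empty support. The hom-set bijection established above preserves supports in both directions, since $\supp(f^{-1})\seq\supp(f)$ and the reconstructed $f$ satisfies $\supp(f)\seq\supp(h)$; moreover $\eta_X$ and $\varepsilon_B$ are equivariant. Consequently $P$ restricts to a dual equivalence between $\Nom$ and $\nCABA$, which is the second assertion of the theorem.
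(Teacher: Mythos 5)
Your proof is correct, and while it uses the same functor ($\Pow$ on objects, preimage on morphisms) and essentially the same fullness and faithfulness arguments as the paper, its overall logical structure is reversed. The paper takes Petri\c{s}an's equivariant duality between $\nCABA$ and $\Nom$ as a black box and \emph{extends} it to the finitely supported categories; in particular, essential surjectivity of $\Nomfs^{\op}\to\nCABAfs$ comes for free there, since on objects the functor coincides with the cited equivalence. You instead prove the finitely supported duality from scratch --- supplying the representation theorem $B\cong\Pow(\Atom(B))$ yourself, via $\varepsilon_B(b)=\{a\in\Atom(B): a\leq b\}$ with inverse $S\mapsto\bigvee S$ --- and then \emph{derive} the equivariant restriction by tracking supports through the hom-set bijection (using that equivariant $=$ empty support, that both directions of the bijection are support-nonincreasing, and that $\eta_X$ and $\varepsilon_B$ are equivariant). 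The delicate point you correctly isolate is join-primeness of atoms, which in this setting needs the infinite distributive law $a\wedge\bigvee S=\bigvee_{s\in S}(a\wedge s)$ for \emph{finitely supported} families; the standard boolean argument ($s\leq u\vee\neg a$ for each $s$, where $u=\bigvee_{s\in S}(a\wedge s)$, hence $\bigvee S\leq u\vee\neg a$) does go through, since the family $\{a\wedge s: s\in S\}$ is supported by $\supp(a)\cup\supp(S)$ and only the universal property of finitely supported suprema is used. What each route buys: the paper's proof is shorter because it delegates the representation-theoretic core to the literature, whereas yours is self-contained and in fact yields an independent proof of Petri\c{s}an's equivariant duality as a corollary, at the cost of verifying the atomicity and distributivity lemmas explicitly.
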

The restricted duality is due to Petri\c{s}an~\cite[Prop.~5.3.11]{petrisan12}.
\begin{rem}\label{rem:nom_vs_ncaba}
\begin{enumerate}
\item The equivalence functor $\Nomfs\xra{\simeq}\nCABAfs^{\op}$ maps a
  nominal set $X$ to the ncaba $\Pow X$ of finitely supported subsets
  of $X$ (equipped with the set-theoretic boolean operations), and a
  finitely supported map $f\colon X\to Y$ to the morphism
  $f^{-1}\colon \Pow Y\to \Pow X$ taking preimages. The inverse
  equivalence functor $\nCABAfs^{\op} \xra{\simeq} \Nomfs$ maps an ncaba $B$ to the
  equivariant subset $\mathsf{At}(B)$ of its atoms, with group action
  restricting the one of $B$.

\item The dual equivalence restricts to one between the full
  subcategories of orbit-finite nominal sets and \emph{atom-finite}
  ncabas, i.e.~ncabas whose set of atoms is orbit-finite. For
  atom-finite ncabas the property that every finitely supported subset has a supremum is equivalent to the
  weaker requirement that for every finite set $S\seq \At$, every
  {$S$-orbit} has a supremum. Indeed, given a finitely supported
  subset $X\seq B$ (say with finite support $S\seq \At$), put
  $X' := \{\,a\in \mathsf{At}(B): a\leq x\text{ for some $x\in
    X$}\,\}$.
Since $\leq$ is an equivariant relation, $X' \subseteq \mathsf{At}(B)$ is a subset with finite support
  $S$. Since $\mathsf{At}(B)$ is orbit-finite and thus has only finitely
  many $S$-orbits by \autoref{lem:s_orbit_finite}, we can express
  $X'$ as a finite union
  $X'=X_1'\cup\ldots\cup X_n'$ of $S$-orbits. Using that every element of $B$ is the join of the finitely supported set of all atoms below it, it follows that
  $\bigvee X = \bigvee X' = \bigvee X_1' \vee \ldots \vee \bigvee
  X_n'$, so $\bigvee X$ is a finite join of
  joins of $S$-orbits.
\end{enumerate}
\end{rem}

\subsection{Varieties of Data Languages}

For the notion of a \emph{language} over an alphabet $\Sigma\in \Nom$ and the corresponding concept of algebraic recognition by nominal monoids, there are two natural choices: consider equivariant subsets
$L\seq \Sigma^*$ and their recognition by equivariant monoid morphisms~\cite{bkl14,cpl15}, or consider finitely supported subsets $L\seq \Sigma^*$ and their recognition by finitely supported monoid morphisms \cite{boj2013}. 
For our duality-based approach to data languages,  it turns out that we need to work with an intermediate concept: finitely supported languages recognizable by equivariant monoid morphisms (see the discussion in  \autoref{rem:nom_vs_nomfs} below). That is, we work
with the following
\begin{defn}\label{D:reclang}
  A \emph{data language} over the alphabet $\Sigma\in \Nom$ is a
  finitely supported map $L\colon \Sigma^*\to 2$. It is
  \emph{recognized} by an equivariant monoid morphism
  $e\colon \Sigma^*\to M$ if there exists a finitely supported map
  $p\colon M\to 2$ with $L=p\o e$. In this case, we also say that $M$
  \emph{recognizes} $L$. A data language is \emph{recognizable} if it
  recognized by some orbit-finite nominal monoid.
\end{defn}
\begin{rem}\label{rem:discrete}
\begin{enumerate}
\item Identifying finitely supported maps into $2$ with finitely
  supported subsets, \autoref{D:reclang} can be restated: an
  equivariant monoid morphism $e\colon \Sigma^*\to M$
  \emph{recognizes} a language $L\seq\Sigma^*$ if there exists a finitely
  supported subset $P\seq M^*$ with $L=e^{-1}[P]$.
\item If $L$ is an equivariant recognizable language, then
  $p$ in \autoref{D:reclang} is also equivariant. Therefore, for 
  equivariant languages we recover the notion of recognition
  of~\cite{bkl14,cpl15}. 
  
\item If $\Sigma$ is a finite set (viewed as an orbit-finite discrete
  nominal set), a data language is just an ordinary formal language over
  the alphabet $\Sigma$. Indeed, the free nominal
  monoid $\Sigma^*$ is discrete, and thus every subset of $\Sigma^*$
  is finitely supported. Moreover, every orbit-finite nominal quotient
  monoid of $\Sigma^*$ is discrete and finite. Hence, the above notion
  of language recognition coincides with the classical recognition by
  finite monoids. In particular, for finite $\Sigma$, a recognizable
  data language is the same as a regular language.
\end{enumerate}
\end{rem}

\begin{example}\label{ex:reclanguages}
Examples of recognizable data languages over the alphabet $\Sigma=\At$ include 
(1)~every finite or cofinite subset $L\seq \At^*$ (see
\autoref{rem:localvar} below), (2)~$a\At^*$ for a fixed atom $a\in
\At$, and (3)~$\bigcup_{a\in \At} \At^*aa\At^*$.  The languages (4)~$\{a_1\ldots a_n\;:\; a_i\neq a_j \text{ for } i\neq j \}$, (5)~$\bigcup_{a\in \At} a\At^*a\At^*$, and (6)~$\At^*a\At^*$ for a fixed
atom $a\in \At$ are  not recognizable. The equivariant examples (3)--(5) are  taken from \cite{boj2013,bkl14}.
\end{example}
In previous work~\cite{ammu14} we have given a categorical account of
\emph{local varieties of regular languages} \cite{ggp08},  i.e.~sets of regular
languages over a fixed finite alphabet $\Sigma$ closed under the
set-theoretic boolean operations (finite union, finite intersection,
complement) and derivatives. This concept can be generalized to data languages. The \emph{derivatives} of a data language
$L\seq\Sigma^*$ with respect to a word $w\in \Sigma^*$ are given by
\[w^{-1}L = \{\,v\in \Sigma^*\;:\; wv\in L\,\} \quad\text{and}\quad Lw^{-1} = \{\,v\in \Sigma^*\;:\; vw\in L\,\}.\]
Since $\supp(w^{-1}L),\supp(Lw^{-1})\seq \supp(w)\cup \supp(L)$, the derivatives are again data languages.

\begin{defn}[Local Variety of Data Languages]\label{def:localvar}
  Let $\Sigma\in \Nom$. A \emph{local variety of data languages over
    $\Sigma$} is an equivariant set $\lvar_\Sigma\seq \Pow\Sigma^*$
  of recognizable data languages closed under the set-theoretic boolean
  operations, unions of $S$-orbits for every finite set $S\seq \At$ of atoms
  (that is, for every $L\in \lvar_\Sigma$ the language
  $\bigcup_{\pi \in \Perm_S(\At)} \pi\o L$ lies in $\lvar_\Sigma$),
  and derivatives.
\end{defn}
\begin{rem}\label{rem:localvar}
\begin{enumerate}
\item If $\Sigma$ is a finite set (viewed as a discrete nominal set), then by
  \autoref{rem:discrete} a local variety $\lvar_\Sigma$ consists of
  regular languages, and the closure under unions of $S$-orbits
  is trivial: since $\Pow\Sigma^*$ is discrete, every $S$-orbit has a
  single element. Thus, in this case, a local variety of data
  languages is precisely a local variety of regular languages.

\item However, in general the closedness under unions of $S$-orbits
  cannot be dropped, as it is neither trivial nor implied by the other
  conditions. To see this, consider the alphabet $\Sigma=\At$ and the
  equivariant set $\lvar_\At\seq \Pow\At^*$ of all finite or
  cofinite subsets of $\At^*$. Note that every finite language
  $L\seq \At^*$ is recognizable: let $n \geq 1$ be an upper bound on
  the length of words in $L$, and take the orbit-finite monoid
  $M=\At^{\leq n} \cup \{0\}$ consisting of all words over $\At$ of
  length at most $n$, and a zero element $0$. The multiplication
  $\bullet$ is defined as follows: given $v,w\in \At^{\leq n}$, if the
  word $vw$ has length at most $n$, put $v\bullet w = vw$. Otherwise,
  put $v\bullet w = 0$. Then the equivariant monoid morphism $e\colon \At^*\to M$ extending $a\mapsto a$
  recognizes $L$ since $L=e^{-1}[L]$. It follows that also
  $\At^*\setminus L = e^{-1}[M\setminus L]$. This shows that every
  language in $\lvar_\At$ is recognizable. Moreover, clearly
  $\lvar_\At$ is closed under the set-theoretic boolean operations
  and derivatives. However, the languages $\{a\}$, $a\in \At$, form an
  orbit in $\lvar_\At$, but their union
  $\At=\bigcup_{a\in \At} \{a\}$ is not in $\lvar_\At$. Thus
  $\lvar_\At$ is not a local variety of data languages in the sense
  of \autoref{def:localvar}.
\end{enumerate}
\end{rem}
A local variety $\lvar_\Sigma$ is generally not a subobject of $\Pow\Sigma^*$ in $\nCABA$, because it is not required to be closed under unions of arbitrary finitely supported subsets and also not necessarily atomic as a boolean algebra. However, if the atomic languages in $\lvar_\Sigma$ form an orbit-finite subset and every language in $\lvar_\Sigma$ contains some atomic language, then $\lvar_\Sigma$ is an atom-finite subobject of $\Pow\Sigma^*$, see \autoref{rem:nom_vs_ncaba}\ref{rem:nom_vs_nomfs:2}. In this case, we call $\lvar_\Sigma$ an \emph{atom-finite} local variety.
\begin{theorem}[Finite Local Variety Theorem]\label{thm:finitelocalvar}
  The lattice of atom-finite local varieties of data languages over
  $\Sigma$ is dually isomorphic to the lattice of $\Sigma$-generated 
  orbit-finite monoids. 
\end{theorem}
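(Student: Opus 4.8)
The plan is to present both lattices as the two faces of the restricted nominal Stone duality between orbit-finite nominal sets and atom-finite ncabas (\autoref{rem:nom_vs_ncaba}), under the assignment that sends a $\Sigma$-generated orbit-finite monoid $e\colon \Sigma^*\epito M$ to the set $\lvar_\Sigma(e) := \{\,e^{-1}[P] : P\in\Pow M\,\}$ of languages it recognizes. First I would check that this assignment lands in atom-finite local varieties. Since $e$ is a surjective equivariant map, $e^{-1}\colon \Pow M\monoto\Pow\Sigma^*$ is an injective $\nCABA$-morphism, so $\lvar_\Sigma(e)$ is a subobject of $\Pow\Sigma^*$ whose atoms are the languages $e^{-1}[\{m\}]$ ($m\in M$); as $M$ is orbit-finite these form an orbit-finite set, so the subobject is atom-finite. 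Being a sub-ncaba, $\lvar_\Sigma(e)$ is automatically closed under the boolean operations and under suprema of finitely supported subsets; in particular closure under unions of $S$-orbits comes for free, since the $S$-orbit of a language $L$ is an $S$-supported subset of $\Pow\Sigma^*$ with supremum $\bigcup_{\pi\in\Perm_S(\At)}\pi\o L$. Closure under derivatives is the only genuine point and follows from the monoid structure via $w^{-1}L = e^{-1}[(e(w)\bullet\dash)^{-1}[P]]$ and $Lw^{-1} = e^{-1}[(\dash\bullet e(w))^{-1}[P]]$, the inner subsets being finitely supported. Each $e^{-1}[P]$ is recognized by $M$, so $\lvar_\Sigma(e)$ is indeed an atom-finite local variety.

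For the inverse, given an atom-finite local variety $\lvar_\Sigma$ — by definition an atom-finite subobject of $\Pow\Sigma^*$ in $\nCABA$ — I would dualize the inclusion $\lvar_\Sigma\monoto\Pow\Sigma^*$ using \autoref{rem:nom_vs_ncaba} to obtain an orbit-finite quotient $e\colon\Sigma^*\epito M$ with $M=\Atom(\lvar_\Sigma)$, for which duality gives $\lvar_\Sigma = \{\,e^{-1}[P] : P\in\Pow M\,\}$, i.e.\ $\lvar_\Sigma$ is exactly the set of $e$-saturated data languages. The task is then to equip $e$ with a monoid structure, and here I invoke \autoref{prop:unarypres}: it suffices that the finitely supported maps $w\o\dash$ and $\dash\o w$ on $\Sigma^*$ lift along $e$. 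This rests on a bridging lemma: for a finitely supported endomap $f$ of $\Sigma^*$, the map $f$ descends along $e$ (i.e.\ $e\o f = g\o e$ for some finitely supported $g\colon M\to M$) if and only if $f^{-1}$ maps the subobject $\lvar_\Sigma$ into itself. The nontrivial implication is that preservation forces descent: if $e(u)=e(v)$ but $e(f(u))\neq e(f(v))$, then the $e$-saturated language $e^{-1}[\{e(f(u))\}]$ lies in $\lvar_\Sigma$ while its $f$-preimage contains $u$ and excludes $v$, hence is not $e$-saturated, contradicting closure under $f^{-1}$. Applying this to $f = w\o\dash$ and $f = \dash\o w$, whose duals are exactly the left and right derivatives $w^{-1}(\dash)$ and $(\dash)w^{-1}$, the assumed closure of $\lvar_\Sigma$ under derivatives produces the lifts, and \autoref{prop:unarypres} yields a (necessarily unique) monoid structure on $e$.

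It remains to verify that the two assignments are mutually inverse and order-reversing. Both round-trips are identities by duality: $\mathrm{im}(e^{-1})$ is the subobject dual to $e$ and hence redualizes to $e$ (the monoid structure being recovered by its uniqueness), and conversely $\lvar_\Sigma = \mathrm{im}(e^{-1})$ was observed above. For the order, $e\leq e'$ means $e'=q\o e$ for a quotient $q\colon M\epito M'$, so $\lvar_\Sigma(e')=\mathrm{im}(e^{-1}\o q^{-1})\seq\mathrm{im}(e^{-1})=\lvar_\Sigma(e)$, and the converse inclusion-implies-factorization is again duality; thus the correspondence is a dual isomorphism of posets, and of lattices since both sides are lattices. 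I expect the main obstacle to be the bridging lemma, which must correctly mix the equivariant duality of \autoref{thm:nominalstonedual} with the finitely supported (non-equivariant) maps $w\o\dash$, $\dash\o w$ and their derivative duals; everything else reduces to bookkeeping with preimages and the automatic closure properties of sub-ncabas.
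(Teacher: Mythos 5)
Your proof is correct, and its skeleton is the same as the paper's: both match $\Sigma$-generated orbit-finite monoids with atom-finite subobjects of $\Pow\Sigma^*$ under the restricted nominal Stone duality, and both use \autoref{prop:unarypres} to translate the existence of a monoid structure on a quotient $e\colon\Sigma^*\epito M$ into closure of the dual subobject under derivatives. The genuine difference lies in how the finitely supported (non-equivariant) morphisms are handled. The paper obtains the two delicate steps abstractly from the dual equivalence of $\Nomfs$ and $\nCABAfs$ (\autoref{thm:nominalstonedual}, proved as \autoref{P:dualityfs}): the lifting square for $w\o\dash$ in $\Nomfs$ commutes iff its dual square for $w^{-1}(\dash)$ in $\nCABAfs$ does, and elements of $\lvar_\Sigma$ correspond to languages recognized by $e$ because morphisms $\one\to\lvar_\Sigma$ dualize to finitely supported maps $M\to 2$. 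You instead invoke only the equivariant duality $\Nom\simeq\nCABA^{\op}$ at the object level and re-prove the morphism-level facts by hand: your bridging lemma (a finitely supported endomap of $\Sigma^*$ descends along $e$ iff its preimage map preserves the algebra of $e$-saturated languages, via the saturation argument with $e^{-1}[\set{e(f(u))}]$), and the explicit computations $\lvar_\Sigma(e)=\set{e^{-1}[P] : P\in\Pow M}$ and $w^{-1}(e^{-1}[P])=e^{-1}[(e(w)\bullet\dash)^{-1}[P]]$. What each buys: the paper's route is shorter and exhibits the theorem as a formal consequence of the $\Nomfs$/$\nCABAfs$ duality (which is exactly the extension of Petri\c{s}an's equivariant duality that the paper establishes for this purpose), whereas your route is more elementary and self-contained, showing that the finitely-supported half of the duality is not strictly needed here --- at the cost of redoing, in this special case, what full faithfulness of the dual equivalence gives in general. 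Your closing bookkeeping (round-trips are identities because the monoid structure on a given quotient is unique by \autoref{prop:unarypres}; order-reversal via factorization, with the factorizing quotient automatically a monoid morphism) is also sound and makes explicit what the paper leaves implicit.
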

The isomorphism maps a $\Sigma$-generated orbit-finite monoid
$e\colon \Sigma^*\epito M$ to the atom-finite local variety of all
data languages recognized by $e$.
\begin{proof}
  By the duality of $\Nom$ and $\nCABA$, orbit-finite equivariant
  quotients $e\colon \Sigma^*\epito M$ of $\Sigma^*$ in $\Nom$ correspond
  bijectively to atom-finite subobjects
  $\lvar_\Sigma \monoto \Pow\Sigma^*$ in $\nCABA$, i.e.~atom-finite
  equivariant sets of languages closed under the set-theoretic boolean
  operations and unions of $S$-orbits for every finite $S\seq \At$.  By
  \autoref{prop:unarypres} and the dual equivalence of $\Nomfs$ and
  $\nCABAfs$, the map $e$ represents a nominal quotient monoid of
  $\Sigma^*$ if and only if $\lvar_\Sigma$ is closed under derivatives,
  i.e.~a local variety. The closure under left derivatives is
  illustrated by the two dual commutative squares below, where the
  left-hand one lives in $\Nomfs$ and the right-hand one in
  $\nCABAfs$.
  \[
    \xymatrix@C+1pc@R-0.5pc{
      \Sigma^* \ar@{->>}[d]_e \ar[r]^{w\o \dash} & \Sigma^* \ar@{->>}[d]^e \\
      M \ar@{-->}[r]_\exists & M
    }
    \qquad\qquad
    \xymatrix@C+1pc@R-0.5pc{
      \Pow\Sigma^* & \Pow\Sigma^* \ar[l]_{w^{-1}(\dash)} \\
      \lvar_\Sigma \ar@{ >->}[u]^\seq
      &
      \lvar_\Sigma \ar@{-->}[l]^{\exists} \ar@{ >->}[u]_\seq
    }
  \]
  The elements of $\lvar_\Sigma$ are precisely the languages recognized
  by $e$. Indeed, the former correspond to the morphisms
  $\one \to \lvar_\Sigma$ in $\nCABAfs$, where $\one$ is the free boolean
  algebra on one generator, the latter to the finitely supported maps
  $M\to 2$ in $\Nomfs$, and $\one$ and $2$ are dual objects.
\end{proof}
Recall that an \emph{ideal} in a poset is a downwards closed and
upwards directed subset. For the lattice of local varieties of data
languages over $\Sigma$ (ordered by inclusion), we obtain
\begin{lemma}\label{lem:localvar}
  The lattice of local varieties of data languages over $\Sigma$ is
  isomorphic to the lattice of ideals in the poset of atom-finite local varieties
  over $\Sigma$.
\end{lemma}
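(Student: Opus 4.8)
The plan is to present the lemma as an instance of \emph{ideal completion}: I would show that the atom-finite local varieties are precisely the finitely generated elements of the lattice of local varieties, and that every local variety is the directed union of the atom-finite ones it contains. Concretely, I would define two monotone maps and prove them mutually inverse. Send a local variety $\lvar_\Sigma$ to the set $\Phi(\lvar_\Sigma) := \{\,\WCat : \WCat \text{ an atom-finite local variety with } \WCat \seq \lvar_\Sigma\,\}$, and send an ideal $I$ of atom-finite local varieties to $\Psi(I) := \bigcup_{\WCat\in I}\WCat$. Both visibly respect inclusion, so everything reduces to well-definedness and $\Psi\Phi=\id$, $\Phi\Psi=\id$.

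The heart of the argument is the Key Lemma: every local variety $\lvar_\Sigma$ is the directed union of the atom-finite local varieties it contains. The crucial sub-claim is that \emph{any sub-local-variety $\WCat$ of an atom-finite local variety $\WCat_e$ is again atom-finite}. Since $\WCat_e$ is atomic with orbit-finitely many atoms, the atoms of $\WCat$ are the classes of the equivariant partition of $\mathsf{At}(\WCat_e)$ induced by $\WCat$; this yields an equivariant surjection $\mathsf{At}(\WCat_e)\epito \mathsf{At}(\WCat)$, so $\mathsf{At}(\WCat)$ is orbit-finite (orbit-finiteness is preserved by quotients), and every element of $\WCat$ lies above an atom of $\WCat_e$, hence above an atom of $\WCat$. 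Granting this, I obtain exhaustion: for $L\in\lvar_\Sigma$, recognizability provides a coimage $e\colon \Sigma^*\epito M$ recognizing $L$ with $M$ orbit-finite, so by \autoref{thm:finitelocalvar} the languages recognized by $e$ form an atom-finite local variety $\WCat_e\ni L$; the local variety generated by $L$ inside $\lvar_\Sigma$ is a sub-local-variety of $\WCat_e$, hence atom-finite, and is contained in $\lvar_\Sigma$. For directedness, two atom-finite $\WCat_1,\WCat_2\seq \lvar_\Sigma$ correspond to orbit-finite monoids $e_1,e_2$, and $\WCat_1\vee\WCat_2\seq \lvar_\Sigma$ is contained in the atom-finite local variety attached to the subdirect product (the coimage of $\langle e_1,e_2\rangle\colon \Sigma^*\to M_1\times M_2$, whose codomain is orbit-finite because orbit-finite monoids are closed under finite products and submonoids), hence atom-finite.

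Granting the Key Lemma, the remaining checks are routine. $\Phi(\lvar_\Sigma)$ is downwards closed by construction and upwards directed by the directedness just shown, hence an ideal. $\Psi(I)$ is a local variety because each closure operation (boolean operations, $S$-orbit unions, derivatives) involves only finitely many of its languages, which by directedness of $I$ all lie in a single $\WCat\in I$ whose closure properties then apply. The identity $\Psi\Phi=\id$ is exactly the Key Lemma. For $\Phi\Psi=\id$, the inclusion $I\seq\Phi(\Psi(I))$ is immediate; conversely, an atom-finite $\WCat\seq\Psi(I)$ is the local-variety closure of finitely many languages, namely one atom per orbit of its orbit-finitely many atoms (these generate $\WCat$ through boolean operations and $S$-orbit unions, since for atom-finite algebras arbitrary finitely supported joins reduce to finite joins of $S$-orbit joins by \autoref{rem:nom_vs_ncaba}). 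Each of these finitely many generators lies in some member of $I$, so by directedness all lie in a single $\WCat'\in I$ with $\WCat\seq\WCat'$, and downward closure of $I$ gives $\WCat\in I$.

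The main obstacle is the Key Lemma, and within it the sub-claim that a sub-local-variety of an atom-finite local variety is atom-finite; this is where I expect the real work, as one must control the atoms of the smaller algebra via the induced partition of the atoms of the larger one and invoke closure of orbit-finiteness under quotients (equivalently, via \autoref{thm:finitelocalvar}, that a quotient of a $\Sigma$-generated orbit-finite monoid is again orbit-finite). The companion fact that atom-finite local varieties are finitely generated, needed for $\Phi\Psi=\id$, is comparatively easy but rests on the reduction of finitely supported joins to finite joins of $S$-orbit joins from \autoref{rem:nom_vs_ncaba}.
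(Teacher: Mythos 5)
Your proposal is correct and takes essentially the same route as the paper: the same pair of mutually inverse maps (directed union of an ideal of atom-finite local varieties, and the ideal of all atom-finite local subvarieties), with upward directedness obtained from the subdirect product of the dual orbit-finite monoids via \autoref{thm:finitelocalvar}, exhaustion obtained from recognizability of each $L\in\lvar_\Sigma$, and the finite-generation-by-atoms argument (via \autoref{rem:nom_vs_ncaba}) for showing an atom-finite subvariety of the union lies in a single member of the ideal. The only real difference is presentational: you isolate as an explicit Key sub-claim the fact that a sub-local-variety of an atom-finite local variety is again atom-finite, which the paper uses silently (when asserting that $\mathcal{W}_\Sigma\cap\lvar_\Sigma$ is an atom-finite local subvariety), and your partition-of-atoms argument for that sub-claim is the right one.
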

The isomorphism maps a local variety $\lvar_\Sigma$ to the ideal of all
atom-finite local subvarieties of $\lvar_\Sigma$. Its inverse maps an
ideal $\{\,\lvar_{\Sigma,i} : i\in I \,\}$ in the poset of atom-finite
local varieties to the local variety
$\bigcup_{i\in I} \lvar_{\Sigma,i}$. In order-theoretic terms, the above lemma states that local
varieties of data languages form the ideal completion of the poset of
atom-finite local varieties. Using \autoref{thm:finitelocalvar},
\autoref{lem:localvar}, and the fact that ideals are dual to filters, we
obtain

\begin{theorem}[Local Variety Theorem]\label{T:localvar}
For each $\Sigma\in \Nom$, the lattice of local varieties of data languages over $\Sigma$ is dually isomorphic to the lattice of local pseudovarieties of $\Sigma$-generated nominal monoids.
\end{theorem}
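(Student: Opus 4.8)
The plan is to derive the stated dual isomorphism by chaining together two correspondences that are already available, together with the standard duality between ideals and filters. Write $A_\Sigma$ for the poset of atom-finite local varieties of data languages over $\Sigma$ and $B_\Sigma=\Sigma^*\mathord{\epidownarrow}\NomMon_\of$ for the poset of $\Sigma$-generated orbit-finite monoids. First I would record that, by \autoref{lem:localvar}, the lattice of local varieties over $\Sigma$ ordered by inclusion is isomorphic to the lattice of ideals of $A_\Sigma$, via the assignment sending a local variety to the set of its atom-finite local subvarieties, with inverse taking the union of an ideal. This step is an order isomorphism for inclusion and introduces no duality.

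Next I would apply \autoref{thm:finitelocalvar}, which supplies a dual isomorphism $A_\Sigma\cong B_\Sigma^{\op}$ sending an atom-finite local variety to the $\Sigma$-generated orbit-finite monoid recognizing exactly its languages. As an order-reversing bijection, this map carries every ideal of $A_\Sigma$ (a downward-closed, upward-directed subset) to an upward-closed, downward-directed subset of $B_\Sigma$, that is, to a filter of $B_\Sigma$; and since it is a bijection of underlying sets, the induced assignment on ideals/filters preserves inclusion. By definition, a filter of $B_\Sigma$ is precisely a local pseudovariety of $\Sigma$-generated nominal monoids, so ideals of $A_\Sigma$ correspond bijectively and inclusion-preservingly to local pseudovarieties.

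Composing the two correspondences identifies the lattice of local varieties over $\Sigma$, ordered by inclusion, with the lattice of local pseudovarieties ordered by inclusion of filters. To read off the \emph{dual} isomorphism claimed in the theorem, I would observe that the natural order on local pseudovarieties used in \autoref{thm:theories_vs_pseudovars} --- namely $\T_\Sigma\leq\T_\Sigma'$ iff each $e'\in\T_\Sigma'$ admits some $e\in\T_\Sigma$ with $e\leq e'$ --- agrees, for filters, with \emph{reverse} inclusion; hence the inclusion-preserving bijection above is order-reversing in this order, and is a dual isomorphism of lattices (an order isomorphism between lattices automatically preserves all existing meets and joins). Since the whole argument is a composite of cited results, I expect the only real difficulty to lie in this bookkeeping of order directions: confirming that exactly one duality enters (through \autoref{thm:finitelocalvar}), that both the ideal completion and the ideal-to-filter passage preserve inclusion, and that the two combine to the asserted dual isomorphism once the intended order on local pseudovarieties is fixed.
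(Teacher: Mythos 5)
Your proposal is correct and follows essentially the same route as the paper, whose proof is precisely the composite of \autoref{lem:localvar} (local varieties as ideals of atom-finite local varieties), \autoref{thm:finitelocalvar} (the dual isomorphism between atom-finite local varieties and $\Sigma$-generated orbit-finite monoids), and the observation that ideals correspond to filters under an order-reversing bijection. Your extra bookkeeping---checking that the order on local pseudovarieties from \autoref{thm:theories_vs_pseudovars} coincides with reverse inclusion on filters, so that exactly one duality enters---is a correct and welcome elaboration of what the paper leaves implicit.
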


\begin{rem}\label{rem:nom_vs_nomfs}
\begin{enumerate}
\item Since data languages are morphisms $L\colon \Sigma^*\to 2$ in
  $\Nomfs$, the reader may wonder why we do not entirely work in this
  category and use monoids with finitely supported multiplication and
  finitely supported monoid morphisms (rather than the equivariant
  ones) for the recognition of languages. The reason lies on
  the dual side: in the proof of \autoref{thm:finitelocalvar}, we used
  that equivariant injective maps $\lvar_\Sigma \monoto \Pow\Sigma^*$
  can be uniquely identified with equivariant subsets of
  $\Pow\Sigma^*$. In contrast, finitely supported injective maps
  $\lvar_\Sigma\monoto \Pow\Sigma^*$ do not correspond to the finitely
  supported (or any other kind of) subsets of $\Pow\Sigma^*$.
  
\item\label{rem:nom_vs_nomfs:2} Similarly, we cannot restrict
  ourselves to the category $\Nom$ and only consider equivariant
  languages $L\seq \Sigma^*$ rather than finitely supported
  ones. Indeed, the Finite Local Variety Theorem then fails: the map
  sending a $\Sigma$-generated orbit-finite monoid
  $e\colon \Sigma^*\epito M$ to the set of equivariant languages it
  recognizes is no longer bijective. To see this, consider the nominal
  monoids $M=\At\cup\{1\}$ with $a\bullet b = a$ for $a,b\in \At$, and
  $N=\{0,1\}$ with $0\bullet 0 = 0\bullet 1 = 1\bullet 0 = 0$. Then
  the two surjective morphisms $e\colon \At^*\epito M$, extending
  $a\mapsto a$, and $f\colon \At^*\epito N$, extending $a\mapsto 0$,
  recognize the same equivariant languages, namely $\At^*$,
  $\At^*\setminus \{\epsilon\}$, $\{\epsilon\}$ and $\emptyset$.
\end{enumerate}
\end{rem}
In the following, we consider data languages whose alphabet $\Sigma$
is an orbit-finite strong nominal set (see \autoref{rem:complete}\ref{rem:complete:3}). By dualizing the concept of an
equational theory, we obtain
\begin{defn}[Variety of data languages]\label{def:varlang}
  A \emph{variety of data languages} is a family
  \[
    \lvar = (\,\lvar_\Sigma\seq \Pow\Sigma^*\,)_{\Sigma\in\Nom_{\ofs}}
  \]
  of local varieties of data languages with the following two properties:
  \begin{enumerate}
  \item \emph{Closedness under preimages.} For each equivariant monoid
    morphism $h\colon \Delta^*\to \Sigma^*$ with
    $\Sigma,\Delta\in \Nom_{\ofs}$ and each $L\in \lvar_\Sigma$, one has
    $h^{-1}[L]\in \lvar_\Delta$.
    
  \item\label{def:varlang:2} \emph{Completeness.} For each atom-finite
    local subvariety $\lvar'_\Sigma\monoto \lvar_\Sigma$, there exists an
    equivariant monoid morphism $h\colon \Sigma^*\to \Delta^*$ and an
    atom-finite local subvariety $\lvar'_\Delta\monoto \lvar_\Delta$
    such that
    \begin{enumerate}
    \item\label{def:varlang:2:1} the map $L\mapsto h^{-1}[L]$ defines
      a bijection between $\lvar'_\Delta$ and $\lvar'_\Sigma$, and
    \item\label{def:varlang:2:2} every atomic language
      $L\in \lvar'_\Delta$ contains a word $w\in \Delta^*$ with
      $\supp_{\Pow\Delta^*}(L) = \supp_{\Delta^*}(w)$.
  \end{enumerate}  
\end{enumerate}
\end{defn}
\begin{rem}
  Except for the completeness condition, the above concept is
  analogous to Eilenberg's original notion of a variety of regular
  languages (i.e.~a family of local varieties of regular languages
  closed under preimages of monoid morphisms). In fact, if $\Sigma$ is
  a finite alphabet, and thus $\lvar_\Sigma$ is just a local variety
  of regular languages, completeness is trivial: given any finite
  local subvariety $\lvar'_\Sigma$ of $\lvar_\Sigma$, choose
  $\Delta=\Sigma$, $\lvar'_\Delta = \lvar'_\Sigma$, and
  $h=\id\colon \Sigma^*\to \Delta^*$. Then \ref{def:varlang:2:1} is
  clear, and \ref{def:varlang:2:2} holds because each
  $L\in \Pow\Sigma^*$ and each $w\in \Sigma^*$ has empty support.

  In general, however, the completeness property cannot be
  dropped. This follows from \autoref{rem:complete}\ref{rem:complete:2} and the
  observation that the completeness of a variety dualizes to the
  completeness of the corresponding equational theory (see the proof
  of \autoref{T:var}).
\end{rem}
We are ready to state the main result of our paper:
\begin{theorem}[Nominal Eilenberg Theorem]\label{T:var}
  Varieties of data languages and pseudovarieties of nominal monoids
  form isomorphic complete lattices.
\end{theorem}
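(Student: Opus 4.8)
The plan is to factor the desired isomorphism through the notion of an equational theory, exactly as indicated by the diagram in the introduction: first I would establish a dual isomorphism between the complete lattice of varieties of data languages and the complete lattice of (non-weak) equational theories by means of nominal Stone duality, and then compose it with the dual isomorphism between equational theories and pseudovarieties of nominal monoids supplied by \autoref{thm:theories_vs_pseudovars}. Since the composite of two order-reversing isomorphisms is order-preserving, this produces the required direct isomorphism of complete lattices.

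For the first isomorphism I would proceed componentwise. By the Local Variety Theorem (\autoref{T:localvar}), for each $\Sigma\in\Nom_{\ofs}$ the lattice of local varieties of data languages over $\Sigma$ is dually isomorphic to the lattice of local pseudovarieties of $\Sigma$-generated nominal monoids; write $\Phi_\Sigma$ for this assignment. A family $\lvar=(\lvar_\Sigma)$ is then sent to $\T=(\Phi_\Sigma(\lvar_\Sigma))$, and I must check that the two global side conditions correspond under $\Phi_\Sigma$. The match between \emph{closedness under preimages} (\autoref{def:varlang}, condition (1)) and \emph{substitution invariance} (\autoref{D:eqnth}\ref{D:eqnth:1}) is the routine half: for a morphism $h\colon\Delta^*\to\Sigma^*$, nominal Stone duality (\autoref{thm:nominalstonedual}) identifies the preimage map $h^{-1}\colon\Pow\Sigma^*\to\Pow\Delta^*$ with $h$ itself, and a short computation at the atom-finite level shows that the languages recognized by the coimage $e_\Delta$ of $e_\Sigma\o h$ are precisely $\{\,h^{-1}[L] : L \text{ recognized by } e_\Sigma\,\}$. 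Hence $h^{-1}[\lvar_\Sigma]\seq\lvar_\Delta$ holds for all $h$ iff the coimage of $e_\Sigma\o h$ lies in $\T_\Delta$ for all $h$ and all $e_\Sigma\in\T_\Sigma$; since preimages commute with unions it suffices to verify this on atom-finite local subvarieties, which is exactly where \autoref{thm:finitelocalvar} applies.

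The crux, and the step I expect to be the main obstacle, is showing that \emph{completeness} of a variety (\autoref{def:varlang}\ref{def:varlang:2}) dualizes precisely to \emph{completeness} of the corresponding equational theory (\autoref{D:eqnth}\ref{D:eqnth:2}). Under $\Phi_\Sigma$ and \autoref{thm:finitelocalvar}, an atom-finite local subvariety $\lvar'_\Sigma\monoto\lvar_\Sigma$ corresponds to a quotient $e_\Sigma\colon\Sigma^*\epito M_\Sigma$ in $\T_\Sigma$, its atomic languages being the preimages $e_\Sigma^{-1}[\{m\}]$ for $m\in M_\Sigma$. In the forward direction, condition \ref{def:varlang:2:2}, that every atomic language $L=e_\Delta^{-1}[\{m\}]\in\lvar'_\Delta$ contains a word of support $\supp(L)=\supp(m)$, translates verbatim into $e_\Delta$ being support-reflecting, while condition \ref{def:varlang:2:1}, that $L\mapsto h^{-1}[L]$ is a bijection $\lvar'_\Delta\to\lvar'_\Sigma$ of derivative-closed boolean algebras, dualizes (the derivative structure corresponding to the unary operations of \autoref{prop:unarypres}) to a monoid isomorphism $M_\Delta\cong M_\Sigma$, along which we identify $M_\Delta=M_\Sigma$; this is exactly equational-theory completeness. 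The delicate point is the converse: from the bare equality $M_\Delta=M_\Sigma$ demanded on the monoid side one must reconstruct a substitution morphism $h$ witnessing \ref{def:varlang:2:1}. Given quotients $e_\Sigma\colon\Sigma^*\epito M$ and $e_\Delta\colon\Delta^*\epito M$ with $e_\Delta$ support-reflecting, I would define an equivariant $h\colon\Sigma^*\to\Delta^*$ by lifting $e_\Sigma$ through $e_\Delta$, choosing for each orbit representative $\sigma$ of $\Sigma$ a word $h(\sigma)\in\Delta^*$ with $e_\Delta(h(\sigma))=e_\Sigma(\sigma)$ and $\supp(h(\sigma))=\supp(\sigma)$. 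Such choices exist precisely because $\Sigma$ is strong, hence ``free'' in the sense of \autoref{rem:complete}\ref{rem:complete:3} so that an equivariant morphism may be prescribed on orbit representatives with matching supports, and because $e_\Delta$ is support-reflecting. One then obtains $e_\Delta\o h=e_\Sigma$, so that $h^{-1}$ restricts to the required bijection $\lvar'_\Delta\to\lvar'_\Sigma$. This is the one place where the technical completeness condition and the restriction to strong generators are genuinely indispensable, and \autoref{rem:complete}\ref{rem:complete:2} already demonstrates that completeness cannot be dropped, so no shortcut bypassing it is available.

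Finally, the two assignments $\lvar\mapsto\T$ and $\T\mapsto\lvar$ are mutually inverse, and they are order-reversing because each $\Phi_\Sigma$ is and the orderings on both sides (inclusion of local varieties, respectively the ordering $\leq$ of \autoref{D:eqnth}) are defined componentwise. An order-reversing bijection of posets transports lattice structure, so in conjunction with \autoref{thm:theories_vs_pseudovars} (which makes equational theories a complete lattice) this exhibits varieties of data languages as a complete lattice dually isomorphic to that of equational theories. Composing this dual isomorphism with the dual isomorphism of \autoref{thm:theories_vs_pseudovars} between equational theories and pseudovarieties of nominal monoids yields the desired order-preserving isomorphism of complete lattices between varieties of data languages and pseudovarieties of nominal monoids, as claimed.
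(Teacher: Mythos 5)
Your proposal is correct and follows essentially the same route as the paper's own proof: dualize each local variety to a local pseudovariety via \autoref{T:localvar}, show that closure under preimages corresponds to substitution invariance and that the two completeness conditions correspond to one another, and conclude with \autoref{thm:theories_vs_pseudovars}. Your inline construction of the lifting $h$ along a support-reflecting quotient is exactly the projectivity statement the paper invokes as \autoref{lem:supprefl} (note only that one needs, and gets, $\supp(h(\sigma))\seq\supp(\sigma)$ rather than equality, since support-reflection yields $\supp(h(\sigma))=\supp(e_\Sigma(\sigma))$).
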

\vspace{-0.05cm}
The isomorphism maps a variety $\lvar$ of data languages to the pseudovariety $\pvar$ of all orbit-finite nominal monoids that recognize only languages from $\lvar$. Its inverse maps a pseudovariety $\pvar$ to the variety $\lvar$ of all data languages recognized by some monoid in $\pvar$.

\begin{proof}[Proof sketch]
  We observe that the concept of a variety is dual to that of an equational
  theory. Indeed, by the Local Variety \autoref{T:localvar}, a family
  $\lvar = (\,\lvar_\Sigma\seq \Pow\Sigma^*\,)_{\Sigma\in\Nom_{\ofs}}$ of
  local varie\-ties of data languages bijectively corresponds to a
  family
  $\T = (\,\T_\Sigma\seq \Sigma^*\mathord{\epidownarrow}
  \NomMon_{\of}\,)_{\Sigma\in \Nom_{\ofs}}$ of local pseudovarieties
  of nominal monoids. One then shows that (1) $\T$ is
  substitution-invariant if and only if $\lvar$ is closed under
  preimages, and (2) $\T$ is complete if and only if $\lvar$ is
  complete. In particular, $\T$ is a theory if and only if $\lvar$ is a
  variety of data languages. Since theories correspond to pseudovarieties by \autoref{thm:theories_vs_pseudovars}, this proves the theorem. 
\end{proof}

\section{Adding Expressivity: Regular Data Languages}\label{sec:regdata}
As recognizing structures for data languages, nominal monoids are of
limited expressivity; in particular, they are strictly weaker than
deterministic automata in the category of nominal
sets~\cite{bkl14,boj2013}. Therefore, we now show how to extend our
results for monoid-recognizable data languages and establish a local
variety theorem for languages accepted by nominal automata.
\begin{defn}
  Fix an input alphabet $\Sigma\in \Nom$. A \emph{nominal $\Sigma$-automaton} $A=(Q,\delta,q_0)$ consists of a
  nominal set $Q$ of states, an equivariant transition map
  $\delta\colon Q\times \Sigma\to Q$, and an initial state $q_0\in Q$
  with empty support. It is called \emph{orbit-finite} if $Q$ is
  orbit-finite. A \emph{morphism} between nominal automata
  $A=(Q,\delta,q_0)$ and $A'=(Q',\delta',q_0')$ is an equivariant map
  $h\colon Q\to Q'$ such that $\delta(h(q),a)=h(\delta(q,a))$ for all
  $q\in Q$ and $a\in \Sigma$, and $h(q_0)=q_0'$.
\end{defn}
The \emph{initial nominal $\Sigma$-automaton} is given by
$I=(\Sigma^*, \delta, \epsilon)$ with transition map
$\delta(w,a) = wa$ for $w\in \Sigma^*$ and $a\in \Sigma$. It is
characterized by the universal property that for every {nominal
  $\Sigma$-automaton} $A=(Q,\delta,q_0)$, there exists a unique morphism
$e_A\colon I\to A$, sending a word $w\in \Sigma^*$ to the state
reached from $q_0$ after reading $w$. The automaton $A$ is called
\emph{reachable} if $e_A$ is surjective. A data language
$L\seq\Sigma^*$ is \emph{accepted} by $A$ if there exists a finitely
supported subset $F\seq Q$ with $L=e_A^{-1}[F]$. This corresponds to
the usual notion of acceptance of an automaton with final states $F$:
the language $L$ consists of all words $w\in \Sigma^*$ such that $A$
reaches a state in $F$ after reading $w$. A data language
$L\seq\Sigma^*$ is called \emph{regular} if there exists an
orbit-finite nominal automaton accepting it. In analogy to \autoref{prop:unarypres}, we get
\begin{proposition}[Unary presentation for nominal automata]\label{prop:unarypresaut}
  For every surjective equivariant map $e\colon \Sigma^* \epito Q$, the
  following statements are equivalent:
  \begin{enumerate}
  \item there exists a nominal automaton $A=(Q,\delta,q_0)$ with
    states $Q$ such that $e=e_A$;
  \item the maps $\Sigma^*\xra{\dash\o w} \Sigma^*$ $(w\in \Sigma^*)$
    lift along $e$, i.e.~there exist (necessarly unique) maps
    $r_w\colon Q \to Q$ such that $e\o (\dash\o w) = r_w\o e$ for all $w\in \Sigma^*$.
    \enlargethispage{4pt}
  \end{enumerate}
\end{proposition}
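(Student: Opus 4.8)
The plan is to mirror the proof of the monoid analogue \autoref{prop:unarypres}, using the universal property of the initial automaton $I=(\Sigma^*,\delta,\epsilon)$. Throughout I exploit that $e$ is surjective and equivariant: every state has the form $e(v)$ for some $v\in\Sigma^*$, and consequently any map $Q\to Q$ lifting a map $\Sigma^*\to\Sigma^*$ along $e$ is uniquely determined, which also settles the uniqueness clause in (2).

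For the implication (1)\,$\To$\,(2), assume $e=e_A$ for some automaton $A=(Q,\delta,q_0)$, and let $\delta^*\colon Q\times\Sigma^*\to Q$ be the extended transition map given by $\delta^*(q,\epsilon)=q$ and $\delta^*(q,wa)=\delta(\delta^*(q,w),a)$. Since $e_A(vw)=\delta^*(e_A(v),w)$, the map $r_w:=\delta^*(\dash,w)\colon Q\to Q$ satisfies $e(vw)=r_w(e(v))$ for all $v$, that is $e\o(\dash\o w)=r_w\o e$, so $r_w$ is the required lift.

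For the converse (2)\,$\To$\,(1), I first observe that it suffices to lift the single-letter translations $\dash\o a$ for $a\in\Sigma$: writing $w=a_1\cdots a_n$ one has $\dash\o w=(\dash\o a_n)\o\cdots\o(\dash\o a_1)$ and lifts compose, so lifting single letters is equivalent to lifting all words (compare \autoref{rem:localpseudovar_unpres}). I then define $\delta\colon Q\times\Sigma\to Q$ by $\delta(q,a):=r_a(q)$ and set $q_0:=e(\epsilon)$, and verify three points: $q_0$ has empty support, because $\epsilon$ does and $e$ is equivariant; the map $e$ is a morphism of automata $I\to A$ for $A=(Q,\delta,q_0)$, since $e(\epsilon)=q_0$ holds by definition and $e(wa)=r_a(e(w))=\delta(e(w),a)$ is just the lifting equation; and hence $e=e_A$ by the universal property of $I$.

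The step I expect to be the main obstacle is the equivariance of $\delta$, which is needed before $A$ counts as a nominal automaton at all. The subtlety is that each individual lift $r_a$ is only finitely supported (with support contained in that of $a$, exactly as for $l_w,r_w$ in \autoref{prop:unarypres}), not equivariant, yet the combined map $\delta$ must be equivariant. The resolution is to evaluate on a state $q=e(v)$ and combine the lifting equation $r_a(e(v))=e(va)$ with equivariance of $e$, giving $\delta(\pi\o q,\pi\o a)=r_{\pi\o a}(e(\pi\o v))=e((\pi\o v)(\pi\o a))=\pi\o e(va)=\pi\o\delta(q,a)$. This is precisely the mechanism already responsible for the equivariance of the multiplication in the monoid case.
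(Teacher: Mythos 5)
Your proof is correct and takes essentially the same route as the paper: the paper establishes this proposition purely by appeal to the analogy with \autoref{prop:unarypres}, and your argument is exactly that analogy carried out — explicit lifts $r_w=\delta^*(\dash,w)$ for (1)$\To$(2), and for (2)$\To$(1) the structure $\delta(q,a):=r_a(q)$, $q_0:=e(\epsilon)$, with equivariance of $\delta$ deduced from surjectivity and equivariance of $e$ by the same computation that gives equivariance of the multiplication in the monoid case. The only (harmless) cosmetic difference is that the automaton setting needs no well-definedness check, since $\delta$ is defined directly from the lifts rather than via representatives.
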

Define a \emph{local pseudovariety of nominal $\Sigma$-automata}
to be a class $\pvar_\Sigma$ of orbit-finite reachable nominal
$\Sigma$-automata such that (1) $\pvar_\Sigma$ is closed under quotients (represented by surjective automata morphisms), and (2) for every pair $A,B\in \pvar_\Sigma$, the reachable
  part of the product $A\times B$ lies in $\pvar_\Sigma$. Here, the product of two nominal automata $A=(Q,\delta,q_0)$ and
$B=(Q',\delta',q_0')$ is given by
$A\times B = (Q\times Q', \delta, (q_0,q_0'))$ with
$\delta ((q,q'),a) = (\delta(q,a), \delta'(q',a))$ for
$(q,q')\in Q\times Q'$ and $a\in \Sigma$, and the reachable part $R$ of $A\times B$ is the coimage $e\colon \Sigma^*\epito R$ of the unique morphism $e_{A\times B}\colon \Sigma^*\to A\times B$. Note that a local
pseudovariety corresponds precisely to a filter in the poset
$\Sigma^*\mathord{\epidownarrow} \Sigma\text{-}\mathbf{nAut}_\of$
of orbit-finite reachable nominal $\Sigma$-automata. The dual version
of this concept is the one of a \emph{local variety of regular data
  languages over $\Sigma$}: an equivariant set $\lvar_\Sigma\seq\Pow\Sigma^*$ of
such languages closed under the set-theoretic boolean operations,
unions of $S$-orbits for every finite set $S\seq\At$ of atoms, and
\emph{right} derivatives. The following theorem, and its proof, are
completely analogous to \autoref{T:localvar}:
\begin{theorem}[Local Variety Theorem for Regular Data Languages]\label{thm:locvar_reg}
For each $\Sigma\in \Nom$, the lattice of local varieties of regular data languages over $\Sigma$ is dually isomorphic to the lattice of local pseudovarieties of nominal $\Sigma$-automata.
\end{theorem}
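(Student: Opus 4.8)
The plan is to follow the proof of \autoref{T:localvar} almost verbatim, the sole difference being that the two-sided lifting condition characterizing monoid quotients is replaced by the one-sided lifting condition characterizing automata. Accordingly, I would first establish a finite version, directly analogous to \autoref{thm:finitelocalvar}: the lattice of \emph{atom-finite} local varieties of regular data languages over $\Sigma$ is dually isomorphic to the lattice of $\Sigma$-generated orbit-finite reachable nominal automata, i.e.\ orbit-finite quotients $e\colon \Sigma^*\epito Q$ in $\Nom$ arising as $e = e_A$ for some reachable automaton $A$. The full theorem would then follow by taking ideal completions on both sides and invoking the ideal/filter duality, exactly as \autoref{T:localvar} is deduced from \autoref{thm:finitelocalvar} and \autoref{lem:localvar}.

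For the finite version, I would argue by nominal Stone duality. By \autoref{thm:nominalstonedual} together with \autoref{rem:nom_vs_ncaba}, orbit-finite equivariant quotients $e\colon \Sigma^*\epito Q$ correspond bijectively to atom-finite subobjects $\lvar_\Sigma\monoto \Pow\Sigma^*$ in $\nCABA$, that is, to atom-finite equivariant sets of languages closed under the set-theoretic boolean operations and under unions of $S$-orbits for every finite $S\seq\At$. The key step is to apply \autoref{prop:unarypresaut} in place of \autoref{prop:unarypres}: by the dual equivalence of $\Nomfs$ and $\nCABAfs$, the quotient $e$ carries a reachable nominal automaton if and only if each right-multiplication map $\dash\o w$ $(w\in\Sigma^*)$ lifts along $e$, and this lifting condition dualizes precisely to the closure of $\lvar_\Sigma$ under \emph{right} derivatives $(\dash)w^{-1}$. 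Reachability is automatic, since $e$ is surjective. Finally, the elements of $\lvar_\Sigma$ are exactly the languages accepted by $A$: morphisms $\one\to\lvar_\Sigma$ in $\nCABAfs$ (with $\one$ the free boolean algebra on one generator) correspond under duality to finitely supported maps $Q\to 2$ in $\Nomfs$, i.e.\ to finitely supported sets of final states $F\seq Q$, and $e_A^{-1}[F]$ is precisely the language accepted by $A$ with final states $F$.

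To pass from the finite version to the full statement, I would prove the ideal-completion lemma analogous to \autoref{lem:localvar}: every local variety of regular data languages over $\Sigma$ is the directed union of its atom-finite local subvarieties, whence the lattice of local varieties is the ideal completion of the poset of atom-finite ones. Combining this with the finite version and the fact that ideals in a poset are dually isomorphic to filters yields that local varieties of regular data languages are dually isomorphic to filters in $\Sigma^*\mathord{\epidownarrow}\Sigma\text{-}\mathbf{nAut}_\of$, i.e.\ to local pseudovarieties of nominal $\Sigma$-automata. The only genuinely new point relative to \autoref{T:localvar} is the verification that the one-sided lifting condition of \autoref{prop:unarypresaut} dualizes to closure under right derivatives \emph{alone} (rather than all derivatives); this is where I would be most careful, since it is precisely the asymmetry of automaton transitions---reading letters only on the right---that distinguishes regular data languages from monoid-recognizable ones. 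Everything else, including the duality bookkeeping, the accepted-language identification, and the order-theoretic ideal completion, transfers from the monoid case without change.
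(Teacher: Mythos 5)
Your proposal is correct and is precisely the paper's intended argument: the paper establishes \autoref{thm:locvar_reg} by declaring its proof ``completely analogous'' to that of \autoref{T:localvar}, i.e., by repeating the chain \autoref{thm:finitelocalvar} $+$ \autoref{lem:localvar} $+$ ideal/filter duality with \autoref{prop:unarypresaut} in place of \autoref{prop:unarypres}. Your identification of the one genuinely new point---that the one-sided lifting condition dualizes to closure under right derivatives alone---is exactly the substance of that analogy.
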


\section{Conclusions and Future Work}\label{S:future}

We have demonstrated that two cornerstones of the algebraic theory of
regular languages, Eilenberg's variety theorem and Eilenberg and
Schützenberger's axiomatic characterization of pseudovarieties, can be
generalized to data languages recognizable by orbit-finite
monoids. Our results are the first of this type for data languages,
and thus the present work makes a contribution
towards developing a fully fledged algebraic theory of such
languages. In a broader sense, the approach taken in this paper can be
seen as a further illustration of the power of duality in formal
language theory: we believe that without the guidance given by nominal
Stone duality, it would have been significantly harder to even come up
with the suitable notion of a variety of data languages that makes the
nominal Eilenberg correspondence work. The duality-based approach thus adds
much conceptual clarity and simplicity. There remain several 
research questions and interesting directions for future work.

As indicated in \autoref{sec:regdata}, the techniques
used in our paper can be
adapted without much effort to languages recognized by nominal
algebraic structures other than monoids, including deterministic
nominal automata. As a first step, we aim to extend the local variety
theorem for regular data languages (\autoref{thm:locvar_reg}) to a
full Eilenberg correspondence. It remains an important goal to further extend our results to more powerful classes of data
languages.

Our proof of the (local) Eilenberg correspondence rests on the
observation that a local variety of data languages can be expressed as
the directed union of its atom-finite subvarieties. From a category theoretic perspective, this suggests that
local varieties are formed within the \emph{$\mathsf{Ind}$-completion}
(i.e.~the free completion under directed colimits) of the category of
atom-finite nominal complete atomic boolean algebras. We conjecture that this
completion can be described as a category of nominal boolean algebras
with joins of $S$-orbits for each finite set $S$ of atoms. On the dual
side, we expect that the \emph{$\mathsf{Pro}$-completion} (i.e.~the
free completion under codirected limits) of the category of
orbit-finite nominal sets consists of some form of nominal
Stone spaces. The approach of working with free completions should
lead to a topological version of nominal Stone duality similar to the
one established by Gabbay, Litak, and
Petri\c{s}an~\cite{gabbay11}. More importantly, it might pave the way
to the introduction of pro-(orbit-)finite methods for the theory of data
languages.

\bibliographystyle{plainurl}
\bibliography{refs_final}

\begin{thebibliography}{10}

\bibitem{ammu14}
Ji\v{r}\'\i\ Ad\'amek, Stefan Milius, Robert~S.R. Myers, and Henning Urbat.
\newblock Generalized {E}ilenberg {T}heorem {I}: {L}ocal {V}arieties of
  {L}anguages.
\newblock In Anca Muscholl, editor, {\em Proc.~Foundations of Software Science
  and Computation Structures (FoSSaCS)}, volume 8412 of {\em Lecture Notes
  Comput.~Sci.}, pages 366--380. Springer, 2014.

\bibitem{ammu15}
Ji\v{r}\'\i\ Ad\'amek, Stefan Milius, Robert~S.R. Myers, and Henning Urbat.
\newblock Varieties of languages in a category.
\newblock In Catuscia Palamidessi, editor, {\em Proc.~30th Annual Symposium on
  Logic in Computer Science (LICS'15)}, pages 414--425. IEEE Computer Society,
  2015.

\bibitem{ammu19}
Ji\v{r}\'\i\ Ad\'amek, Stefan Milius, Robert~S.R. Myers, and Henning Urbat.
\newblock Generalized {E}ilenberg theorem: Varieties of languages in a
  category.
\newblock {\em ACM Trans.~Comput.~Log.}, 20(1):3:1--3:47, 2019.

\bibitem{amsw19}
Ji\v{r}\'\i\ Ad\'amek, Stefan Milius, Lurdes Sousa, and Thorsten Wi\ss\/mann.
\newblock On finitary functors and finitely presentable algebras.
\newblock {S}ubmitted; available online at
  \url{https://arxiv.org/abs/1902.05788}.

\bibitem{bkl14}
Mikołaj Bojańczyk, Bartek Klin, and Sławomir Lasota.
\newblock Automata theory in nominal sets.
\newblock {\em Log.~Methods Comput.~Sci.}, 10(3:4):44 pp., 2014.

\bibitem{boj2013}
Miko{\l}aj Boja{\'{n}}czyk.
\newblock Nominal monoids.
\newblock {\em Theory of Computing Systems}, 53(2):194--222, 2013.

\bibitem{eilenberg76}
Samuel Eilenberg.
\newblock {\em {Automata, Languages, and Machines Vol.~B}}.
\newblock Academic Press, 1976.

\bibitem{es76}
Samuel Eilenberg and Marcel-Paul Sch{\"u}tzenberger.
\newblock On pseudovarieties.
\newblock {\em Advances Math.}, 10:413--418, 1976.

\bibitem{gabbay11}
Murdoch~J. Gabbay, Tadeusz Litak, and Daniela Petri{\c{s}}an.
\newblock Stone duality for nominal boolean algebras with new.
\newblock In A.~Corradini, B.~Klin, and C.~C{\^i}rstea, editors, {\em Proc.
  CALCO'11}, volume 6859 of {\em LNCS}, pages 192--207. Springer, 2011.

\bibitem{ggp08}
Mai Gehrke, Serge Grigorieff, and Jean-\'Eric Pin.
\newblock Duality and equational theory of regular languages.
\newblock In {\em Proc.~ICALP'08, Part II}, volume 5126 of {\em LNCS}, pages
  246--257. Springer, 2008.

\bibitem{fk94}
Michael Kaminski and Nissim Francez.
\newblock Finite-memory automata.
\newblock {\em Theoret.~Comput.~Sci.}, 134(2):329 -- 363, 1994.

\bibitem{KP10}
Alexander Kurz and Daniela Petrisan.
\newblock On universal algebra over nominal sets.
\newblock {\em Mathematical Structures in Computer Science}, 20(2):285--318,
  2010.

\bibitem{mp71}
Robert McNaughton and Seymour~A. Papert.
\newblock {\em Counter-Free Automata (M.I.T. Research Monograph No. 65)}.
\newblock The MIT Press, 1971.

\bibitem{msw16}
Stefan Milius, Lutz Schr{\"{o}}der, and Thorsten Wi{\ss}mann.
\newblock Regular behaviours with names.
\newblock {\em Appl.~Categ.~Structures}, 24(5):663--701, 2016.

\bibitem{mu19_arxiv}
Stefan Milius and Henning Urbat.
\newblock Equational axiomatization of algebras with structure.
\newblock {F}ull version; available online at
  \url{http://arxiv.org/abs/1812.02016}, 2018.

\bibitem{mu19}
Stefan Milius and Henning Urbat.
\newblock Equational axiomatization of algebras with structure.
\newblock In {\em Proc. 22nd International Conference on Foundations of
  Software Science and Computation Structures}, 2019.
\newblock To appear.

\bibitem{petrisan12}
Daniela Petri\c{s}an.
\newblock {\em Investigations into Algebra and Topology over Nominal Sets}.
\newblock PhD thesis, University of Leicester, 2012.

\bibitem{pitts2013}
Andrew~M. Pitts.
\newblock {\em Nominal Sets: Names and Symmetry in Computer Science}.
\newblock Cambridge University Press, 2013.

\bibitem{cpl15}
Gabriele Puppis, Thomas Colcombet, and Clemens Ley.
\newblock Logics with rigidly guarded data tests.
\newblock {\em Log.~Methods Comput.~Sci.}, 11(3:10):56 pp., 2015.

\bibitem{s16}
Julian Salamanca.
\newblock {Unveiling Eilenberg-type Correspondences: Birkhoff's Theorem for
  (finite) Algebras + Duality}.
\newblock \url{https://arxiv.org/abs/1702.02822}, February 2017.

\bibitem{sch65}
Marcel-Paul Sch\"utzenberger.
\newblock On finite monoids having only trivial subgroups.
\newblock {\em Inform. and Control}, 8:190--194, 1965.

\bibitem{sch07}
Thomas Schwentick.
\newblock Automata for {{XML}} -- a survey.
\newblock {\em J.~Comput.~System Sci.}, 73(3):289--315, 2007.

\bibitem{seg06}
Luc Segoufin.
\newblock Automata and logics for words and trees over an infinite alphabet.
\newblock In {\em Computer Science Logic, 20th International Workshop, {CSL}
  2006, 15th Annual Conference of the EACSL, Szeged, Hungary, September 25-29,
  2006, Proceedings}, pages 41--57, 2006.

\bibitem{uacm17}
Henning Urbat, Ji\v{r}\'\i\ Ad\'amek, Liang-Ting Chen, and Stefan Milius.
\newblock Eilenberg theorems for free.
\newblock In Kim~G. Larsen, Hans~L. Bodlaender, and Jean-Fran\c{c}ois Raskin,
  editors, {\em Proc.~42nd International Symposium on Mathematical Foundations
  of Computer Science (MFCS 2017)}, volume~83 of {\em LIPIcs}. Schloss
  Dagstuhl, 2017.
\newblock EATCS {B}est {P}aper {A}ward.

\end{thebibliography}

\clearpage
\appendix 

\section{Appendix: Omitted Proofs and Details}

In this appendix, we provide full proofs of all our results and
technical details omitted due to space restrictions.

\section*{Properties of nominal sets}
We review some additional concepts from the theory
of nominal sets that we shall use in subsequent proofs.

\begin{defn}
  A \emph{supported set} is a set $X$ together with a map
  $\supp_X\colon X\to \Pow_f\At$, where $\Pow_f \At$ is the set of finite subsets of $\At$. A \emph{morphism} between
  supported sets $X$ and $Y$ is a function $f\colon X\to Y$ with
  $\supp_Y(f(x))\seq \supp_X(x)$ for all $x\in X$. We denote by
  $\SuppSet$ the category of supported sets and their morphisms.
\end{defn}
Note that every nominal set $X$ is a supported set w.r.t.~its
least-support function $\supp_X$, and that every equivariant map is a
morphism of supported sets. The following result is a reformulation of
\cite[Prop.~5.10]{msw16}:
\begin{lemma}[$\!\!$\cite{mu19_arxiv}, Lemma~B.25]\label{lem:nomreflective}
  The forgetful functor from $\Nom$ to $\SuppSet$ has a left adjoint.
\end{lemma}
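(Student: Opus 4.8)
The plan is to exhibit the left adjoint $L\colon\SuppSet\to\Nom$ explicitly as a ``free strong nominal set'' functor and verify the universal property of its unit by hand. Given a supported set $(X,\supp_X)$, I would set
\[
  LX \;=\; \bigl(\Perm(\At)\times X\bigr)/{\sim},
\]
where $(\pi,x)\sim(\pi',x')$ iff $x=x'$ and $\pi,\pi'$ agree on $\supp_X(x)$, i.e.\ $\pi(a)=\pi'(a)$ for all $a\in\supp_X(x)$. Writing $[\pi,x]$ for the equivalence class, define the group action by $\sigma\o[\pi,x]=[\sigma\pi,x]$. This is well defined because $\sim$ is a congruence for the action: if $\pi,\pi'$ agree on $\supp_X(x)$ then so do $\sigma\pi,\sigma\pi'$; and it plainly satisfies the two action laws. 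On a morphism $f\colon X\to Y$ of supported sets I would put $Lf[\pi,x]=[\pi,f(x)]$, which is well defined since $\supp_Y(f(x))\seq\supp_X(x)$, so agreement on $\supp_X(x)$ forces agreement on $\supp_Y(f(x))$.

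Next I would check that $LX$ is a genuine nominal set. The finite set $\pi(\supp_X(x))$ supports $[\pi,x]$: any $\tau\in\Perm_{\pi(\supp_X(x))}(\At)$ fixes $\pi(a)$ for each $a\in\supp_X(x)$, so $\tau\pi$ and $\pi$ agree on $\supp_X(x)$, whence $\tau\o[\pi,x]=[\pi,x]$. (In fact the orbit of $[\id,x]$ is the set of injections $\supp_X(x)\hookrightarrow\At$, so $LX\cong\coprod_{x\in X}\At^{\#|\supp_X(x)|}$ is always strong, consistent with \autoref{rem:complete}\ref{rem:complete:3}.) The unit $\eta_X\colon X\to U(LX)$ is $\eta_X(x)=[\id,x]$, and it is a morphism of supported sets because $\supp_{LX}([\id,x])\seq\id(\supp_X(x))=\supp_X(x)$.

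Finally I would establish the universal property. Given a nominal set $Z$ and a supported-set morphism $f\colon X\to U(Z)$, define $\bar f\colon LX\to Z$ by $\bar f[\pi,x]=\pi\o f(x)$. The load-bearing step---and the only place where the defining inequality of $\SuppSet$-morphisms is essential---is well-definedness of $\bar f$: if $\pi,\pi'$ agree on $\supp_X(x)$, then $\pi^{-1}\pi'$ fixes $\supp_X(x)\supseteq\supp_Z(f(x))$ pointwise, hence $(\pi^{-1}\pi')\o f(x)=f(x)$ and therefore $\pi\o f(x)=\pi'\o f(x)$. Equivariance of $\bar f$ and the triangle identity $U(\bar f)\o\eta_X=f$ are then immediate, and uniqueness follows because any equivariant $g$ with $g\o\eta_X=f$ must satisfy $g[\pi,x]=\pi\o g[\id,x]=\pi\o f(x)$. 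This yields the natural bijection $\Nom(LX,Z)\cong\SuppSet(X,U(Z))$, i.e.\ the adjunction $L\dashv U$. I expect the only real subtlety to be guessing the construction of $LX$ (the correct quotient of $\Perm(\At)\times X$); once that is in place all verifications are mechanical, with the support-reflecting condition on $\SuppSet$-morphisms being exactly what makes the extension $\bar f$ well defined.
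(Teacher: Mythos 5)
Your proposal is correct and takes essentially the same route as the paper: the paper's left adjoint $FX=\coprod_{x\in X}\At^{\#\supp_X(x)}$, with unit sending $x$ to the inclusion $\supp_X(x)\monoto\At$, is precisely your quotient $(\Perm(\At)\times X)/{\sim}$ under the identification $[\pi,x]\mapsto\pi|_{\supp_X(x)}$ (the isomorphism you note yourself), and the verification of the universal property is the same calculation.
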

The left adjoint $F\colon \SuppSet\to \Nom$ is constructed as follows.
Given a finite set $I$, let $\At^I = \prod_{i\in I} \At$ denote the
$I$-fold power of $\At$, and consider the strong nominal
set \[\At^{\#I} = \{\,a\in \At^{I}\;:\; \text{$a\colon I \to \At$ injective} \,\},\]
with group action given by $(\pi\o a)(i) := \pi(a(i))$ for every
$\pi\in\Perm(\At)$. Then $F$ sends a supported set $X$ to the nominal
set $FX = \coprod_{x\in X} \At^{\#{\supp_X(x)}}$, and the universal
map $\eta_X\colon X\to FX$ maps an element $x\in X$ to the inclusion
map $\supp_X(x)\monoto \At$ in $\At^{\#\supp_X(x)}$.
\begin{lemma}[$\!\!$\cite{mu19_arxiv}, Lemma~B.27]\label{L:strongprops}
  \begin{enumerate}
  \item For each nominal set $Z$, there exists a strong nominal set $X$
    and a surjective equivariant map $e\colon X\epito Z$ preserving
    least supports, i.e.~with $\supp_Z(e(x))=\supp_X(x)$ for all
    $x\in X$.
  \item Every strong nominal set is isomorphic to $FY$ for some $Y\in \SuppSet$.
  \end{enumerate}
\end{lemma}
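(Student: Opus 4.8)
The plan is to read both statements off the explicit description of the left adjoint $F\colon\SuppSet\to\Nom$ supplied by \autoref{lem:nomreflective}, combined with the characterization recalled in \autoref{S:nom} of the strong nominal sets as precisely the coproducts of sets of the form $\At^{\#n}$. Throughout, write $U\colon\Nom\to\SuppSet$ for the forgetful functor sending a nominal set to its underlying supported set equipped with the least-support map.

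For part~(1), I would set $X = FUZ = \coprod_{z\in Z}\At^{\#\supp_Z(z)}$. This is a coproduct of sets of the form $\At^{\#n}$, hence strong by the characterization above, and I take $e\colon X\epito Z$ to be the map (namely the counit of $F\dashv U$) defined as follows: an element of the $z$-th summand is an injection $a\colon\supp_Z(z)\to\At$, and picking any $\pi\in\Perm(\At)$ with $\pi$ extending $a$ on $\supp_Z(z)$, I put $e(a)=\pi\o z$. The first check is well-definedness: two such extensions $\pi,\pi'$ of $a$ differ by a permutation fixing $\supp_Z(z)$ pointwise, which therefore fixes $z$, so $\pi\o z=\pi'\o z$. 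Equivariance follows since $\sigma\pi$ extends $\sigma\o a$ whenever $\pi$ extends $a$, and $e$ is surjective because the inclusion $\supp_Z(z)\monoto\At$, viewed in the $z$-th summand, is sent to $z$.

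The substantive point in~(1) is support preservation. The least support of $a\in\At^{\#\supp_Z(z)}$ is its image $a[\supp_Z(z)]$, whereas $\supp_Z(e(a))=\supp_Z(\pi\o z)=\pi[\supp_Z(z)]$ since supports are relocated along the group action; as $\pi$ and $a$ agree on $\supp_Z(z)$, these two finite sets coincide, giving $\supp_Z(e(a))=\supp_X(a)$ for all $a$. For part~(2), let $X$ be strong. By the characterization above, $X\cong\coprod_{j\in J}\At^{\#n_j}$; concretely, choosing one element $x_j$ per orbit and setting $n_j=\under{\supp_X(x_j)}$, I take $Y=\{x_j:j\in J\}$ as a supported set with support inherited from $X$, so that $FY=\coprod_{j\in J}\At^{\#\supp_X(x_j)}\cong\coprod_{j\in J}\At^{\#n_j}\cong X$, as desired.

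I expect the only real obstacle to be the book-keeping with least supports and strongness rather than any hard computation. Part~(2) is essentially a repackaging of the coproduct characterization of strong nominal sets; if instead one wants a self-contained argument that the canonical comparison map $\coprod_j\At^{\#n_j}\to X$ is an isomorphism, injectivity on each summand uses strongness precisely in the form that $\sigma\o x_j=x_j$ holds iff $\sigma$ fixes $\supp_X(x_j)$ pointwise, which prevents the orbit of $x_j$ from collapsing. The two facts driving~(1) — that the least support of a tuple of distinct atoms is the set of its entries, and that $\supp(\pi\o z)=\pi[\supp(z)]$ — are exactly what make the support-preservation step go through, so once they are in place both parts follow directly from the explicit form of $F$.
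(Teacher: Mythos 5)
Your proof is correct: the counit description of $e\colon FUZ\epito Z$, its well-definedness, equivariance and surjectivity, and the two support facts you isolate (the least support of a tuple of pairwise distinct atoms is the set of its entries, and $\supp_Z(\pi\o z)=\pi[\supp_Z(z)]$) all hold, and part~(2) is indeed a repackaging of the coproduct characterization of strong nominal sets. The toolkit is the same as the paper's, namely the explicit form of the left adjoint $F$ from \autoref{lem:nomreflective} together with the fact that strong nominal sets are exactly the coproducts of the $\At^{\#n}$. The one genuine difference is in part~(1): you apply $F$ to the \emph{whole} underlying supported set $UZ$, so that $X=FUZ=\coprod_{z\in Z}\At^{\#\supp_Z(z)}$ has one orbit per \emph{element} of $Z$, whereas the paper's proof applies $F$ to a supported set of \emph{orbit representatives} of $Z$ (which is precisely what you do yourself in part~(2)). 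Both constructions prove the lemma as stated, but yours forfeits the refinement recorded immediately after the lemma --- that when $Z$ is orbit-finite one may choose $X$ orbit-finite, with the same number of orbits as $Z$ --- which the paper later uses to verify assumption~(A3) in the proof of \autoref{thm:theories_vs_pseudovars}: for an infinite orbit-finite $Z$ such as $Z=\At$, your $FUZ$ has infinitely many orbits. Replacing $UZ$ by a set of orbit representatives in your part~(1) costs nothing (surjectivity persists, since every element of $Z$ has the form $\pi\o z_j$ for some representative $z_j$ and some $\pi\in\Perm(\At)$, and the support computation is unchanged) and recovers the stronger statement.
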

It follows from the proof that in the case where $Z$ above is
orbit-finite, one may choose $X$ to be orbit-finite (in fact, with the same number of orbits as $Z$). In particular:
\begin{corollary}[Pitts~\cite{pitts2013}, Exercise~5.1]\label{L:singleorbit}
  Every nominal set which has only a single orbit is a
  quotient of the nominal set $\At^{\# n} = \{(a_1,\ldots,a_n)\in
  \At^n \;:\; a_i\neq a_j \text{ for $i\neq j$}\}$ for some $n\geq 0$.
\end{corollary}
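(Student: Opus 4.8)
The plan is to construct the quotient map $\At^{\#n}\epito Z$ by hand. Fix any element $z\in Z$, put $n=|\supp_Z(z)|$, and fix an enumeration $\supp_Z(z)=\{a_1,\dots,a_n\}$. I want an equivariant surjection $f\colon \At^{\#n}\epito Z$ determined by the single choice $f(a_1,\dots,a_n)=z$, spread out across the unique orbit of $\At^{\#n}$. Concretely, given an arbitrary tuple $(b_1,\dots,b_n)\in\At^{\#n}$, its entries are pairwise distinct, so (as $\At$ is infinite) there is a finite permutation $\pi\in\Perm(\At)$ with $\pi(a_i)=b_i$ for all $i$; I set $f(b_1,\dots,b_n):=\pi\o z$.

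First I would check that this is well defined, which is the heart of the matter. If $\pi,\pi'\in\Perm(\At)$ both send $(a_1,\dots,a_n)$ to $(b_1,\dots,b_n)$, then $\pi^{-1}\pi'$ fixes every $a_i$ and hence lies in $\Perm_{\supp_Z(z)}(\At)$; since $\supp_Z(z)$ supports $z$, this yields $(\pi^{-1}\pi')\o z=z$, i.e.\ $\pi\o z=\pi'\o z$. Equivariance is then immediate: a permutation carrying $(a_1,\dots,a_n)$ to $\sigma\o(b_1,\dots,b_n)$ is $\sigma\pi$, so $f(\sigma\o(b_1,\dots,b_n))=(\sigma\pi)\o z=\sigma\o f(b_1,\dots,b_n)$. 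The case $n=0$ is harmless: then $z$ has empty support, $Z=\{z\}$ is a one-point set, and $\At^{\#0}$ is the one-point nominal set mapping isomorphically onto it.

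Finally, surjectivity uses that $Z$ is a single orbit: every element of $Z$ is $\sigma\o z$ for some $\sigma\in\Perm(\At)$, and $\sigma\o z=f(\sigma\o(a_1,\dots,a_n))$ lies in the image of $f$. Hence $f\colon\At^{\#n}\epito Z$ is the desired quotient. Alternatively, the statement drops out in one line from the results just proved: a single-orbit $Z$ is orbit-finite, so by \autoref{L:strongprops}(1) together with the orbit-finite refinement noted immediately after it, $Z$ is a surjective equivariant image of a single-orbit strong nominal set $X$, and by fact~(5) of \autoref{S:nom} any such $X$ is isomorphic to $\At^{\#n}$. In either route the only real obstacle is the well-definedness check above, i.e.\ verifying that the support of $z$ absorbs exactly the permutation ambiguity; everything else is formal.
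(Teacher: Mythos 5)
Your proof is correct, and your primary argument takes a genuinely different route from the paper's. The paper treats the statement as an immediate corollary of prior machinery: it invokes \autoref{L:strongprops}(1) (every nominal set is a support-preserving quotient of a strong nominal set), the remark following it that for orbit-finite $Z$ the covering strong nominal set can be chosen with the same number of orbits as $Z$, and the classification of orbit-finite strong nominal sets as finite coproducts of the sets $\At^{\#n}$ (item~(5) of \autoref{S:nom}); this is precisely the ``alternative'' one-liner you sketch at the end. Your main argument instead constructs the quotient by hand, setting $f(b_1,\ldots,b_n)=\pi\o z$ for any finite permutation $\pi\in\Perm(\At)$ with $\pi(a_i)=b_i$, and all the checks are in order: existence of such a $\pi$ because $\At$ is infinite, well-definedness because $\pi^{-1}\pi'$ lies in $\Perm_{\supp_Z(z)}(\At)$ and hence fixes $z$, equivariance via $\sigma\pi$, surjectivity from the single-orbit hypothesis, and the degenerate case $n=0$. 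As for what each route buys: yours is elementary and self-contained, bypassing both the adjunction with supported sets that underlies \autoref{L:strongprops} and the classification of strong nominal sets, and it even recovers the stronger support-preservation property for free, since $\supp_Z(f(b_1,\ldots,b_n))=\pi\o\supp_Z(z)=\{b_1,\ldots,b_n\}$, which is exactly the least support of $(b_1,\ldots,b_n)$ in the strong nominal set $\At^{\#n}$. The paper's route, by contrast, costs nothing extra given machinery it needs anyway --- \autoref{L:strongprops} is used in full generality (arbitrary orbit-finite codomain, support-preserving quotient from a coproduct of sets $\At^{\#n_i}$) to verify assumption~(A3) in the proof of \autoref{thm:theories_vs_pseudovars} --- so the corollary is simply packaged as the single-orbit instance of that general statement.
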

\begin{lemma}\label{L:finsupp}
  Every orbit-finite nominal set contains only finitely many elements
  of any given support.
\end{lemma}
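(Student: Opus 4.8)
The plan is to fix an arbitrary finite set $S\seq \At$ and prove the slightly stronger statement that the set $X_S \defeq \{\,x\in X : \supp_X(x)\seq S\,\}$ is finite; this immediately yields the claim, since the elements of any prescribed support $S$ form a subset of $X_S$. The only tool I would need is \autoref{lem:s_orbit_finite}, which guarantees that the orbit-finite nominal set $X$ has only finitely many $S$-orbits.

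The key observation is that every element of $X_S$ constitutes a \emph{singleton} $S$-orbit. Indeed, if $\supp_X(x)\seq S$, then $\Perm_S(\At)\seq \Perm_{\supp_X(x)}(\At)$: any finite permutation that fixes $S$ pointwise in particular fixes the smaller set $\supp_X(x)$ pointwise. By the defining property of the support, every $\pi\in\Perm_{\supp_X(x)}(\At)$ satisfies $\pi\o x = x$, and hence the same holds for every $\pi\in \Perm_S(\At)$. Thus the $S$-orbit $\{\,\pi\o x : \pi\in \Perm_S(\At)\,\}$ of $x$ equals $\{x\}$.

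Consequently, distinct elements of $X_S$ lie in distinct $S$-orbits, so $X_S$ injects into the set of all $S$-orbits of $X$. Since \autoref{lem:s_orbit_finite} tells us that this set is finite, $X_S$ is finite as well, which completes the argument.

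I expect no serious obstacle here: the proof is a direct application of \autoref{lem:s_orbit_finite}, and the single point requiring care is the containment $\Perm_S(\At)\seq\Perm_{\supp_X(x)}(\At)$, where the inclusion of sets of atoms reverses into an inclusion of stabilizing subgroups. This containment is exactly what makes precise the intuition that an element supported by $S$ is invariant under all $S$-fixing permutations and therefore sits alone in its $S$-orbit.
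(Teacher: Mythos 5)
Your proof is correct, but it follows a genuinely different route from the paper's. The paper argues by decomposing the orbit-finite set into its finitely many orbits, presenting each single-orbit set as a quotient of some $\At^{\#n}$ (\autoref{L:singleorbit}), and then invoking the fact that $\At^{\#n}$ itself has the property together with the fact that equivariant maps do not increase supports. You instead apply \autoref{lem:s_orbit_finite} directly to $X$: any $x$ with $\supp_X(x)\seq S$ is fixed by every $\pi\in\Perm_S(\At)$ (via the subgroup inclusion $\Perm_S(\At)\seq\Perm_{\supp_X(x)}(\At)$, which you identify correctly as the crux), so its $S$-orbit is the singleton $\{x\}$, and the elements supported by $S$ therefore inject into the finite set of $S$-orbits. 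Your route buys brevity and also sidesteps a subtlety left implicit in the paper's sketch: an element of a quotient of $\At^{\#n}$ whose support lies in $S$ need not have a preimage supported by $S$ (consider the projection $\At^{\#2}\epito\At$, $(a,b)\mapsto a$: the element $a$ has no preimage supported by $\{a\}$), so the paper's final bullet requires exactly the kind of $S$-orbit reasoning you make explicit, applied upstairs to $\At^{\#n}$, in order to close. What the paper's route buys is self-containedness at the level of the concrete sets $\At^{\#n}$, mirroring its own proof of \autoref{lem:s_orbit_finite}; your proof instead reuses that lemma as a black box, which is legitimate and non-circular, since its proof does not depend on the present statement.
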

\begin{proof}
  This is a consequence of the following facts:
  \begin{itemize}
  \item Every nominal set is the disjoint union of its orbits.
  \item Every single-orbit nominal set is a quotient of the
    nominal set $\At^{\# n}$ (see \autoref{L:singleorbit}).
  \item Every $\At^{\# n}$ satisfies the desired property and every
    equivariant map $f\colon X \to Y$ satisfies $\supp_Y(f(x)) \seq
    \supp_X(x)$ for every $x \in X$.\qedhere
  \end{itemize}
\end{proof}

\section*{Proof of \autoref{lem:s_orbit_finite}}

Clearly the nominal set $\At^{\#n}$ has only finitely many $S$-orbits:
two tuples $(a_1,\ldots,a_n)$  and $(b_1,\ldots,b_n)$  in $\At^{\#n}$ lie in the same $S$-orbit if and only if for
every $i=1,\ldots,n$ one has $a_i\in S$ iff $b_i\in S$ and in this
case $a_i=b_i$. Thus, by \autoref{L:singleorbit}, every nominal set
with a single orbit has finitely many $S$-orbits (using that
equivariant maps preserve $S$-orbits). Since every orbit-finite
nominal set is the finite coproduct of its orbits, this proves the
claim.
\qed

\section*{Proof of \autoref{prop:unarypres}}

The proof is analogous to the corresponding statement for ordinary monoids. Given a
nominal quotient monoid $e\colon \Sigma^*\epito (M,\bullet,1_M)$ as
in~\ref{prop:unarypres:1}, choose $l_w = e(w)\bullet \dash$ and
$r_w = \dash\bullet e(w)$.  Conversely, if~\ref{prop:unarypres:2}
holds, there is a unique monoid structure $(M,\bullet,1_M)$ on $M$
making $e$ a monoid morphism in $\Set$. It is uniquely defined by
\[ ev\bullet ew := e(vw) \text{ for $v,w\in \Sigma^*$} \quad\text{and}\quad 1_M:=e(\epsilon). \] 
Condition $(2)$ makes sure that $\bullet$  is well-defined, i.e. independent of the choice of $v$ and $w$. Moreover, the multiplication $\bullet$
is equivariant since, for every $\pi\in \Perm(\At)$,
\[
  \pi\o (ev\bullet ew)
  =
  \pi \o e(vw)
  =
  e(\pi\o (vw))
  =
  e((\pi v)(\pi w))
  =
  e(\pi v)\o e(\pi w) 
  =
  (\pi\o e(v))\bullet (\pi\o e(w)),
\]
using that $e$ and the monoid multiplication on $\Sigma^*$ are
equivariant.
\qed

\section*{Proof of \autoref{thm:theories_vs_pseudovars}}
This theorem can be derived as an instance of the Generalized Variety
Theorem established in our previous work~\cite[Theorem~3.15]{mu19}, which relates varieties of objects
in a category $\A$ with a categorical notion of equational theory. For
the convenience of the reader, we briefly recall this result. For the
statement of the theorem one fixes the following parameters:
\begin{itemize}
\item a category $\A$ with a proper factorization system $(\E,\M)$;
\item a full subcategory $\A_0\seq \A$;
\item a class $\Lambda$ of cardinal numbers;
\item a class $\X\seq \A$ of objects.
\end{itemize}
Recall that a factorization system $(\E,\M)$ is \emph{proper} if all morphisms in $\E$ are epic and all morphisms in $\M$ are monic. The idea is that $\A$ is some category of algebraic structures, $\A_0$
is the subcategory in which varieties are formed, $\Lambda$ specifies
the arities of products under which varieties are closed, and $\X$ is
the class of algebras over which equations are formed (thus, $\X$ is usually
some class of free algebras).  \emph{Quotients} and \emph{subobjects}
in $\A$ are taken w.r.t.~the classes $\E$ and $\M$. An object
$A\in \A$ is called \emph{$\X$-generated} if there exists a quotient
$e\colon X\epito A$ for some $X\in \X$. An object $X\in \A$ is
\emph{projective} w.r.t.~a morphism $e\colon A\to B$ if for every
morphism $f\colon X\to B$ there exists a morphism $g\colon X\to A$
with $e\o g = h$. We define the subclass $\E_\X\seq \E$ by
\[
  \E_\X
  = 
  \{\,e\in \E \;:\; \text{every $X \in \X$ is projective w.r.t.~$e$}\,\}.
\]
Our data is required to satisfy the following assumptions:
\begin{enumerate}[label=(A\arabic*)]
\item\label{A1} $\A$ has $\Lambda$-products, i.e.~for every $\lambda \in
  \Lambda$ and every family $(A_i)_{i < \lambda}$ of objects in $\A$, the product
  $\prod_{i< \lambda} A_i$ exists.
\item\label{A2} $\A_0$ is closed under isomorphisms, $\Lambda$-products and
  $\X$-generated subobjects. The last statement means that for every subobject $m\colon A \monoto B$ in $\M$ where
  $B\in \A_0$ and $A$ is $\X$-generated, one has $A \in \A_0$.
\item\label{A3} Every object of $\A_0$ is an $\E_\X$-quotient of some
  object of $\X$, that is, for every object $A \in \A_0$ there exists some $e\colon X \epito A$ in $\E_\X$ with domain $X\in \X$.
\end{enumerate} 

\begin{defn}\label{def:variety}
 A \emph{(weak) variety} is a full subcategory $\pvar\seq \A_0$ closed under
  ($\E_\X$-)quotients, subobjects, and $\Lambda$-products. More precisely,
  \begin{enumerate}
  \item for every ($\E_\X$-)quotient $e: A\epito B$ in $\A_0$ with
    $A \in \pvar$ one has $B\in \pvar$,
  \item for every $\M$-morphism $m: A \monoto B$ in $\A_0$ with $B \in \pvar$ one has $A \in \pvar$, and 
  \item for every family of objects $A_i$ ($i<\lambda$) in $\pvar$ with
    $\lambda\in \Lambda$ one has  $\prod_{i<\lambda} A_i \in \pvar$.
  \end{enumerate}
\end{defn}
Given an object $X\in \A$, we denote by $X\mathord{\epidownarrow} \A_0$ 
the poset of all quotients $e\colon X\epito A$ with codomain
$A\in \A_0$, ordered by $e\leq e'$ iff $e'$ factorizes through $e$. A
subset $\T_X\seq X\mathord{\epidownarrow} \A_0$ is called a
\emph{(weak) equation} if it is downwards $\Lambda$-directed, i.e.~every subset
of $S\seq\T_X$ with $\under{S}\in \Lambda$ has a lower bound in
$\T_X$, and closed under ($\E_\X$-)quotients, i.e.~for every quotient
$e\colon X\epito E$ in $\T_X$ and every ($\E_\X$-)quotient
$q\colon E\epito E'$ in $\A_0$, one has $q\o e\in \T_X$. An object $A\in \A_0$ \emph{satisfies} the (weak) equation $\T_X$ if every morphism $h\colon X\to A$ factorizes through some quotient in $\T_X$. 

\begin{defn}\label{def:eqtheory}
 A \emph{(weak) equational theory} is a family
  \[
    \mathscr{T}
    =
    (\,\mathscr{T}_X\seq X\mathord{\epidownarrow} \A_0\,)_{X\in \X}
  \] of (weak) equations with the following properties (illustrated by the diagrams below):
  \begin{enumerate}
  \item \emph{Substitution invariance.} For every morphism $h\colon X\to Y$
    with $X,Y\in\X$ and every $e_Y\colon Y\epito E_Y$ in $\mathscr{T}_Y$, the coimage
    $e_X\colon X\epito E_X$ of $e_Y\o h$ lies in $\mathscr{T}_X$.
  \item \emph{$\E_\X$-completeness.} For every $Y\in \X$ and every quotient $e\colon Y\epito E_Y$ in
    $\mathscr{T}_Y$, there exists an $X\in\X$ and a quotient $e_X\colon X\epito E_X$
    in $\mathscr{T}_X\cap \E_\X$ with $E_X=E_Y$.
  \end{enumerate}
    \[
        \xymatrix{
          X \ar[r]^{\forall h} \ar@{->>}[d]_{ e_X} & Y 
          \ar@{->>}[d]^{\forall e_Y}\\
          E_X \ar@{>->}[r] & E_Y
        }
		\qquad \xymatrix{
          X \ar@{.>>}[d]_{\exists e_X} & Y
          \ar@{->>}[d]^{\forall e_Y}\\
          E_X\ar@{=}[r] & E_Y
        }
    \]
\end{defn}

\begin{rem}
We warn the reader about a clash of terminology: in our previous work  \cite{mu19_arxiv}, weak equations, weak equational theories and weak varieties were called  equations, equational theories and varieties (without the adjective ``weak''). Our present non-weak notion of an equation, an  equational theory, and variety in \autoref{def:variety}/\ref{def:eqtheory} was not considered in \cite{mu19_arxiv}.
\end{rem}

\begin{rem}
\begin{enumerate}
\item To every weak equational theory $\T$ one can associate the variety 
\[ \V(\T) = \{\, A\in \A_0 \;:\; \text{$A$ satisfies $\T_X$ for every $X\in \X$}\, \}. \]
Equivalently, by \cite[Lemma A.2]{mu19_arxiv} the variety $\V(\T)$ consists of all objects $A\in \A_0$ such that, for some $X\in \X$, there exists a quotient $e\colon X\epito A$ in $\T_X$ with codomain $A$.
\item Conversely, to every weak variety $\V$ one can associate a weak equational theory $\T(\V)$ where $[\T(\V)]_X$ consists of all quotients $e\colon X\epito A$ with $A\in \V$.
\end{enumerate}
\end{rem}
Given two weak equations $\T_X$ and $\T_X'$ over $X\in \X$, we put
  $\T_X\leq \T_X'$ if every quotient in $\T_X'$ factorizes through
  some quotient in $\T_X$. Weak theories form a poset with respect to the
   order $\T\leq \T'$ iff $\T_X\leq \T_X'$ for all $X\in
  \X$. Similarly, weak varieties form a poset  -- in fact, a complete lattice --
  ordered by inclusion. Then, we obtain the following correspondence: 
\begin{theorem}[$\!\!$\cite{mu19}, Theorem 3.15]\label{thm:hsp}
The complete lattices of weak equational theories and weak varieties are dually
  isomorphic. The isomorphism is given the maps
\[ \T\mapsto \V(\T)  \quad\text{and}\quad \V\mapsto \T(\V).  \]
\end{theorem}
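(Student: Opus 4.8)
The plan is to verify that the two assignments $\T \mapsto \V(\T)$ and $\V \mapsto \T(\V)$ are well-defined, order-reversing, and mutually inverse; this is precisely what a dual isomorphism of posets requires. I would first check well-definedness. That $\V(\T)$ is a weak variety follows by elementary diagram chasing from the three axioms of an equation: closure under $\E_\X$-quotients uses that every $X \in \X$ is projective with respect to such a quotient, so that a map into the quotient lifts and its lift factors through $\T_X$; closure under $\M$-subobjects uses the diagonal fill-in of the factorization system; and closure under $\Lambda$-products uses the downward $\Lambda$-directedness of $\T_X$ to produce a single quotient through which all component maps factor. Dually, that $\T(\V)$ is a weak equational theory amounts to checking that each $[\T(\V)]_X$ is downwards $\Lambda$-directed and $\E_\X$-quotient-closed (form the relevant product in $\V$ and take coimages), that substitution invariance holds (the coimage of $e_Y \o h$ is an $\X$-generated subobject of $E_Y \in \V$, hence lies in $\V$), and that $\E_\X$-completeness holds --- this last point being exactly assumption (A3) applied to the codomain $E_Y \in \V \seq \A_0$.

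Next I would show both maps are order-reversing. For $\V(\dash)$ this is immediate: if $\T \leq \T'$, then any map factoring through a quotient of $\T'_X$ also factors through one of $\T_X$, so $\V(\T') \seq \V(\T)$; for $\T(\dash)$ it is equally direct. It is convenient to package this as the Galois connection $\T \leq \T(\V) \iff \V \seq \V(\T)$, whose nontrivial direction uses that an arbitrary map $h\colon X \to A$ with $A \in \V$ factors through a quotient onto an $\X$-generated subobject of $A$, which again lies in $\V$ by subobject-closure, reducing a statement about all morphisms to one about quotients. It then remains to establish the two fixpoint identities $\V(\T(\V)) = \V$ and $\T(\V(\T)) = \T$. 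The first is essentially formal: by the equivalent description of $\V(\T)$ (membership via a single quotient $e\colon X \epito A$ in $\T_X$ with codomain $A$, see \cite[Lemma~A.2]{mu19_arxiv}) together with (A3), unwinding $A \in \V(\T(\V))$ reproduces exactly $A \in \V$.

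The identity $\T(\V(\T)) = \T$ is where I expect the real difficulty, specifically its inclusion $[\T(\V(\T))]_X \seq \T_X$: given a quotient $e\colon X \epito A$ with $A \in \V(\T)$, one must show $e \in \T_X$, and here all three structural properties of an equational theory must cooperate. The strategy is: from $A \in \V(\T)$ extract a quotient $f\colon Z \epito A$ in $\T_Z$ for some $Z \in \X$; invoke $\E_\X$-completeness to replace $f$ by one lying in $\T_Z \cap \E_\X$; use projectivity of $X$ with respect to $f \in \E_\X$ to lift $e$ to a morphism $h\colon X \to Z$ with $f \o h = e$; and apply substitution invariance along $h$, which places the coimage of $f \o h = e$ into $\T_X$. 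Since $e$ lies in $\E$ it is its own coimage, giving $e \in \T_X$. The reverse inclusion $\T_X \seq [\T(\V(\T))]_X$ is immediate from the definitions. Finally, since weak varieties are closed under arbitrary intersections and contain the top element $\A_0$, they form a complete lattice, and the dual isomorphism just established transports this completeness to weak equational theories, completing the proof.
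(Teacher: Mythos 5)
There is nothing to compare against inside the paper itself: this statement is imported verbatim from \cite{mu19} (Theorem~3.15), and the appendix surrounding it only recalls the setting, the assumptions (A1)--(A3), and the two assignments $\T\mapsto\V(\T)$, $\V\mapsto\T(\V)$ before citing the result. So your proposal has to stand on its own, and it essentially does. The skeleton (well-definedness of both maps, antitone Galois connection $\T \leq \T(\V) \Leftrightarrow \V \seq \V(\T)$, the two fixpoint identities, transport of lattice completeness) is the right one, and the individual verifications are sound: projectivity of the objects of $\X$ for closure of $\V(\T)$ under $\E_\X$-quotients, diagonal fill-in for closure under $\M$-subobjects, downward $\Lambda$-directedness for products; (A2) for well-definedness of $\T(\V)$ and (A3) for its $\E_\X$-completeness; and, for the genuinely hard inclusion $[\T(\V(\T))]_X \seq \T_X$, the chain ``completeness $\Rightarrow$ projectivity $\Rightarrow$ substitution invariance,'' finished by the observation that a quotient $e\in\E$ is its own coimage.

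Two caveats. First, your assertion that $\T_X \seq [\T(\V(\T))]_X$ is ``immediate from the definitions'' is not immediate from the \emph{primary} definition of $\V(\T)$ (satisfaction of $\T_Y$ for every $Y\in\X$): from a single quotient $e\colon X\epito A$ in $\T_X$ one must still show that every $h\colon Y\to A$ with $Y\in\X$ factors through some member of $\T_Y$, and that requires exactly the $\E_\X$-completeness/projectivity/substitution-invariance argument you deploy for the other inclusion. It \emph{is} immediate from the single-quotient characterization of $\V(\T)$ in \cite[Lemma~A.2]{mu19_arxiv}, which you cite earlier, so your proof is complete relative to that lemma; just be aware that the lemma carries real weight at this point, and a fully self-contained proof should include its (short) proof, built from the same toolkit. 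Second, a harmless imprecision: $\E_\X$-completeness does not let you replace $f\colon Z\epito A$ by a quotient in $\T_Z\cap\E_\X$ over the \emph{same} $Z$; it yields a quotient in $\T_{Z'}\cap\E_\X$ for some possibly different $Z'\in\X$, which is all your subsequent lifting argument actually needs.
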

Clearly, the above isomorphism restricts to one between equational theories and varieties. Thus, we obtain
\begin{corollary}\label{cor:stronghsp}
  The complete lattices of equational theories and varieties are dually
  isomorphic. 
\end{corollary}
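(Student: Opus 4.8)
The plan is to obtain \autoref{cor:stronghsp} as a restriction of the dual isomorphism of \autoref{thm:hsp}. The starting observation is that, since $\E_\X\seq \E$, closure under all $\E$-quotients implies closure under $\E_\X$-quotients; consequently every equation is in particular a weak equation, every equational theory a weak equational theory, and every variety a weak variety. Thus the mutually inverse maps $\T\mapsto \V(\T)$ and $\V\mapsto \T(\V)$ of \autoref{thm:hsp} are already defined on all (non-weak) theories and varieties, and the only thing left to prove is that they send non-weak objects to non-weak objects; the dual-isomorphism and complete-lattice structure will then be inherited by restriction.

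First I would treat the easy direction. Let $\V$ be a variety, i.e.~a weak variety that is moreover closed under \emph{all} $\E$-quotients. Recall that $\T(\V)_X$ consists of all quotients $e\colon X\epito A$ with $A\in \V$. Given such an $e$ and an arbitrary $\E$-quotient $q\colon A\epito A'$ in $\A_0$, the composite $q\o e$ again lies in $\E$ (as $\E$ is closed under composition in any factorization system) and has codomain $A'\in \V$ (as $\V$ is closed under $\E$-quotients). Hence $q\o e\in \T(\V)_X$, so each $\T(\V)_X$ is closed under all $\E$-quotients and $\T(\V)$ is a (non-weak) equational theory.

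For the converse direction, let $\T$ be an equational theory, i.e.~a weak theory whose equations are closed under all $\E$-quotients. Here I would use the codomain description of $\V(\T)$ recalled above: $A\in \V(\T)$ iff there is some $X\in \X$ and a quotient $e\colon X\epito A$ in $\T_X$ with codomain $A$. Given such an $A$ and an $\E$-quotient $q\colon A\epito A'$ in $\A_0$, the composite $q\o e\colon X\epito A'$ is again an $\E$-quotient, and since $\T_X$ is closed under all $\E$-quotients we get $q\o e\in \T_X$. Thus $A'$ is the codomain of a quotient in $\T_X$, so $A'\in \V(\T)$, proving that $\V(\T)$ is closed under all $\E$-quotients and is therefore a (non-weak) variety.

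Combining the two directions, the mutually inverse bijection of \autoref{thm:hsp} restricts to a bijection between equational theories and varieties, which remains a dual order-isomorphism since it is the restriction of one. For the complete-lattice claim I would note that an arbitrary intersection of varieties is again closed under $\E$-quotients, subobjects, and $\Lambda$-products, and that $\A_0$ itself is a variety; hence the varieties form a complete lattice, and the dual isomorphism transports this structure to the equational theories. The argument is essentially bookkeeping: the only point requiring care is that ``closed under all $\E$-quotients'' matches up on the two sides, which is exactly what the codomain description of $\V(\T)$ together with the closure of $\E$ under composition delivers.
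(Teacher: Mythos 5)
Your proposal is correct and takes essentially the same approach as the paper: the paper's entire proof is the assertion that the dual isomorphism of \autoref{thm:hsp} ``clearly'' restricts to equational theories and varieties, and your argument is precisely the verification of that restriction (non-weak objects map to non-weak objects, via closure of $\E$ under composition and the codomain description of $\V(\T)$). The details you supply are exactly what the paper leaves implicit.
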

Let us now instantiate these results to the setting of orbit-finite
nominal monoids. That is, we choose the parameters of the above categorical setting
as follows:
\begin{itemize}
\item $\A = \NomMon$;
\item $(\E,\M)$ = (surjective morphisms, injective morphisms);
\item $\A_0 = \NomMon_{\of}$;
\item $\Lambda =$ all finite cardinal numbers;
\item $\X = $ all free nominal monoids $\Sigma^*$ with $\Sigma$ an orbit-finite strong nominal set.
\end{itemize}
The class $\E_\X$ can be characterized as follows:
\begin{lemma}[$\!\!$\cite{mu19_arxiv}, Lemma B.28]\label{lem:supprefl}
A quotient $e\colon M\epito N$ belongs to $\E_\X$ if and only if $e$ is support-reflecting.
\end{lemma}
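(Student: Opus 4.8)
The plan is to transport the projectivity condition defining $\E_\X$ out of $\NomMon$ and into the category $\SuppSet$ of supported sets, where it trivializes. First I would combine two adjunctions: the free-monoid adjunction between $\Nom$ and $\NomMon$ with left adjoint $\Sigma\mapsto\Sigma^*$, and the adjunction of \autoref{lem:nomreflective} between $\SuppSet$ and $\Nom$ with left adjoint $F$. Their composite exhibits $Y\mapsto (FY)^*$ as a left adjoint to the forgetful functor $G\colon \NomMon\to \SuppSet$ that equips the underlying set of a nominal monoid with its least-support map. Using the standard fact that for an adjunction $L\dashv R$ the object $LY$ is projective with respect to a morphism $p$ iff $Y$ is projective with respect to $Rp$, together with \autoref{L:strongprops} (equivalently fact (5) in \autoref{S:nom}), which identifies the orbit-finite strong nominal sets up to isomorphism with the objects $FY$ for $Y$ a finite supported set, I obtain that $\X$ consists up to isomorphism of the monoids $(FY)^*$. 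Since $e$ is a quotient, hence in $\E$, membership $e\in\E_\X$ is therefore equivalent to the assertion that every finite supported set $Y$ is projective with respect to the underlying supported-set morphism $\bar e$ of $e$, where I write $\bar e$ for $Ge\colon GM\to GN$. It thus remains to prove that $\bar e$ makes every finite supported set projective iff $\bar e$ is support-reflecting, i.e.\ for every $y\in GN$ there is $x\in GM$ with $\bar e(x)=y$ and $\supp_M(x)=\supp_N(y)$; since $e$ is a quotient, this last condition is exactly that $e$ be support-reflecting.

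For the backward implication I would assume $e$ support-reflecting and lift an arbitrary $\SuppSet$-morphism $g\colon Y\to GN$ pointwise: for each $y\in Y$, support-reflection supplies $x_y\in GM$ with $\bar e(x_y)=g(y)$ and $\supp_M(x_y)=\supp_N(g(y))\seq\supp_Y(y)$, the inclusion holding because $g$ is support-decreasing. Then $h(y):=x_y$ is a legitimate $\SuppSet$-morphism with $\bar e\o h=g$. Crucially, morphisms in $\SuppSet$ carry no equivariance or coherence constraints, so these preimages may be chosen independently for each $y$; this is what makes every (not merely finite) supported set projective.

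For the forward implication I would assume that every finite supported set is projective with respect to $\bar e$ and fix $y\in GN$. Testing projectivity against the one-element supported set $Y=\{\ast\}$ with $\supp_Y(\ast)=\supp_N(y)$ and the morphism $\ast\mapsto y$ (valid since the supports agree), I obtain $x\in GM$ with $\bar e(x)=y$ and $\supp_M(x)\seq\supp_Y(\ast)=\supp_N(y)$. Equivariance of $e$ yields the reverse inclusion $\supp_N(y)=\supp_N(e(x))\seq\supp_M(x)$, so $\supp_M(x)=\supp_N(y)$ and $e$ is support-reflecting. Here $FY\cong\At^{\#\supp_N(y)}$ is single-orbit, so in fact only the generators $(\At^{\# n})^*$ of $\X$ are needed for this direction.

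The one point demanding care---rather than a genuine obstacle---is the double role of projectivity: I must be sure that the abstract condition ``$(FY)^*$ is projective with respect to $e$ in $\NomMon$'' really does transpose, via the composite adjunction, to the concrete condition ``$Y$ is projective with respect to $\bar e$ in $\SuppSet$'', and that strongness of the generators is precisely the freeness (\autoref{L:strongprops}, \autoref{L:singleorbit}) that licenses this transposition. Once the generators are recognized as free objects over supported sets, the equivalence with support-reflection is immediate, since in $\SuppSet$ preimages can always be chosen pointwise.
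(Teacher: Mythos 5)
Your proof is correct. A caveat on the comparison: the paper does not prove this lemma in-house but cites it from \cite{mu19_arxiv}; the proof suggested by the paper's own infrastructure (and by its verification of assumption (A3) just below the lemma) is a direct one, where for the backward direction one lifts a morphism $f\colon \Sigma^*\to N$ by picking one representative $a$ per orbit of the strong set $\Sigma$, choosing a support-reflecting preimage $x_a$ of $f(a)$, and using strongness of $\Sigma$ to check that $\pi\o a\mapsto \pi\o x_a$ is a well-defined equivariant map on $\Sigma$, while for the forward direction one tests projectivity against $(\At^{\#n})^*$. Your route repackages exactly this combinatorial core: the composite adjunction $Y\mapsto (FY)^*$ between $\SuppSet$ and $\NomMon$ (obtained from \autoref{lem:nomreflective} and the free-monoid adjunction), the standard transposition of projectivity along an adjunction, and the identification of $\X$ up to isomorphism with the free objects $(FY)^*$ for finite supported sets $Y$ (\autoref{L:strongprops}). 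What this buys is that the equivariant well-definedness check---the only place where strongness genuinely enters the direct proof---is absorbed once and for all into the adjunction, after which lifting in $\SuppSet$ really is just a pointwise choice of preimages; it also makes visible the asymmetry that support-reflection yields projectivity of \emph{all} supported sets, whereas the converse needs only singletons, i.e.\ the single-orbit generators $(\At^{\#n})^*$. The two arguments have the same mathematical content; yours is the more conceptual rendering and is complete as written.
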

It is now easy to verify that the assumptions~\ref{A1}--\ref{A3} are
satisfied. For~\ref{A1} use that products of nominal monoids are
formed on the level of their underlying nominal sets. For~\ref{A2},
use that finite products and equivariant subsets of orbit-finite
nominal sets are again orbit-finite. For~\ref{A3}, let $M$ be an
orbit-finite nominal monoid. By \autoref{L:strongprops}, we can
express the underlying nominal set of $M$ as a quotient
$e\colon \Sigma\epito M$ in $\Nom$ of an orbit-finite strong nominal set
$\Sigma$, where $e$ preserves least supports. Then the unique extension
$\widehat{e}\colon \Sigma^* \epito M$ to an equivariant monoid morphism is
support-reflecting. Indeed, given $m\in M$, choose $a\in \Sigma$ with
$e(a)=m$. Then $\widehat{e}(a)=m$ and
$\supp_{\Sigma^*}(a) = \supp_\Sigma(a) = \supp_M(e(a))=\supp_M(m)$.

Finally, observe that the above general concepts of a variety and
of an equational theory specialize to the ones of a
pseudovariety of nominal monoids and of an equational theory of
nominal monoids, respectively. Thus, \autoref{thm:theories_vs_pseudovars} is an instance of
 \autoref{thm:hsp} and \autoref{cor:stronghsp}.
\qed

\section*{Proof of \autoref{thm:nomeilschuetz}}
The key ingredient of the proof is an observation about congruences. A \emph{congruence} on a nominal monoid $M$ is an equivariant equivalence relation $\mathord{\equiv}\seq M\times M$ such that, for $m,n\in M$,
\[m\equiv n \quad{\text{implies}}\quad pmq \equiv pnq \text{ for all $p,q\in M$}. \]
There is an isomorphism of complete lattices
\[ \text{congruences on $M$} \quad\cong\quad \text{nominal quotients monoids of $M$} \]
mapping a congruence $\equiv$ to the quotient monoid $e\colon M\epito
M/\mathord{\equiv}$, where $M/\mathord{\equiv}$ is the monoid of all
$\equiv$-congruence classes (with multiplication $[m]\bullet [n] =
[mn]$ for $m,n\in M$ and unit $[1_M]$), and $e$ sends $m$ to
$[m]$. The inverse isomorphism maps a nominal quotient monoid $e\colon M\epito N$
to its \emph{kernel}, i.e.~the congruence $\equiv_e$ on $M$ given by
$m\equiv_e m'$ iff $e(m)=e(m')$. As a consequence, we get
\begin{theorem}[Homomorphism Theorem]\label{T:hom}
  Given two equivariant monoid morphisms $e\colon M\epito N$ and
  $h\colon M\to P$ with $e$ surjective, there exists a morphism
  $g\colon N\to P$ with $g\o e = h$ if and only if the kernel of $e$
  is contained in the kernel of $h$.
\end{theorem}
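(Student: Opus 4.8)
The plan is to establish this as the usual homomorphism (factorization) theorem, now for monoids in $\Nom$. The \emph{``only if''} direction is immediate: if some $g\colon N\to P$ satisfies $g\o e=h$, then $e(m)=e(m')$ forces $h(m)=g(e(m))=g(e(m'))=h(m')$, so the kernel of $e$ is contained in the kernel of $h$.

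For the \emph{``if''} direction, assume the kernel of $e$ is contained in the kernel of $h$. Since $e$ is surjective, every $n\in N$ is of the form $n=e(m)$ for some $m\in M$, and I would set $g(n):=h(m)$. The first step is to check that this is well defined: if $e(m)=e(m')$, then $m$ and $m'$ are identified by the kernel of $e$, hence also by the kernel of $h$, so $h(m)=h(m')$, and the value $g(n)$ is independent of the chosen preimage. The identity $g\o e=h$ then holds by construction.

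It remains to verify that $g$ is an equivariant monoid morphism. For multiplicativity, choose preimages $m_1,m_2\in M$ of $n_1,n_2\in N$; as $e$ is a monoid morphism, $m_1 m_2$ is a preimage of $n_1 n_2$, whence $g(n_1 n_2)=h(m_1 m_2)=h(m_1)\,h(m_2)=g(n_1)\,g(n_2)$, and similarly $g(1_N)=g(e(1_M))=h(1_M)=1_P$. For equivariance, given $\pi\in\Perm(\At)$ and a preimage $m$ of $n$, equivariance of $e$ yields $\pi\o n=e(\pi\o m)$, so $g(\pi\o n)=h(\pi\o m)=\pi\o h(m)=\pi\o g(n)$ by equivariance of $h$.

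There is no genuine obstacle here: the only specifically nominal ingredient is the final equivariance check, and it reduces at once to the equivariance of $e$ and $h$. The real content is well-definedness, which is precisely what the kernel inclusion supplies. Alternatively, the statement can be read off the congruence--quotient lattice isomorphism recalled above: the hypothesis says that the kernel of $e$ is contained in the kernel of $h$ as congruences on $M$, so the quotient $M\epito M/\mathord{\equiv_h}$ factors through $e$, and composing with the embedding $M/\mathord{\equiv_h}\monoto P$ induced by $h$ produces the desired $g$.
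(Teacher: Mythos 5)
Your proof is correct. Note that the paper does not write out a proof of this theorem at all: it states the Homomorphism Theorem as an immediate consequence of the lattice isomorphism between congruences on $M$ and nominal quotient monoids of $M$, which is precisely the alternative route you sketch in your final paragraph (order-preservation of that isomorphism gives that $M\epito M/\mathord{\equiv_h}$ factors through $e$, and composing with the embedding $M/\mathord{\equiv_h}\monoto P$ coming from the image factorization of $h$ in $\NomMon$ yields $g$). Your main argument is the elementary verification that the paper leaves implicit: defining $g$ on preimages, with well-definedness supplied exactly by the kernel inclusion, and then checking multiplicativity, preservation of the unit, and equivariance by hand. Both arguments are sound. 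The direct one is self-contained and makes transparent that the only specifically nominal ingredient is the (trivial) equivariance check, so that no issues of finite support arise; the lattice-isomorphism route is shorter but presupposes the congruence--quotient correspondence, including the fact that the factorizing map it produces is itself an equivariant monoid morphism.
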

A congruence  $\mathord{\equiv}$ on $M$ is is called \emph{orbit-finite} if the corresponding quotient monoid
$M/\mathord{\equiv}$ is orbit-finite, and \emph{finitely generated} if
there is a finite subset $W\seq \mathord{\equiv}$ such that
$\equiv$ is the least congruence on $M$ containing $W$.
\begin{lemma}\label{L:cong}
 Let 
  $X$ be an orbit-finite nominal set. Then every orbit-finite congruence
  on the free nominal monoid $X^*$ is finitely generated.
\end{lemma}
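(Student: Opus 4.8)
The plan is to adapt the classical Eilenberg--Schützenberger argument for ordinary monoids, while circumventing the fact that a \emph{global, equivariant} choice of normal forms—the crux of the classical proof—is in general unavailable here. Classically, one picks a representative word $r_m\in X^*$ for each $m$ in the (finite) quotient monoid and generates the kernel congruence by the finitely many pairs $(r_m a,\, r_{m\cdot e(a)})$ with $a$ ranging over the (finite) alphabet. In our setting both the alphabet $X$ and the quotient $M = X^*/\mathord{\equiv}$ are merely orbit-finite, hence typically infinite, and—as soon as $M$ has a non-strong element—\emph{no} equivariant section $M\to X^*$ exists: any representative word has support containing $\supp_M(m)$, so it cannot be fixed by those permutations in the stabiliser of $m$ that move $\supp_M(m)$. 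The key idea is therefore to treat the choice of normal forms as a mere (non-equivariant) proof device, and to extract from it a \emph{canonical, orbit-finite} generating relation by bounding word length. A genuinely finite generating set is then recovered by choosing one pair per orbit, since the least congruence containing a finite set $W$ is automatically equivariant and hence already contains the whole orbit $\Perm(\At)\cdot W$.

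First I would establish a uniform length bound for representatives. Writing $X^{\leq n}=\bigcup_{k\leq n} X^k$, each $X^{\leq n}$ is orbit-finite, being a finite coproduct of finite products of the orbit-finite set $X$ (orbit-finite nominal sets are closed under finite products and coproducts, see \autoref{S:nom}). Since $e\colon X^*\epito M$ is equivariant, each image $e[X^{\leq n}]$ is an equivariant subset of $M$, and these grow to $e[X^*]=M$. Picking one word $w_i$ over each of the finitely many orbits of $M$ and setting $n=\max_i |w_i|$ yields $e[X^{\leq n}]=M$: the equivariant set $e[X^{\leq n}]$ contains a representative of every orbit, hence every orbit. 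Thus every $m\in M$ admits a representative $r_m\in X^{\leq n}$, and we put $r_{1_M}=\epsilon$.

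Next I would run the classical normal-form induction with these (non-equivariant) representatives to show that $G_0=\{(r_m a,\, r_{m\cdot e(a)}) : m\in M,\ a\in X\}$ generates $\mathord{\equiv}$, where $m\cdot e(a)$ denotes the product in $M$. Every pair in $G_0$ lies in $\mathord{\equiv}$ since $e(r_m a)=m\cdot e(a)=e(r_{m\cdot e(a)})$; conversely, an induction on word length—reading off the last letter, using right multiplication, and applying a single rewrite step from $G_0$—shows $w\approx r_{e(w)}$ for the congruence $\approx$ generated by $G_0$. Hence $u\equiv v$ forces $u\approx r_{e(u)}=r_{e(v)}\approx v$, so $\approx=\mathord{\equiv}$.

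Finally I would replace $G_0$ by a genuinely orbit-finite relation. By construction $G_0\subseteq X^{\leq n+1}\times X^{\leq n+1}$, so I set $G=\mathord{\equiv}\cap(X^{\leq n+1}\times X^{\leq n+1})$. As an equivariant subset of an orbit-finite nominal set, $G$ is orbit-finite, and $G_0\subseteq G\subseteq\mathord{\equiv}$ forces $G$ to generate $\mathord{\equiv}$ as well. Choosing one pair from each of the finitely many orbits of $G$ gives a finite set $W\subseteq\mathord{\equiv}$; since any congruence containing $W$ is equivariant, it contains $\Perm(\At)\cdot W=G$, and therefore the congruence generated by $W$ is exactly $\mathord{\equiv}$. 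The main obstacle, and the genuine point of departure from the classical proof, is precisely the non-existence of equivariant normal forms; everything hinges on replacing them by the length-bounded, orbit-finite relation $G$, together with the observation that orbit-finiteness of a generating relation already entails finite generation in the equivariant sense.
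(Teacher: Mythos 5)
Your proof is correct, and it reaches the result by a genuinely different decomposition than the paper, even though both rest on the same two pillars: a length bound on representatives extracted from orbit-finiteness of $X^*/\mathord{\equiv}$, and equivariance of congruences as the device that turns finitely many generating pairs into the full relation. The paper constructs its finite generating set in one shot: it bounds the supports of words of bounded length by some $n$, fixes $2n$ concrete atoms, and lets $W$ consist of all $\equiv$-related pairs of short words whose supports lie inside that fixed atom set; generation is then shown by an intertwined induction in which every short $\equiv$-pair is recovered from $W$ by applying a permutation, and every word is reduced to a short one. You instead split the argument into two independent steps: first, a classical Eilenberg--Schützenberger rewriting argument with arbitrarily chosen, non-equivariant normal forms $r_m$, showing that the equivariant orbit-finite relation $G=\mathord{\equiv}\cap(X^{\leq n+1}\times X^{\leq n+1})$ generates $\mathord{\equiv}$; second, the general principle that an orbit-finite equivariant generating relation can be thinned to one pair per orbit, because any congruence containing those representatives, being equivariant by definition, contains all of $G$. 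What your route buys is modularity: the second step is a reusable lemma (``orbit-finite generating relation implies finitely generated''), and the rewriting step is exactly the classical argument, so the specifically nominal content is quarantined in one place; what the paper's route buys is an explicit description of the finite set $W$ (pairs supported by $2n$ named atoms), which is closer to something one could enumerate. One small caveat that does not affect the proof: your motivational claim that no equivariant section $M\to X^*$ exists whenever $M$ has a non-strong element is only justified when the letters of $X$ are themselves strong (for general $X$, words need not be strong elements); since the $r_m$ are used purely as a non-equivariant device, nothing downstream depends on this.
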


\begin{proof}
  Let $\equiv$ be an orbit-finite congruence on
  $X^*$.  Since $X^*/\mathord{\equiv}$ is orbit-finite, there exists a natural number $k>0$ such that each
  congruence class contains a word of length $<k$. Moreover, there
  exists a natural number $n$ such that every word of length $\leq k$
  in $X^*$ has a support of size $\leq n$. This follows from
  the fact that (a)~the least support of a word $a_1\cdots a_m\in \Sigma^*$ is given by $\supp_{X^*}(a_1\cdots a_m) = \bigcup_{i=1}^m
  \supp_X(a_i)$, and (b)~the size of least supports
  of elements of $X$  has an upper bound because $X$ is orbit-finite and for
  elements $x, x' \in X$ in the same orbit we have $|\supp_X(x)| =
  |\supp_X(x')|$.
  
  We now fix a list of $2n$ pairwise distinct atoms
  $a_1,\ldots, a_{2n}\in \At$, and consider the subset
  $W\seq \mathord{\equiv}$ defined by
  \[
    W := \{\, (s,t)\;:\; s\equiv t,\, \under{s}<k,\, \under{t}\leq
    k,\, \supp(s),\supp(t)\seq \{a_1,\ldots, a_{2n}\}  \,\}.
  \]
  Here $\under{\dash}$ denotes the length of a word. The set $W$ is
  finite because $X$ contains only finitely many elements with support $\{a_1,\ldots, a_{2n}\}$, see \autoref{L:finsupp}. Let
  $\sim$ be the congruence on $X^*$ generated by
  $W$ (i.e.~the intersection of all congruences containing $W$). We claim that $\mathord{\sim} = \mathord{\equiv}$, which proves
  that $\equiv$ is finitely generated.
  \begin{enumerate}
  \item\label{L:cong:1} For every pair $s,t\in X^*$ with $s\equiv t$,
    $\under{s}<k$ and $\under{t}\leq k$, one has $s\sim t$. Indeed,
    since $s$ and $t$ have supports of size $\leq n$, there exist
    pairwise distinct atoms $b_1,\ldots, b_{2n}$ such that
    $\supp(s),\supp(t)\seq \{b_1,\ldots, b_{2n}\}$. Choose a
    finite permutation
    $\pi\in \Perm(\At)$ with $\pi(b_i)=a_i$ for $i=1,\ldots, 2n$. Then the pair
    $(\pi\o s, \pi\o t)$ lies in $W$; indeed, we have
    \begin{enumerate}
    \item $\pi\o s \equiv \pi\o t$ because $s\equiv t$ and
      $\equiv$ is equivariant;
    \item $\under{\pi\o s}<k$ and $\under{\pi\o t}\leq k$
      because $\under{s}<k$ and $\under{t}\leq k$ and
      $\pi\o (\dash)$ preserves the length of words;
    \item
      $\supp(\pi\o s), \supp(\pi\o t)\seq \{a_1,\ldots,
      a_{2n}\}$ because $\supp(s),\supp(t)\seq \{b_1,\ldots,b_{2n}\}$.
    \end{enumerate}

    \noindent Since $(s,t)=\pi^{-1}\o (\pi\o s, \pi\o t)$ and $(\pi\o s, \pi\o t)\in W$, we
    conclude that the pair $(s,t)$ lies in the congruence
    generated by $W$, i.e.~$s\sim t$.
    
  \item\label{L:cong:2} We claim that for every word $s\in X^*$, there exists a
    word $t\in X^*$ of length $<k$ with $s\sim t$. The proof is
    by induction on $\under{s}$. If $\under{s}<k$, one may take
    $t=s$. Thus let $\under{s}\geq k$. Since $k>0$, we have that
    $s=xs'$ for some $x\in X$ and $s'\in X^*$. Since $\under{s'}<\under{s}$, we have that
    $s'\sim t'$ for some word $t'\in X^*$ of length $<k$ by
    induction. This implies $s=xs'\sim xt'$ because
    $\sim$ is a congruence. By the choice of the number $k$, there
    exists a word $t\in X^*$ of length $<k$ with
    $t\equiv xt'$. Since $\under{t}<k$ and
    $\under{xt'}\leq k$, we have $t\sim xt'$ by
    Part~\ref{L:cong:1}. Therefore, $s\sim xt'\sim t$, as required.
      
  \item We are ready two prove that $\mathord{\sim} = \mathord{\equiv}$. The inclusion
    $\seq$ is obvious. Conversely, suppose that $s\equiv t$. By
    Part~\ref{L:cong:2}, there are words $s'$ and $t'$ of length $<k$
    with $s\sim s'$ and $t\sim t'$. Since $\mathord{\sim}\seq \mathord{\equiv}$, we
    get $s'\equiv s\equiv t\equiv t'$. Part~\ref{L:cong:1} then shows
    that $s'\sim t'$. Thus, we conclude $s\sim s' \sim t'\sim t$.
    \qedhere
  \end{enumerate}
\end{proof}

\begin{rem}\label{rem:countable}
  As an important consequence of the previous lemma, we note that
  there are only countably many orbit-finite nominal monoids up to
  isomorphism. This is because (1)~every orbit-finite nominal monoid
  is a quotient of $X^*$ for some orbit-finite set $X$, (2)~up to
  isomorphism, there are only countably many orbit-finite sets $X$
  (this follows from \cite[Theorem 5.13]{pitts2013}; see
  also~\cite[Lemma~A.1]{amsw19}), and (3)~$X^*$ is countably infinite,
  and thus has only countably many orbit-finite quotient monoids by
  \autoref{L:cong} and the coincidence between quotients and
  congruences.
\end{rem}
We are ready to prove \autoref{thm:nomeilschuetz}.

\subsection*{Proof of the ``if'' direction}
Let $E=(s_n=t_n)_{n\in \Nat}$ be a sequence of nominal equations, where $s_n,t_n\in X_n^*$ for a strong orbit-finite nominal set $X_n$. We need to show that $\pvar(E)$ forms a weak pseudovariety.

\medskip
\noindent \emph{Closedness under finite products.} Let
$M,N\in \pvar(E)$. Then there exists $n_0\in\Nat$ such that both $M$
and $N$ satisfy the equations $s_n=t_n$ for $n\geq n_0$. Let
$h\colon X_n\to M\times N$ be an equivariant map, and denote by
$\pi_M\colon M\times N\to M$ and $\pi_N\colon M\times N\to N$ the
projections. Then, for every $n\geq n_0$,
\[
  \pi_M\o \widehat{h}(s_n)
  =
  \widehat{\pi_M\o h}(s_n)
  =
  \widehat{\pi_M\o h}(t_n)
  =
  \pi_M\o \widehat{h}(t_n)
\]
and analogously $\pi_N\o \widehat{h}(s_n)=\pi_N\o \widehat{h}(t_n)$. This
implies $\widehat{h}(s_n)=\widehat{h}(t_n)$. Thus $M\times N$ satisfies
$s_n=t_n$ for $n\geq n_0$, whence~$M\times N\in \pvar(E)$.

\medskip
\noindent \emph{Closedness under submonoids.} Let $M\in \pvar(E)$ and
let $m\colon N\monoto M$ be a nominal submonoid of $M$. Choose
$n_0\in \Nat$ such that $M$ satisfies all equations $s_n=t_n$ with
$n\geq n_0$. Then, for each equivariant map $h\colon X_n\to N$ and 
$n\geq n_0$, one has 
\[
  m \cdot \widehat{h}(s_n)
  =
  \widehat{m\o h}(s_n)
  =
  \widehat{m\o h}(t_n)
  =
  m \cdot \widehat{h}(t_n).
\]
Since $m$ is monomorphic, this implies that $\widehat h (s_n) = \widehat h (t_n)$.
Thus, $N$ satisfies $s_n=t_n$ for $n\geq n_0$, whence $N\in \pvar(E)$.

\medskip
\noindent \emph{Closedness under quotient monoids.} Let
$M\in \pvar(E)$ and let $e\colon M\epito N$ be a support-reflecting
surjective equivariant monoid morphism. Choose $n_0\in \Nat$ such that $M$
satisfies all equations $s_n=t_n$ with $n\geq n_0$. Given an
equivariant map $h\colon X_n\to N$, since $e$ is support-reflecting,
there exists an equivariant map $g\colon X_n\to M$ with $h=e\o g$, see
\autoref{lem:supprefl}. It follows that, for each $n\geq n_0$,
\[
  \widehat{h}(s_n)
  =
  \widehat{e \cdot g}(s_n)
  =
  e\o \widehat{g}(s_n)
  =
  e\o \widehat{g}(t_n)
  =
  \widehat{e \cdot g}(t_n)
  =
  \widehat{h}(t_n).
\]
Thus $N$ satisfies $s_n=t_n$ for $n\geq n_0$, whence $N\in \pvar(E)$.
\qed

\subsection*{Proof of the ``only if'' direction}
Let $\pvar$ be a
  weak pseudovariety of nominal monoids. We need to find a sequence $E$ of equations with $\pvar=\pvar(E)$.
  \begin{enumerate}
  \item\label{T:es:1} First, we prove that there exists a sequence of monoids
    $M_0,M_1,M_2,\ldots \in \pvar$ such that every $M\in \pvar$ is a
    support-reflecting quotient of $M_n$ for all but finitely many
    $n$. To see this, let $S_0,S_1,S_2,\ldots $ be the countably many
    elements of $\pvar$ up to isomorphism (see
    \autoref{rem:countable}), and put
    \[
      M_n = S_0\times S_1\times \cdots \times S_n\quad \text{for
        $n\geq 0$}.
    \]
    For each $M\in \pvar$ one has $M\cong S_m$ for some $m\geq 0$. Then,
    for $n\geq m$, we obtain the surjective morphism $M_n \epito
    M$ given by the projection
    \[
      M_n = S_0\times S_1\times \cdots \times S_n
      \xra{\pi_{n,m}} 
      S_m \xra{\cong} M.
    \]
    Note that $\pi_{n,m}$ is support-reflecting: for each
    $s\in S_m$ one has
    \[
      \pi_{n,m}(1,\ldots, 1, s , 1,\ldots , 1) = s,
    \]
    and since the unit $1$ of each nominal monoid has empty
    support, one has
    $\supp_{M_n}(1,\ldots, 1, s , 1,\ldots , 1) =
    \supp_{S_m}{s}$. Thus, we have shown that $M$ is a support-reflecting quotient of
    $M_n$ for each $n\geq m$.

  \item\label{T:es:2} Let $X_0,X_1,X_2,\ldots$ be a sequence that enumerates all orbit-finite
    strong nominal sets (up to isomorphism);
    cf.~\autoref{rem:countable}. For each $n\geq 0$, consider the
    following congruence $\equiv_n$ on $X_n^*$:
    \[
      s\equiv_n t \quad\Lra\quad h(s)=h(t)\text{ for all morphisms
        $h\colon X_n^*\to M_n$}.
    \]
    In other words, $\equiv_n$ is the set of all equations over $X_n$ satisfied
    by $M_n$. We claim that
    \[
      X_n^*/\mathord{\equiv_n} \in \pvar
      \quad
      \text{for all $n \in \Nat$}.
    \]
    To see this, note first that there are only finitely many
    morphisms $X_n^*\to M_n$: such a morphism bijectively corresponds
    to an equivariant map $X_n\to M_n$, which in turn corresponds to a
    map of supported sets $Y_n\to M_n$, where $Y_n$ is a finite
    supported set with $X_n\cong F(Y_n)$ (see
    \autoref{lem:nomreflective} and \ref{L:strongprops}). Since $M_n$
    contains only finitely many elements of any given support by
    \autoref{L:finsupp}, there exist only finitely many such maps.

    Letting $h_0,\ldots, h_p$ denote the morphisms from $X_n^*$ to
    $M_n$, the induced morphism
    \[
      \langle h_i\rangle_{i\leq p} \colon X_n^*\to \prod_{i\leq p} M_n
    \]
    has kernel $\equiv_n$. Thus, we can conclude that
    $X_n^*/\mathord{\equiv_n}$ is a submonoid of the finite product
    $\prod_i M_n$, and therefore it lies in $\pvar$.

  \item Since $X^*_n/\mathord{\equiv_n} \in \pvar$ by part~\ref{T:es:2}
    above, $\equiv_n$ is an orbit-finite congruence on $X_n^*$ for
    each $n\geq 0$. Therefore, by \autoref{L:cong}, there exists a
    finite generating subset $W_n$ of $\equiv_n$. We may assume that
    $W_n$ is nonempty, so that $\bigcup_{n} W_n$ is a countably
    infinite set of equations. Let $E=(s_k=t_k)_{k\in \Nat}$ be a sequence that
    enumerates all equations in $\bigcup_n W_n$. We claim that
    $\pvar=\pvar(E)$.

    \smallskip To prove ``$\seq$'', let $M\in \pvar$. By Part (1), there exists $m\geq 0$ such that
    $M$ is a support-reflecting quotient of $M_n$ for each $n\geq m$. Since
    $M_n$ satisfies all equations in $W_n$, it follows that also $M$
    satisfies them for $n\geq m$ (using that satisfaction of equations
    is preserved by support-reflecting quotients, as shown in the ``only if'' direction of our proof). Thus $M$ satisfies
    all equations in $\bigcup_{n\geq m} W_n$, i.e.~it eventually
    satisfies $E$.

    \smallskip For the proof of ``$\supseteq$'', let $M\in
    \pvar(E)$. Note that there are infinitely many $n\in \Nat$ such
    that a support-reflecting quotient $h\colon X_n^*\epito M$ exists
    (indeed, one may take any orbit-finite strong nominal set $X_n$ of
    the form $\At^{\#k_0} + \cdots + \At^{\#k_{m-1}} + \At^{\#k}$,
    where $m$ is the number of orbits of $M$, $k_i$ is the size of the
    least support of any element of the $i^\text{th}$ orbit, and $k$ is any
    natural number with $k\geq k_0$). In particular, since $M$
    satisfies all but finitely many of the equations in $E$, one can
    choose $n\in\Nat$ such that (a)~$M$ satisfies the equations in
    $W_n$, and (b)~a support-reflecting quotient
    $h\colon X_n^*\epito M$ exists. Then
    \[
      h(s)=h(t) \quad\text{for all $(s,t)\in W_n$}, 
    \]
    that is, $W_n$ is contained in the kernel of $h$. Since $W_n$
    generates $\equiv_n$, it follows that also $\equiv_n$ is contained
    in the kernel of $h$. The Homomorphism \autoref{T:hom} then yields a
    morphism $e\colon X_n^*/\mathord{\equiv_n}\epito M$ with
    $h=e\o q$, where $q$ is the quotient
    corresponding to $\equiv_n$:
    \[
      \xymatrix{
        X_n^* \ar@{->>}[dr]_h \ar@{->>}[r]^-{q}
        & X_n^*/\mathord{\equiv_n} \ar@{->>}[d]^e \\
        & M
      }
    \]
    Note that $e$ is support-reflecting because $h$ is. Since
    $X_n^*/\mathord{\equiv_n}\in \pvar$ by part~\ref{T:es:2} and $\pvar$ is
    closed under support-reflecting quotients, we conclude that
    $M\in \pvar$.\qed
\end{enumerate}

\section*{Details for \autoref{ex:aperiodic}}
 We show that an orbit-finite nominal monoid $M$ is aperiodic if and only if it eventually satisfies the equations  
  \begin{equation}\label{eq:seqeq}
    y: S_n \vdash y^{n+1}=y^n \qquad (n \in \Nat),
  \end{equation}
  where $S_n = \{a_0,a_1,\ldots, a_{n-1}\}$ is the set of the first
  $n$ atoms in the countably infinite set $\At=\{a_0,a_1,a_2\ldots\}$
  of all atoms, and we write $y$ for $(\id, y)\in \Perm(\At)\times Y$.

For the ``if'' direction, suppose that $M$ is aperiodic, i.e. there exists a natural number $m\geq 1$ such that $x^{m+1}=x^m$ for every $x\in M$. This implies that $x^{n+1}=x^n$ for all $n\geq m$ and $x\in M$. In particular, this holds for all elements $x$ with support $S_n$, i.e. \eqref{eq:seqeq} is satisfied for all $n\geq m$.  

For the ``only if'' direction, suppose that $M$ eventually satisfies the equations \eqref{eq:seqeq}. Choose $n\in \Nat$ such that $M$ satisfies the  equation $y: S_n \vdash y^{n+1}=y^n$ and every element of $M$ has a support of size $n$. Given any element $x\in M$, choose a finite permutation $\pi\in\Perm(\At)$ that maps a support of $x$ to $S_n$. The $\pi\o x$ is as element with support $S_n$, and thus $(\pi\o x)^{n+1}=(\pi\o x)^n$ holds. This implies $x^{n+1}=x^n$ because the monoid multiplication on $M$ is equivariant. Thus, $M$ is aperiodic.

\section*{Proof of \autoref{thm:nominalstonedual}}
Recall that $\Nomfs$ denotes the category of nominal sets and finitely
supported maps and $\nCABAfs$ the category of ncabas and finitely
supported boolean homomorphisms preserving suprema of finitely
supported subsets. We know from
Petri\c{s}an~\cite[Prop.~5.3.11]{petrisan12} that the categories
$\nCABA$ and $\Nom$ are dually equivalent with the equivalence
functors being the restrictions of the ones in
\autoref{rem:nom_vs_ncaba}. Thus, in order to obtain the desired
result we prove the following
\begin{proposition}\label{P:dualityfs}
  The dual equivalence $\Nom\xra{\simeq}\nCABA^{\op}$ extends to a
  dual equivalence $\Nomfs^{\op}\xra{\simeq} \nCABAfs$ given by
  $X\mapsto \Pow X$ and $f\mapsto f^{-1}$.
\end{proposition}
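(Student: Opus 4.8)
The plan is to upgrade the equivariant dual equivalence $\Nom \xra{\simeq} \nCABA^{\op}$ of Petri\c{s}an to the finitely supported setting by showing that the \emph{same} object assignment $X\mapsto \Pow X$, together with the preimage assignment $f\mapsto f^{-1}$, defines a fully faithful and essentially surjective functor $\Nomfs^{\op}\to\nCABAfs$. Essential surjectivity is immediate: by the equivariant duality every ncaba $B$ is isomorphic to $\Pow(\Atom B)$ via an equivariant, hence a fortiori finitely supported, morphism. Thus the whole content lies in (i) checking that $f^{-1}$ is a well-defined morphism of $\nCABAfs$ whenever $f$ is finitely supported, and (ii) showing that $f\mapsto f^{-1}$ is a bijection between finitely supported maps $X\to Y$ and finitely supported ncaba morphisms $\Pow Y\to \Pow X$ (i.e.~full faithfulness of the contravariant functor).

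For (i), I would write $S = \supp(f)$. The defining property $\pi\o f = f$ for $\pi\in\Perm_S(\At)$ unfolds, via the action on the exponential, to the statement that $f$ commutes with the $\Perm_S(\At)$-action, i.e.~$f(\pi\o x) = \pi\o f(x)$ for all $\pi\in\Perm_S(\At)$ and $x\in X$. Since preimages preserve unions, intersections and complements, $f^{-1}\colon\Pow Y\to \Pow X$ preserves all boolean operations; and since preimages preserve arbitrary unions it preserves suprema of finitely supported subsets, which in $\Pow$ are unions. A short calculation using the displayed equivariance shows $f^{-1}(\pi\o B) = \pi\o f^{-1}(B)$ for $\pi\in\Perm_S(\At)$; this simultaneously proves that $f^{-1}(B)$ is finitely supported (by $S\cup\supp(B)$) whenever $B$ is, and that $f^{-1}$ is itself finitely supported with $\supp(f^{-1})\seq S$. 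Functoriality, $(g\o f)^{-1} = f^{-1}\o g^{-1}$ and $\id^{-1}=\id$, is routine.

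For (ii), faithfulness is clear: the singletons $\{y\}$ are finitely supported, and $f$ is recovered from $f^{-1}$ via $f(x)=y \Leftrightarrow x\in f^{-1}(\{y\})$, so $f^{-1}=g^{-1}$ forces $f=g$. The crux is fullness. Given a finitely supported ncaba morphism $\phi\colon \Pow Y\to \Pow X$, I would exploit atomicity. The set $\{\,\{y\}:y\in Y\,\}$ of atoms of $\Pow Y$ is \emph{equivariant} (empty support), so its join $\bigvee_y\{y\}=\top$ is a finitely supported supremum and is therefore preserved by $\phi$. Together with $\phi(\{y\}\wedge\{y'\})=\bot$ for $y\neq y'$, this shows that the family $(\phi(\{y\}))_{y\in Y}$ is a partition of $X$. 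Hence each $x\in X$ lies in exactly one block, and I define $f(x)$ to be the corresponding $y$, so that $f^{-1}(\{y\})=\phi(\{y\})$. Expanding an arbitrary $B\in\Pow Y$ as the finitely supported join $B=\bigvee_{y\in B}\{y\}$ and using that $\phi$ preserves it yields $\phi(B)=f^{-1}(B)$, i.e.~$\phi=f^{-1}$. Finally, the same $\Perm_S(\At)$-equivariance argument as in (i), now applied to $\phi$ with $S=\supp(\phi)$, gives $f(\pi\o x)=\pi\o f(x)$ for $\pi\in\Perm_S(\At)$, so $f$ is finitely supported.

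Putting this together, $\Pow$ is fully faithful and essentially surjective, hence a dual equivalence $\Nomfs^{\op}\to\nCABAfs$, and by construction it restricts on equivariant morphisms to the duality of \cite{petrisan12}. I expect the main obstacle to be precisely the fullness step: one must be sure the partition argument goes through with only \emph{finitely supported} (rather than arbitrary) suprema available, which hinges on the observation that the set of all atoms of $\Pow Y$ has empty support and hence its join is among the suprema that $\phi$ is required to preserve.
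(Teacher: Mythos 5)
Your proof is correct and follows essentially the same route as the paper's: well-definedness of $f\mapsto f^{-1}$ via the $\Perm_S(\At)$-equivariance computation, essential surjectivity inherited from the equivariant duality on objects, faithfulness via finitely supported singletons, and fullness by turning the images of atoms $(\phi(\{y\}))_{y}$ into a partition and reading off the map. Your explicit remark that the covering property of this partition needs preservation of the join of the set of all atoms --- available precisely because that set is equivariant, hence finitely supported --- makes rigorous a step the paper compresses into ``since $g$ is a homomorphism of boolean algebras,'' so it is a clarification of the same argument rather than a different one.
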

\begin{proof}
  We first show that the above functor $\Nomfs^{\op}\to \nCABAfs$ is
  well-defined, that is, for every finitely supported map
  $f\colon Y\to X$ the map $f^{-1}\colon \Pow X \to \Pow Y$ is also
  finitely supported. Let $S\seq \At$ be a finite support of $f$. Then
  for every $X_0\in \Pow X$ and $\pi\in \Perm_S(\At)$ one has
  $\pi\o f^{-1}[X_0] = f^{-1}[\pi\o X_0]$ since, for all $y\in Y$,
  \begin{align*}
    y\in \pi\o f^{-1}[X_0] & \Lra \pi^{-1}\o y \in f^{-1}[X_0]\\
    & \Lra f(\pi^{-1}\o y)\in X_0 \\
    & \Lra \pi^{-1}\o f(y)\in X_0\\
    & \Lra f(y) \in \pi\o X_0\\
    &\Lra y\in f^{-1}[\pi\o X_0]
  \end{align*} 
  This proves that $f^{-1}$ has support $S$.

  It remains to verify that the functor $\Nomfs^{\op}\to \nCABAfs$ is (1) essentially surjective
  on objects, (2) faithful and (3) full. Point (1) is clear because on
  objects the functor coincides with the equivalence functor
  $\Nom\xra{\simeq}\nCABA^{\op}$. For (2), suppose that
  $f\neq g\colon Y\to X$ are two distinct finitely supported
  maps. Choose $y\in Y$ with $f(y)\neq g(y)$. Since the singleton $\set{f(y)}$ is
  finitely supported, we may apply $f^{-1}$ and $g^{-1}$ to it, and we
  clearly have $f^{-1}\set{f(y)} \neq g^{-1}\set{f(y)}$. Thus $f^{-1}\neq g^{-1}$. For
  (3), let $g\colon \Pow X\to \Pow Y$ be a morphism in $\nCABAfs$ with
  finite support $S\seq \At$. Then the sets $g(\{x\})\seq Y$
  ($x\in X$) form a partition of $Y$ since $g$ is a homomorphism of
  boolean algebras, i.e.~for every $y\in Y$ there exists a unique
  $x_y\in X$ with $y\in g(\{x_y\})$. Consider the map $f\colon Y\to X$ defined by
  $f(y):= x_y$. We claim that $f$ has support $S$. To see this, let
  $\pi \in \Perm_s(\At)$. Then we have
  $\pi\cdot g(\{x\}) = g(\{\pi \cdot x\})$ for every $x \in X$. Since
  $f(y) = x_y$ where $y \in g(\{x_y\})$ we know
  $\pi \cdot y \in \pi \cdot g(\{x_y\}) = g(\{\pi \cdot x_y\})$,
  hence, $f(\pi \cdot y) = \pi \cdot x_y = \pi \cdot f(y)$ as desired.

Finally, we show that $g = f^{-1}$. By the definition of $f$ the two
maps agree on singleton sets $\{x\}$. It follows that they agree on
all finitely supported subsets of $X$: every such set can be expressed
as the finitely supported union of its singleton subsets, and both $g$
and $f^{-1}$ preserve that union.
\end{proof}

\section*{Details for \autoref{ex:reclanguages}}

(1)~See \autoref{rem:localvar}.

\smallskip\noindent
(2)~We prove that the language $L=a\At^*$ (for a fixed atom $a\in \At$) is recognizable. Consider the orbit-finite monoid $M=\{1\}+\At$ with multiplication given by
\[ b\bullet c = b \quad\text{for all $b,c\in \At$}. \]
Let $e\colon \At^*\to M$ be the unique morphism given by $b\mapsto b$. Thus $e$ maps $\epsilon$ to $1$ and every word $bw$ ($w\in \Sigma^*$, $b\in \At$) to $b$. It follows that
\[ L=e^{-1}[\{a\}]. \]
Since $\{a\}$ is a finitely supported subset of $\At$, this shows that $e$ recognizes $L$.

\smallskip\noindent(3)--(5) are equivariant languages and were treated
in~\cite{bkl14,cpl15}.

\smallskip\noindent
(6)~We prove that the language $L=\At^*a\At^*$ (for a fixed atom $a\in \At$) is not recognizable. Suppose for the contrary that there exists an equivariant monoid morphism $e\colon \At^*\to M$ into an orbit-finite monoid $M$ that recognizes $L$. We claim that 
\[ \supp(e(w))=\supp(w)\quad \text{for every $w\in \At^*\setminus L$}. \]
Since $\At^*$ contains words of arbitrarily large finite support, this implies that also $M$ contains elements of arbitrarily large finite support, contradicting the assumption that $M$ is orbit-finite. 

To prove the above equation, note that $\supp(e(w))\seq \supp(w)$
holds by equivariance of $e$. For the reverse inclusion, let
$w\in \At^*\setminus L$ and $b\in \supp(w)$, i.e.~$w$ contains the
letter $b$ but not the letter $a$. Suppose for the sake of
contradiction that $b\not\in \supp(e(w))$. Then
$a,b\not\in \supp(e(w))$ and thus
\[ e(w) = (a\, b)\o e(w) = e((a\, b)\o w), \] where
$(a\, b)\in \Perm(\At)$ denotes the permutation that swaps $a$ and $b$
and leaves all other atoms fixed. But we have $w\not\in L$ and
$(a\, b)\o w\in L$, which implies that $e(w) \neq e((a\, b)\o w)$.
\section*{Proof of \autoref{lem:localvar}}
For a given set $W\seq \Pow\Sigma^*$ of recognizable languages we
denote by $\langle W\rangle\seq \Pow\Sigma^*$ the subobject of
$\Pow\Sigma^*$ in $\nCABA$ generated by $W$, obtained by closing $W$
under the group action, boolean operations, and unions of finitely
supported subsets. Note that if $W$ is closed under derivatives,
then so is $\langle W \rangle$, i.e.~$\langle W \rangle$ is then a local
variety. This follows from the following identities:
\begin{align*}
 w^{-1}(K\cup L) &= w^{-1}K\cup w^{-1}L, \\
 w^{-1}(K\cap L) &= w^{-1}K\cap w^{-1}L, \\
 w^{-1}(\Sigma^*\setminus L)   &= \Sigma^*\setminus w^{-1}L, \\
 w^{-1}(\bigcup_{j\in J} L_j) &= \bigcup_{j\in J} L_j, \\
 w^{-1}(\pi \o L) &= \pi\o ((\pi^{-1}\o w)L),
\end{align*}
for all $w\in \Sigma^*$, languages $K,L,L_j\seq \Sigma^*$ and $\pi\in \Perm(\At)$.

\begin{enumerate}
\item Given an ideal of atom-finite local varieties
  $\lvar_{\Sigma,i}\seq \Pow\Sigma^*$ ($i\in I$), the directed union
  $\lvar_\Sigma = \bigcup_i \lvar_{\Sigma,i}$ is a local
  variety. Indeed, to see that $\lvar_\Sigma$ is closed under unions
  of $S$-orbits for every finite $S\seq \At$, suppose that
  $W\seq \lvar_\Sigma$ is an $S$-orbit. Then $W \seq \lvar_{\Sigma,i}$
  for some $i$ because every $\lvar_{\Sigma,i}$ is an equivariant
  subset of $\Pow\Sigma^*$. Since $\lvar_{\Sigma,i}$ is closed under
  unions $S$-orbits, we get
  $\bigcup W\in \lvar_{\Sigma,i}\seq \lvar_\Sigma$. To show that
  $\lvar_\Sigma$ is closed under finite union, let
  $K,L\in \lvar_\Sigma$. Then, since the $\lvar_{\Sigma,i}$ form a
  directed set, there exists $i\in I$ with $K,L\in \lvar_{\Sigma,i}$
  Thus $K\cup L \in \lvar_{\Sigma,i}\seq \lvar_\Sigma$ because
  $\lvar_{\Sigma,i}$ is closed under finite union. That $\lvar_\Sigma$
  is closed under the remaining boolean operations, derivatives, and
  the group action is shown in the same way.
\item Given a local variety $\lvar_\Sigma\seq\Pow\Sigma^*$, form the
  family $\lvar_{\Sigma,i}\seq \Pow\Sigma^*$ ($i\in I$) of all
  atom-finite~local subvarieties of $\lvar_\Sigma$. We claim that this
  family forms an ideal. Clearly, it is downwards closed. To show that
  it is upwards directed, it suffices to verify that for every
  $i,j\in I$ the local variety
  $\langle \lvar_{\Sigma,i}\cup \lvar_{\Sigma,j}\rangle$ is
  atom-finite. Let $e_i\colon \Sigma^*\epito M_i$ and
  $e_j\colon \Sigma^*\epito M_j$ be the $\Sigma$-generated
  orbit-finite quotient monoids corresponding to $\lvar_{\Sigma,i}$
  and $\lvar_{\Sigma,j}$ by the Finite Local Variety
  \autoref{thm:finitelocalvar}. Form their subdirect product, i.e.~the
  coimage $e\colon \Sigma^*\epito M$ of the map
  $\langle e_i,e_j\rangle\colon \Sigma^*\to M_i\times M_j$. Then $e$
  is the infimum of $e_i$ and $e_j$ in the lattice of quotients (in
  $\Nom$) of $\Sigma^*$. Thus $e$ dualizes to
  $\langle \lvar_{\Sigma,i}\cup \lvar_{\Sigma,j}\rangle$, the supremum
  of $\lvar_{\Sigma,i}$ and $\lvar_{\Sigma,j}$ in the lattice of
  subobjects of $\Pow\Sigma^*$ in $\nCABA$. Moreover, since $M$ is
  orbit-finite, $\langle \lvar_{\Sigma,i}\cup \lvar_{\Sigma,j}\rangle$
  is atom-finite.
  
\item It remains to show that the constructions of (1) and (2) are
  mutually inverse. First, given an ideal of atom-finite local
  varieties $\lvar_{\Sigma,i}\seq \Pow\Sigma^*$ ($i\in I$) and an
  atom-finite local subvariety $\mathcal{W}_\Sigma\seq \bigcup_i \lvar_{\Sigma,i}$
  we need to prove that $\mathcal{W}_\Sigma=\lvar_{\Sigma,i}$ for some $i$. Let
  $L_1,\ldots, L_n$ be the finitely many atoms of $\mathcal{W}_\Sigma$ up to
  application of the group action. By the ideal property, we may
  assume that $L_1,\ldots,L_n\in \lvar_{\Sigma,j}$ for some $j\in I$. Since every element $L$ of the ncaba $\mathcal{W}_\Sigma$ can be expressed as the union of all atoms below it, and the set of these atoms is a finite union of $S$-orbits for some finite $S\seq\At$ (cf. \autoref{rem:nom_vs_ncaba}), one has  
  $\mathcal{W}_\Sigma = \langle L_1,\ldots,L_n\rangle \seq \lvar_{\Sigma,j}$ and
  thus $\mathcal{W}_\Sigma = \lvar_{\Sigma,i}$ for some $i$ by down-closure.
\item Conversely, let $\lvar_\Sigma\seq\Pow\Sigma^*$ be a local
  variety and let $\lvar_{\Sigma,i}$ $(i\in I)$ be the ideal of its
  atom-finite local subvarieties. Clearly, we have
  $\bigcup_i \lvar_{\Sigma, i} \subseteq \lvar_\Sigma$, so we only
  need to show $\lvar_\Sigma \seq \bigcup_i \lvar_{\Sigma,i}$.  For
  every $L\in \lvar_\Sigma$, since $L$ is recognizable, there exists
  an orbit-finite quotient monoid $e\colon \Sigma^*\epito M$
  recognizing $L$. By the Finite Local Variety
  \autoref{thm:finitelocalvar}, the quotient $e$ dualizes to an atom-finite local
  variety $\mathcal{W}_\Sigma\seq \Pow\Sigma^*$ containing $L$. Then
  $\lvar_{\Sigma,i}:= \mathcal{W}_\Sigma\cap \lvar_\Sigma$ is an atom-finite
  local subvariety of $\lvar_\Sigma$ containing $L$,
  i.e.~$L\in \lvar_{\Sigma,i}\seq \bigcup_i \lvar_{\Sigma,i}$.\qed
\end{enumerate}
\section*{Proof of \autoref{T:var}}

The proof boils down to the observation that varieties can be
interpreted as the dual concept of an equational theory. By the Local
Variety Theorem, a collection
$\lvar = (\,\lvar_\Sigma\seq \Pow\Sigma^*\,)_{\Sigma\in\Nom_{\ofs}}$
of local varieties of data languages corresponds bijectively to a
collection
$\T = (\,\T_X\seq X^*\mathord{\epidownarrow} \NomMon_{\of}\,)_{X\in
  \Nom_{\ofs}}$ of local pseudovarieties of nominal monoids. We claim
that
\begin{enumerate}
\item\label{T:var:1} $\T$ is substitution-invariant if and only if
  $\lvar$ is closed under preimages, and
\item\label{T:var:2} $\T$ is complete if and only if $\lvar$ is complete.
\end{enumerate}
In particular, $\T$ is an equational theory if and only if $\lvar$ is variety. Together with \autoref{thm:theories_vs_pseudovars}, this proves the theorem.

\medskip \noindent \emph{Proof of~\ref{T:var:1}.} By the Finite Local
Variety \autoref{thm:finitelocalvar} and \autoref{lem:localvar}, the substitution
invariance of $\T$ dualizes to the statement that for every
atom-finite local subvariety $\lvar'_\Sigma$ of $\lvar_\Sigma$ and
every $h\colon \Delta^*\to \Sigma^*$, there exists an atom-finite
local subvariety $\lvar'_\Delta$ of $\lvar_\Delta$ that contains the
preimage $h^{-1}[L]$ for each $L\in \lvar'_\Sigma$.
\[
  \xymatrix@C+1pc{
    \Delta^* \ar@{->>}[d]_{e_\Delta} \ar[r]^{h} & \Sigma^* \ar@{->>}[d]^{e_\Sigma} \\
    M_\Delta \ar@{-->}[r]_\exists & M_\Sigma
  }
  \qquad\qquad
  \xymatrix@C+1pc{
    \Pow\Delta^* & \Pow\Sigma^* \ar[l]_{h^{-1}(\dash)} \\
    \lvar'_\Delta \ar@{ >->}[u]^\seq & \lvar'_\Sigma \ar@{-->}[l]^{\exists} \ar@{ >->}[u]_\seq
  }
\]
By \autoref{lem:localvar}, every language in $\lvar_\Sigma$ is
contained in some atom-finite subvariety $\lvar'_\Sigma$. Hence, the
commutativity of the right-hand diagram above implies that $\lvar$ is closed
under preimages. Conversely, suppose that $\lvar$ is closed under
preimages, and let $\lvar'_\Sigma$ be an atom-finite local subvariety
of $\lvar_\Sigma$. Thus, there are atomic languages
$L_1,\ldots,L_n\in \lvar'_\Sigma$ such that every other atomic
language in $\lvar'_\Sigma$ is of the form $\pi\o L_i$ for some
$\pi\in\Perm(\At)$ and $i\in\{1,\ldots,n\}$. By closure of $\lvar$
under preimages, one has $h^{-1}[L_i]\in \lvar_\Delta$ for all
$i$. Choose an atom-finite local subvariety $\lvar'_{\Delta,i}$ of
$\lvar'_\Delta$ containing $L_i$. Since the atom-finite local
subvarieties of $\lvar'_\Delta$ form an ideal (i.e.~are upwards
directed), we can choose an atom-finite local subvariety
$\lvar'_\Delta$ containing all $\lvar'_{\Delta,i}$'s. Thus
$h^{-1}[L_i]\in \lvar'_\Delta$ for all $i$. By
\autoref{rem:nom_vs_ncaba}, every language $L\in \lvar'_\Sigma$ can be
expressed as a finite union $L=K_0\cup\ldots\cup K_m$, where every $K_i$ is
the union of an $S$-orbit in $\lvar'_\Sigma$. Since the equivariant
map $h^{-1}$ maps $S$-orbits to $S$-orbits, and $\lvar'_\Delta$ is
closed under finite unions and unions of $S$-orbits, it follows that
$h^{-1}[L]=h^{-1}[K_1] \cup\ldots\cup h^{-1}[K_m]$ lies in
$\lvar'_\Delta$.

\medskip\noindent \emph{Proof of~\ref{T:var:2}.} Suppose that $\T$ is
complete, let $\lvar'_\Sigma$ be an atom-finite local subvariety of
$\lvar_\Sigma$, and let $e_\Sigma\colon \Sigma^*\epito M_\Sigma$ be
its dual quotient in $\T_\Sigma$. Then there exists
$\Delta\in \Nom_\ofs$ and a support-reflecting quotient
$e_\Delta\colon \Delta^*\epito M_\Delta$ in $\T_\Delta$ with
$M_\Delta\cong M_\Sigma$. By projectivity (see
\autoref{lem:supprefl}), we can choose an equivariant monoid morphism
$h\colon \Sigma^*\to \Delta^*$ with $e_\Delta\o h = e_\Sigma$. Letting
$\lvar'_\Delta$ denote the atom-finite local subvariety of
$\lvar_\Delta$ dual to $e_\Delta$, we thus obtain the two dual
commutative diagrams below:
\[
  \xymatrix@C+1pc{
    \Delta^* \ar@{->>}[d]_{e_\Delta}  & \Sigma^* \ar[l]_h \ar@{->>}[d]^{e_\Sigma} \\
    M_\Delta  & M_\Sigma \ar@{-->}[l]^\cong
}
\qquad\qquad
\xymatrix@C+1pc{
  \Pow\Delta^* \ar[r]^{h^{-1}(\dash)} & \Pow\Sigma^*  \\
  \lvar'_\Delta \ar@{ >->}[u]^\seq \ar@{-->}[r]_{\cong} & \lvar'_\Sigma  \ar@{ >->}[u]_\seq
}
\]
The right-hand diagram shows that the map $L\mapsto h^{-1}[L]$ from
$\lvar'_\Delta$ to $\lvar'_\Sigma$ is bijective. Moreover, by
\autoref{thm:finitelocalvar}, $M_\Delta$ is (up to isomorphism) carried by the set of atomic languages in $\lvar'_\Delta$, and $e_\Delta$ is the
morphism mapping every word $w\in \Delta^*$ to the unique atomic
language in $\lvar'_\Delta$ containing $w$.  That $e_\Delta$ is
support-reflecting thus means precisely that every atomic language
$L\in \lvar'_\Delta$ contains a word $w\in L$ with
$\supp_{\Pow\Delta^*}(L) = \supp_{\Delta^*}(w)$. This shows that
$\lvar$ is complete.

Conversely, suppose that $\lvar$ is complete, and let
$e_\Sigma\colon \Sigma^*\epito M_\Sigma$ be a quotient in
$\T_\Sigma$. Then, for its dual atom-finite local subvariety
$\lvar'_\Sigma$ of $\lvar_\Sigma$, we have that there exists a
morphism $h\colon \Sigma^*\to \Delta^*$ and an atom-finite local
subvariety $\lvar'_\Delta$ of $\lvar_\Delta$ satisfying the
conditions~\ref{def:varlang:2:1} and~\ref{def:varlang:2:2} from
\autoref{def:varlang}\ref{def:varlang:2}. Condition~\ref{def:varlang:2:1}
means precisely that we have the right-hand commutative square
above. Thus, letting $e_\Delta\colon \Delta^*\epito M_\Delta$
denote the quotient in $\T_\Delta$ that dually corresponds to
$\lvar'_\Delta$, we obtain the left-hand diagram above. Moreover, condition~\ref{def:varlang:2:2} states that
$e_\Delta$ is support-reflecting. Thus $\T$ is complete.
\qed

\end{document}